\providecommand{\customgenericname}{}
\newcommand{\newcustomtheorem}[2]{%
  \newenvironment{#1}[1]
  {%
   \renewcommand\customgenericname{#2}%
   \renewcommand\theinnercustomgeneric{##1}%
   \innercustomgeneric
  }
  {\endinnercustomgeneric}
}
\newtheorem{theorem}{Theorem}
\newtheorem{lemma}[theorem]{Lemma}
\newcommand{\comment}[1]{}
\DeclareMathOperator{\rank}{rank}
\begin{document}


\title{Three-dimensional surface codes: Transversal gates and fault-tolerant architectures}


\author{Michael Vasmer}
\email[]{michael.vasmer.15@ucl.ac.uk}
\author{Dan E. Browne}
\affiliation{Department of Physics and Astronomy, University College London, Gower Street, London WC1E 6BT, United Kingdom}


\date{\today}

\begin{abstract}
One of the leading quantum computing architectures is based on the two-dimensional (2D) surface code. This code has many advantageous properties such as a high error threshold and a planar layout of physical qubits where each physical qubit need only interact with its nearest neighbours. However, the transversal logical gates available in 2D surface codes are limited. This means that an additional (resource intensive) procedure known as magic state distillation is required to do universal quantum computing with 2D surface codes. Here, we examine three-dimensional (3D) surface codes in the context of quantum computation. We introduce a picture for visualizing 3D surface codes which is useful for analysing stacks of three 3D surface codes. We use this picture to prove that the $CZ$ and $CCZ$ gates are transversal in 3D surface codes. We also generalize the techniques of 2D surface code lattice surgery to 3D surface codes. We combine these results and propose two quantum computing architectures based on 3D surface codes. Magic state distillation is not required in either of our architectures. Finally, we show that a stack of three 3D surface codes can be transformed into a single 3D color code (another type of quantum error-correcting code) using code concatenation.  
\end{abstract}

\pacs{}

\maketitle


\section{\label{sec:intro} Introduction}

The family of quantum error-correcting codes known as surface codes (also called toric codes or homological codes) have generated a great deal of theoretical and experimental interest since their introduction by Kitaev~\cite{Kitaev2003Anyons}. We can define surface codes in any spatial dimension $D\geq 2$. The two-dimensional (2D) surface code~\cite{BravyiKitaev1998Boundary,Dennis2002Topological} is the basis of one of the leading proposals for a fault-tolerant quantum computing architecture~\cite{Fowler2012Surface}. The biggest advantage of the 2D surface code is its high error threshold which approaches $1\%$~\cite{Fowler2009High,Raussendorf2007Fault,Stephens2014Fault}, a value which has been achieved in various qubit technologies~\cite{Barends2014Logic,Harty2014HighFidelity}. The other main advantage of the 2D surface code is that it has a simple structure consisting of a planar layout of qubits where each qubit only needs to interact with four neighbouring qubits. Experimental groups in universities and industry are targeting the surface code as their eventual fault-tolerant architecture~\cite{Barends2014Logic,Vijay2015Majorana,Hill2015Surface,OGorman2016Silicon,Lekitsch2017Blueprint}. However, these groups are still a long way off the millions of qubits required to run quantum algorithms such as Shor's algorithm~\cite{Shor1999Factoring} on a surface code quantum computer~\cite{Fowler2012Surface,OGorman2017MagicFactories}. One of the contributing factors to the large qubit overhead of 2D surface code architectures is that a procedure known as magic state distillation is needed if we want to implement the non-Clifford $T$ gate~\cite{Bravyi2005Universal} in 2D surface codes. Non-Clifford gates are required for universal quantum computation but they are rarely easy to implement in quantum error-correcting codes. Magic state distillation is estimated to have a resource cost $\sim 150--300$ times greater than the resource cost of realizing the control-NOT ($CNOT$) gate in 2D surface code architectures~\cite{OGorman2017MagicFactories}. The overhead associated with magic state distillation has motivated research into alternative methods for realizing non-Clifford gates in topological codes. For example, the 3D gauge color code has a transversal non-Clifford gate and this code forms the basis of a universal quantum computing architecture with attractive properties such as single-shot error correction~\cite{Bombin2015GaugeCC,Bombin2015Single,Brown2016Gauge}. The resource overheads of 3D gauge color codes and 2D surface codes with magic state distillation are estimated to scale in a similar way~\cite{OGorman2017MagicFactories}, so for different ranges of parameters either option could be advantageous. However, it has been argued that 2D surface code architectures will be superior for current experimental parameters due to the superlative error threshold of 2D surface codes~\cite{OGorman2017MagicFactories}.

In this article, we study three-dimensional (3D) surface codes. These codes were first introduced in~\cite{Dennis2002Topological} and their topological entropy was studied in~\cite{CastelnovoChamon2008Topological3DToric}. Most previous work on 3D surface codes in the context of quantum computing has concentrated on the relationship between 3D surface codes and 3D color codes. Color codes are another family of topological error-correcting codes which share some features with surface codes. It turns out that we can transform any 3D color code into three 3D surface codes using local Clifford unitaries~\cite{Kubica2015Unfolding}. This relationship has implications for quantum computing with 3D surface codes and 3D color codes. For example, using the mapping between the two code families, we can use 3D surface code decoders to decode 3D color codes~\cite{Aloshious2016Projecting}. This is useful because efficient 3D color code decoders are difficult to construct. Color codes tend to have a larger range of transversal logical gates when compared with surface codes~\cite{Bombin2007Topological,KubicaBeverland2015Colour}. This implies that we can use the relationship between surface codes and color codes to realize logical gates in 3D surface codes which are not naively available. Indeed, we can use the mapping between the two code families to implement a locality-preserving control-control-$Z$ ($CCZ$) gate in the 3D surface code~\cite{Kubica2015Unfolding}. Locality-preserving logical operators (LPLOs) are naturally fault-tolerant because the growth of errors under a LPLO is bounded by a constant~\cite{Bravyi2013Class,Webster2018Locality}. Recently, the LPLOs of 3D surface codes with different boundary conditions were classified using a correspondence between logical operators and domain walls~\cite{Webster2018Locality}. 

Here, we introduce a way of visualizing 3D surface codes, which we call the rectified picture. We use the rectified picture to analyse stacks of three 3D surface codes. We show that 3D surface codes possess more transversal gates than was previously thought, namely that both the control-$Z$ ($CZ$) and $CCZ$ gates are transversal for stacks of three 3D surface codes. These results build on the results in~\cite{Kubica2015Unfolding} and \cite{Webster2018Locality}, where it was shown that $CZ$ and $CCZ$ are LPLOs for stacks of three 3D surface codes. We also show that the mapping between 3D surface codes and 3D color codes described in~\cite{Kubica2015Unfolding} can be achieved using code concatenation. This result generalizes the code concatenation transformations for 2D surface codes and 2D color codes presented in~\cite{Criger2016Noise}. The second focus of this article is on quantum computing architectures based on 3D surface codes. We propose a hybrid 2D-3D surface code architecture and a purely 3D surface code architecture. Both of these architectures use the techniques of lattice surgery~\cite{Horsman2012Surface}, which we generalize to 3D surface codes. Our architectures achieve universal quantum computation without needing magic state distillation. It is possible that these architectures may require fewer resources than 2D surface code architectures in certain systems. For example, one could imagine taking advantage of the 
connections between qubits allowed in a modular architecture~\cite{Barrett2005Efficient,Fujii2012Distributed,Nickerson2013Topological,Monroe2014Large,Nickerson2014Freely} to build a code which is local in three spatial dimensions. However, more research is necessary before we can definitively assess the resource costs of our proposed architectures.

The remainder of this article is structured as follows. We provide background information on topological codes in Section~\ref{sec:bground} and we introduce a rectified picture of 3D surface codes in Section~\ref{sec:rectified}. In Section~\ref{sec:concat}, we detail a concatenation transformation that maps three 3D surface codes to a 3D color code. In Section~\ref{sec:gates}, we show that $CZ$ and $CCZ$ are transversal for stacks of three 3D surface codes and we explain how to implement a universal gate set. In Sections~\ref{sec:lattice surgery} and \ref{sec:arch}, we discuss 3D surface code lattice surgery and universal quantum computing architectures which utilize 3D surface codes. Finally, in Section~\ref{sec:discuss}, we discuss the implications of our work and outline future research directions. 

\section{\label{sec:bground} Background}

Surface codes are a family of topological stabilizer codes~\cite{Kitaev2003Anyons,BravyiKitaev1998Boundary,Dennis2002Topological}. A stabilizer code is a quantum error-correcting code defined by its stabilizer group $\mathcal{S}$, an abelian subgroup of the Pauli group where $-I\notin\mathcal{S}$~\cite{Gottesman1997Stabilizer}. Every encoded state $\ket{\overline{\psi}}$ in the code is stabilized by $\mathcal{S}$, that is $\forall S\in\mathcal{S}$, $S\ket{\overline{\psi}}=\ket{\overline{\psi}}$. 
 We summarize the properties of a quantum error correcting code with the shorthand notation $[[n,k,d]]$, where $n$ is the number of physical qubits, $k$ is the number of encoded logical qubits and $d$ is the code distance. The code distance of a quantum error-correcting code is equal to the weight of the minimum weight logical operator of the code. The weight of an operator is simply the number of qubits it act on non-trivially. 

A topological code is a code defined on some lattice with physical qubits placed on some of the elements of the lattice (the edges, for example). The stabilizers of a topological code act in geometrically local regions and the logical operators of the code form topologically non-trivial paths or surfaces on the lattice. We are particularly interested in two types of logical operators in topological codes: locality-preserving logical operators (LPLOs) and transversal logical operators. LPLOs are operators that map errors in some region of a code $R$ to errors in a region $R'$ which is at most a constant size $C$ bigger than $R$~\cite{Webster2018Locality}. A transversal logical operator is a logical operator realized by a quantum circuit of depth one which does not couple physical qubits in the same code (block). Transversal logical operators are LPLOs because transversal logical operators never spread errors from one physical qubit to another physical qubit in the same code.


In this article, we consider 2D and 3D surface codes. We begin by defining 2D surface codes in what we call the `Kitaev picture'. This is the formalism introduced by Kitaev in~\cite{Kitaev2003Anyons}. We place qubits on the edges of a 2D lattice. We associate $Z$ stabilizers with the faces of the lattice and $X$ stabilizers with the vertices of the lattice. That is, for each face $f$ we have a stabilizer $S_{f}=\bigotimes_{e\in f}Z(q_{e})$ where $Z(q_{e})$ denotes a $Z$ operator applied to the qubit on edge $e$. Analogously, for each vertex $v$ we have a stabilizer $S_{v}=\bigotimes_{e:v\in e}X(q_{e})$. We interpret unsatisfied stabilizers (stabilizers with $-1$ eigenvalues) as quasiparticles. Following the convention in the literature~\cite{Dennis2002Topological}, we refer to unsatisfied $X$ stabilizers as electric charges ($e$) and unsatisfied $Z$ stabilizers as magnetic fluxes ($m$). In the 2D surface code, both $e$ and $m$ quasiparticles are zero-dimensional (0D) objects. In the bulk of the lattice, we can only create or destroy pairs of quasiparticles of the same type. 2D surface codes can have two types of boundary: rough boundaries and smooth boundaries. Single $e$ quasiparticles can condense on the rough boundaries and single $m$ quasiparticles can condense on the smooth boundaries. In this context, quasiparticle condensation means that a single quasiparticle can be created or destroyed at the relevant boundary. In the 2D surface code, logical $\overline{Z}$ operators are strings of $Z$ operators from one rough boundary another and logical $\overline{X}$ operators are strings of $X$ operators from one smooth boundary to another. 
\begin{figure}[ht]
    \centering
    \includegraphics[width=0.7\columnwidth]{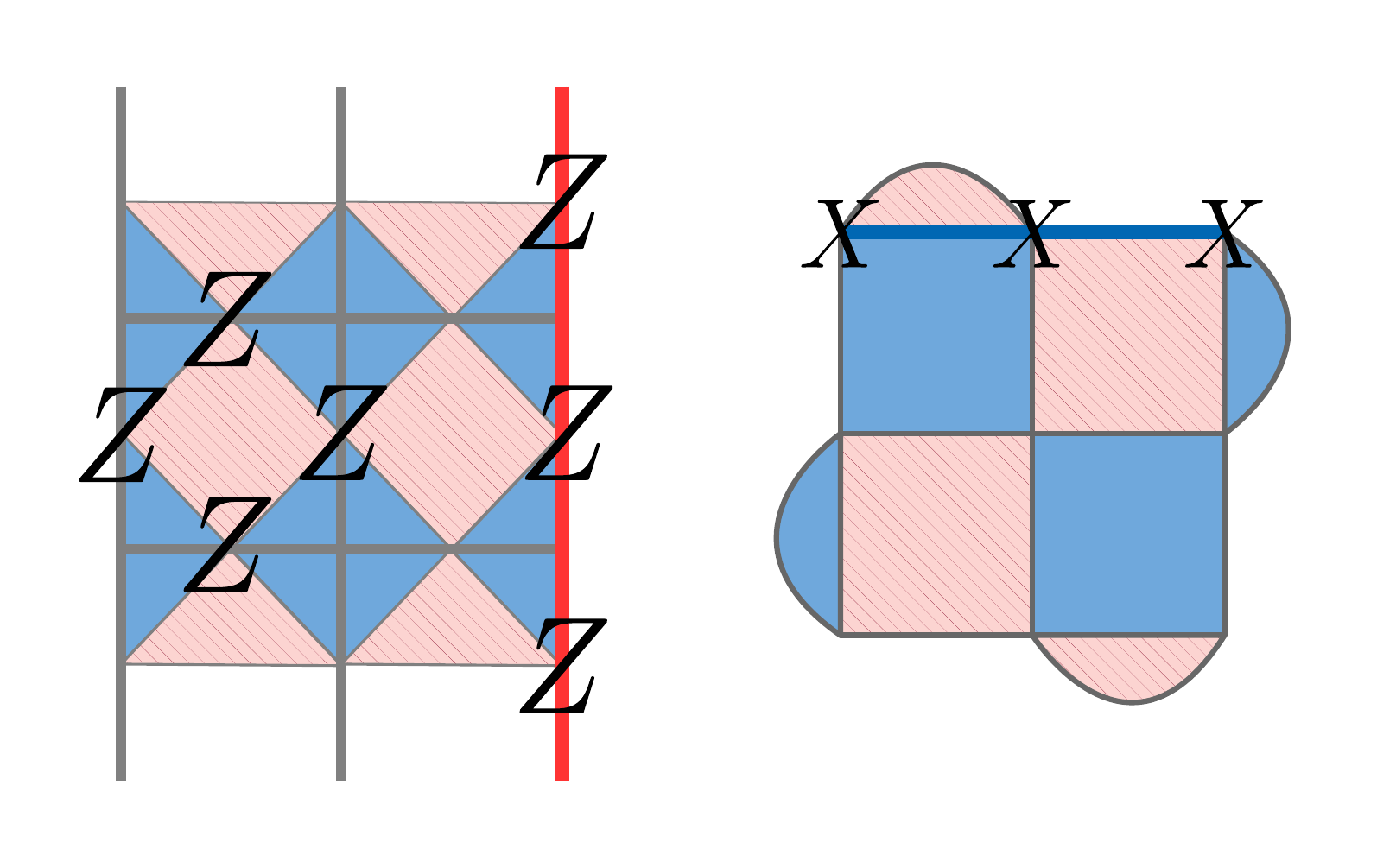}
    \caption{\label{fig:2dsc} 
    2D surface codes in the Kitaev picture and the rotated picture. On the left we show the [[13,1,3]] surface code in the Kitaev picture and the rotated picture (red (hatched) and blue (solid) lattice). We highlight a $Z$ stabilizer and $\overline{Z}$ operator. On the right we show the [[9,1,3]] surface code in the rotated picture. We highlight a $\overline{X}$ operator. In both codes, the top/bottom boundaries are rough boundaries and the left/right boundaries are smooth boundaries. 
    }
\end{figure}

There is an equivalent picture of 2D surface codes which is related to the Kitaev picture by a medial transformation~\cite{Wen2003QuantumOrders,Bombin2007Rotated,Anderson2013Homological}. This picture is often called the rotated picture~\cite{Tomita2014Low}. In the rotated picture, qubits are on vertices and stabilizers are associated with faces. Rotated picture lattices are 2-face-colourable \emph{i.e}.\ every face in the lattice can be assigned one of two colours such that no faces which share an edge have the same colour. In this picture, we associate $Z$ stabilizers with $c$ coloured faces ($c$-faces) and $X$ stabilizers with $c'$-faces. For example, the stabilizer associated with the $c$-face $f_{c}$ is $S_{f_{c}}=\bigotimes_{v\in f_{c}}Z(v)$. Figure~\ref{fig:2dsc} shows two distance three 2D surface codes in the Kitaev picture and the rotated picture.


We now turn to 3D surface codes. Initially, we define 3D surface codes in the Kitaev picture~\cite{Dennis2002Topological}, using the same conventions as the 2D surface code. We place qubits on the edges of a 3D lattice, we associate $X$ stabilizers with the vertices of the lattice and we associate $Z$ stabilizers with the faces of the lattice. We again interpret unsatisfied $X$ ($Z$) stabilizers as $e$ ($m$) quasiparticles. However, in contrast to 2D surface codes, in the 3D surface code $m$ quasiparticles are 1-D objects ($e$ quasiparticles are still 0-D). 3D surface codes also have rough and smooth boundaries which are again defined by quasiparticle condensation. As in the 2D case, $e$ ($m$) quasiparticles can condense on rough (smooth) boundaries. $\overline{Z}$ operators in 3D surface codes are strings of $Z$ operators which terminate at different rough boundaries. $\overline{X}$ operators are membranes of $X$ operators with a boundary which spans contiguous smooth boundaries. In this article we only consider 3D surface codes with six boundaries (two rough boundaries and four smooth boundaries) where the rough boundaries are on opposite sides of the lattice. 

So far we have only discussed the primal lattice picture of 3D surface codes. We can also analyse 3D surface codes in the dual lattice picture. Given a 3D lattice, we can construct its dual using a simple procedure. First, we create vertices at the centre of the cells of the original lattice. Next, we join these new vertices with edges if their corresponding cells in the original lattice shared a face. Finally, we delete the original (primal) lattice. This transformation maps vertices to cells, edges to faces, faces to edges and cells to vertices. Therefore, in the dual lattice picture of 3D surface codes, qubits are placed on the faces, $X$ stabilizers are associated with cells and $Z$ stabilizers are associated with edges. In the remainder of this article, we will use both the primal lattice picture and the dual lattice picture to analyse 3D surface codes. 

\section{\label{sec:rectified}Rectified Picture}

In this section, we describe a picture which we use to analyse stacks of 3D surface codes. We call this picture the `rectified picture'. This picture is a generalisation of the rotated picture of 2D surface codes and it is similar to the primal lattice picture of 3D color codes~\cite{Bombin2007Topological}. We start with a 3D surface code primal lattice in the Kitaev picture. To transform to the rectified picture, we rectify the primal lattice. A rectification (or full truncation) is a geometric transformation where the edges of a lattice are truncated to points~\cite{Coxeter1973Polytopes}. Specifically, to perform a rectification, we use the following procedure. First we create new vertices at the midpoints of the edges of the original lattice. Next, we join these new vertices with edges if their corresponding edges in the original lattice were part of the same face. Finally, we delete the original lattice to obtain the rectified lattice. Under a rectification, edges are mapped to vertices, cells and vertices are mapped to cells, and faces are mapped to faces. Therefore, in the rectified picture, qubits are on vertices, $X$ stabilizers are associated with cells and $Z$ stabilizers are associated with faces. We note that there is an analogous transformation which maps a 3D surface code Kitaev picture dual lattice to a rectified picture lattice. This transformation is called a face-rectification and is equivalent to taking the dual of every cell in the lattice. Given a polyhedral cell, we construct its dual by creating vertices at the centre of the original polyhedron's faces. We then connect these vertices with edges if their corresponding faces in the original polyhedron share an edge. Figure~\ref{fig:duals} shows a cube and a cuboctahedron along with their dual polyhedra. 

\begin{figure}
    \begin{minipage}{0.35\columnwidth}
        \topinset{\textbf{a)}}{\includegraphics[width=\linewidth]{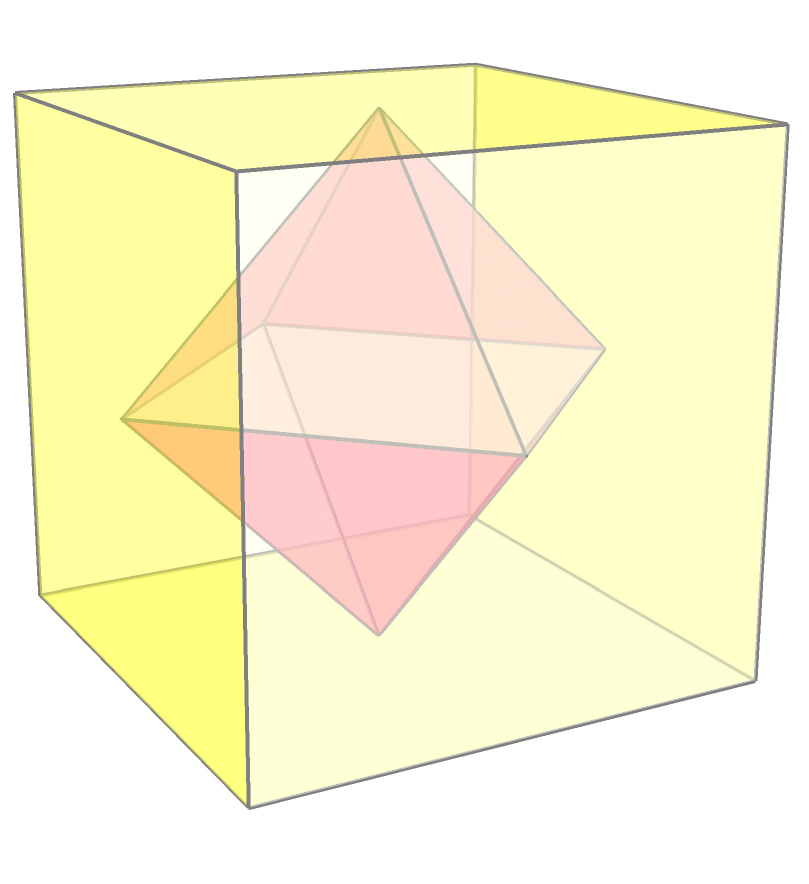}}{-0.25cm}{0cm}
    \end{minipage}
    \begin{minipage}{0.55\columnwidth}
        \topinset{\textbf{b)}}{\includegraphics[width=\linewidth]{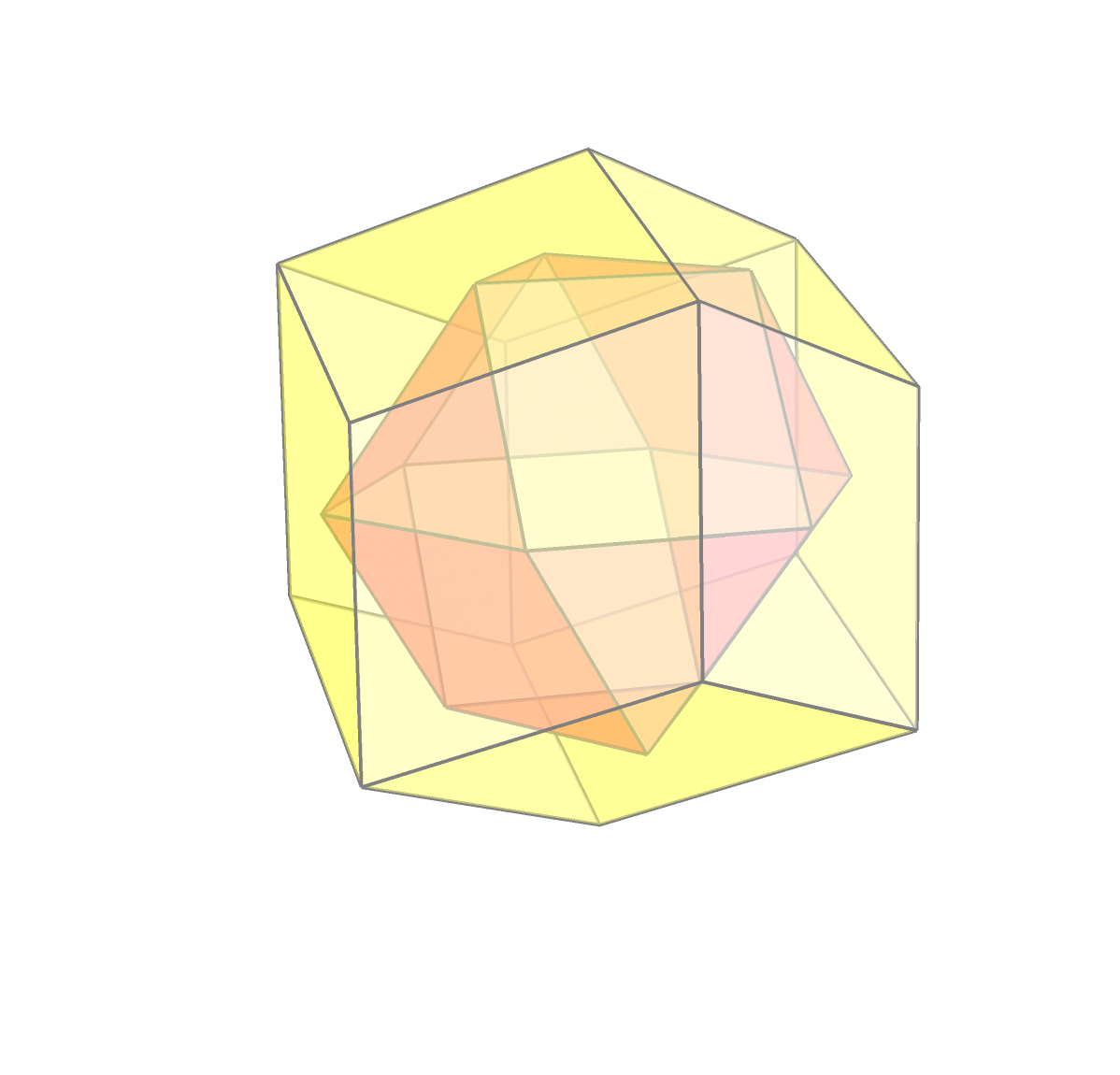}}{0.1cm}{0cm}
    \end{minipage}
    \caption{\label{fig:duals} Polyhedra and their duals. 
    \textbf{a)} A cube (yellow (light grey)) and its dual octahedron (red (medium grey)).
    \textbf{b)} A rhombic dodecahedron (yellow (light grey)) and its dual cuboctahedron (red (medium grey)).}
\end{figure}

The utility of the rectified picture comes when we consider stacks of three 3D surface codes. This is because different lattices in the Kitaev picture correspond to the same lattice in the rectified picture. Hence, instead of analysing three overlapping surface code lattices in the Kitaev picture, we can analyse a single lattice in the rectified picture. In this article, we concentrate on surface codes defined on cubic lattices and tetrahedral-octahedral lattices (primal lattices in the Kitaev picture). In the familiar cubic lattice, eight cubes meet at every vertex. In the tetrahedral-octahedral lattice, six octahedra and eight tetrahedra meet at every vertex. The cubic lattice is self-dual and the dual of a tetrahedral-octahedral lattice is a rhombic dodecahedral lattice (a lattice where every cell is a rhombic dodecahedron). Let us consider how the cubic lattice is transformed under rectification: vertices are mapped to octahedra and cubes are mapped to cuboctahedra. Cuboctahedra are polyhedra with 12 vertices where two triangle faces and two square faces meet at each vertex. Figure~\ref{fig:duals}b shows a cuboctahedron. The rectification of a cubic lattice is usually called the rectified cubic lattice. In a rectified cubic lattice two octahedra and four cuboctahedra meet at every vertex. Figure~\ref{fig:rectified cubic} shows a portion of a rectified cubic lattice. 

Next, we consider the rectification of a tetrahedral-octahedral lattice. Under rectification, octahedra transform into cuboctahedra, tetrahedra transform into octahedra and vertices become cuboctahedra. Therefore, rectification also transforms the tetrahedral-octahedral lattice into a rectified cubic lattice. In fact, we can arrange one cubic lattice and two tetrahedral-octahedral lattices in such a way that all three lattices are transformed into the exact same rectified cubic lattice under rectification. To see how this works, it is easiest to consider the dual lattices and the face-rectification transformation. As we mentioned earlier, the dual of a tetrahedral-octahedral lattice is a rhombic dodecahedral lattice. A rhombic dodecahedron is a polyhedron with twelve rhombic faces. Rhombic dodecahedra have two different types of vertex. Acute vertices are the points where the acute angle corners of four rhombi meet whereas obtuse vertices are the points where the obtuse angle corners of three rhombi meet. Figure~\ref{fig:duals}b shows a rhombic dodecahedron. In a rhombic dodecahedral lattice, four rhombic dodecahedra meet at every obtuse vertex and six rhombic dodecahedra meet at every acute vertex.

\begin{figure}
    \includegraphics[width=0.7\columnwidth]{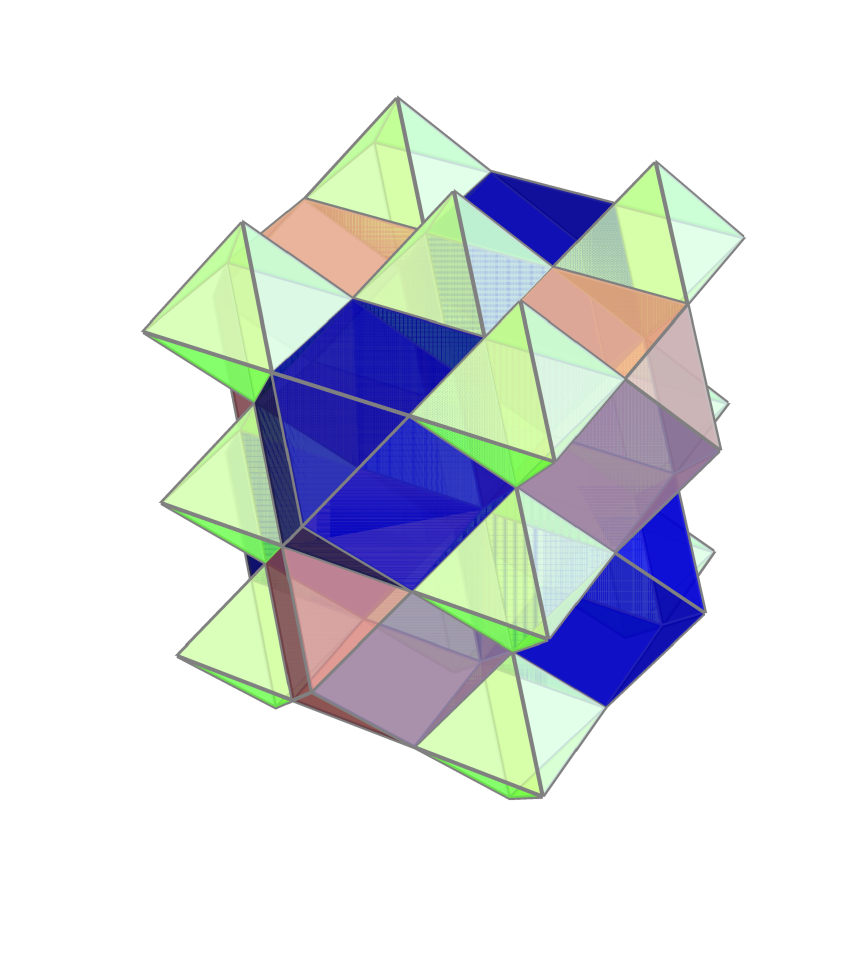}
    \caption{\label{fig:rectified cubic} Part of a rectified cubic lattice. Rectified cubic lattices consist of cuboctahedra (blue (dark grey) and red (medium grey)) and octahedra (green (light grey)). Four cuboctahedra and two octahedra meet at each vertex.}
\end{figure}

We now show how to arrange one cubic lattice and two rhombic dodecahedral lattices such that they are mapped to exactly the same lattice under face-rectification. This fact is the reason we can define three surface codes on the same rectified cubic lattice. We assume the three lattices are infinite for simplicity. First we note that the cubic lattice is 2-vertex-colourable \emph{i.e}.\ all the vertices in the cubic lattice can be assigned a colour such that no vertices which share an edge have the same colour. We give these two sets of vertices the labels $a$ and $b$. We arrange the cubic lattice and one of the rhombic dodecahedral lattices such that the acute vertices of the rhombic dodecahedra occupy the same positions as the $a$ vertices of the cubes. In this arrangement the obtuse vertices of the rhombic dodecahedra are at the centre of cubes and the $b$ vertices of the cubes are at the centre of rhombic dodecahedra. This layout is shown (for a single cube and rhombic dodecahedron) in Figure~\ref{fig:lattice arrangement}a. Next we add a second rhombic dodecahedral lattice and arrange it such that its acute vertices occupy the same positions as the $b$ vertices of the cubes. The arrangement of all three lattices is illustrated in Figure~\ref{fig:lattice arrangement}b.

\begin{figure}
    \centering
    \begin{minipage}{0.48\columnwidth}
        \centering
        \topinset{\textbf{a)}}{\includegraphics[width=\linewidth]{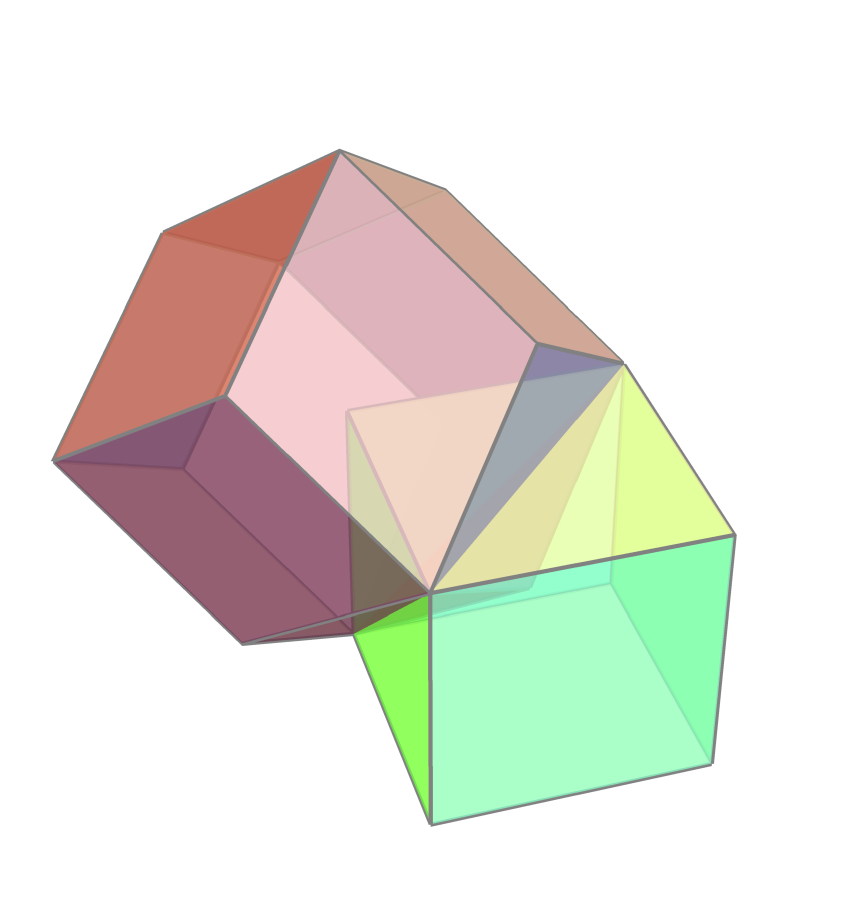}}{-0.3cm}{0cm}
    \end{minipage}
    \begin{minipage}{0.5\columnwidth}
        \centering
        \topinset{\textbf{b)}}{\includegraphics[width=\linewidth]{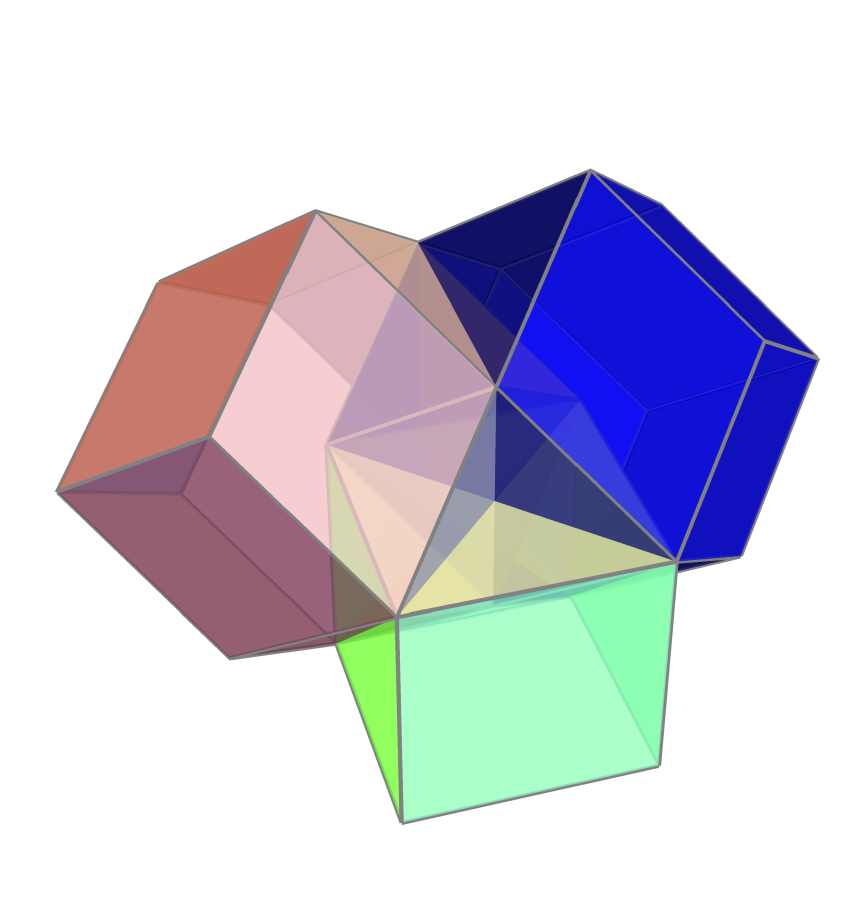}}{0.05cm}{0cm}
    \end{minipage}
    \caption{\label{fig:lattice arrangement} Arranging a cubic lattice and two rhombic dodecahedral lattices such that they are transformed to the same rectified cubic lattice under face-rectification. We show a single cell from each lattice.
    \textbf{a)} We arrange a rhombic dodecahedral lattice (red (medium grey)) and a cubic lattice (green (light grey)) such that the acute vertices of the rhombic dodecahedra occupy the same locations as the $a$-vertices of the cubes. In this arrangement the $b$-vertices of the cubes lie at the centre of the rhombic dodecahedra and the obtuse vertices of the rhombic dodecahedra lie at the centre of the cubes.
    \textbf{b)} We add a second rhombic dodecahedral lattice (blue (dark grey)) and arrange it such that the acute vertices of the rhombic dodecahedra occupy the same locations as the $b$-vertices of the cubes. In this arrangement the obtuse vertices of the rhombic dodecahedra lie at the centre of the cubes and the $a$-vertices of the cubes lie at the centre of the rhombic dodecahedra.
    }
\end{figure}

In the arrangement of lattices we have just described, all three lattices will be mapped to an identical rectified cubic lattice by the face-rectification transformation. To see why this is true we consider how the cells and vertices of the lattices transform. The cubes and the obtuse vertices of the rhombic dodecahedra both transform into octahedra. The obtuse vertices of the rhombic dodecahedra lie at the centre of the cubes in our arrangement so each lattice transforms in the same way at these positions. Similarly, the acute vertices of one rhombic dodecahedral lattice occupy the same position as the $a$-vertices of the cubic lattice. Both these types of vertices lie at the centre of the cells of the other rhombic dodecahedral lattice. Rhombic dodecahedra, acute vertices and vertices of cubes are all mapped to cuboctahedra under face-rectification. So the three lattices transform in the same way at these positions. An identical argument holds for the $b$-vertices of the cubes. 

\subsection{\label{subsec:code family} A family of stacked 3D surface codes}

In this section, we define a family of stacked 3D surface codes. We call these codes rectified cubic codes. Each member of the family consists of three 3D surface codes supported on the same rectified cubic lattice. We first discuss the structure of the rectified cubic lattices then we define the surface codes.

\subsubsection{\label{subsubsec:lattice structure} Lattice structure}

Rectified cubic lattices are three-cell-colourable, and we colour the cells of our lattices with the colours $\{r,g,b\}$. We assume that octahedra are coloured $g$ and the two sets of cuboctahedra are coloured $r$ and $b$. We assign each lattice face the colour of the two cells it is part of. For example, a face shared by a $r$-cell and a $g$-cell is a $rg$-face. A face on an a boundary which is only part of one cell is assigned the combination of colours it would have in a infinite lattice. The lattices in our family have two types of boundary. One type of boundary slices a layer of cuboctahedra in half and the other type of boundary slices between a layer of cuboctahedra. We call these boundaries half cuboctahedra boundaries and full cuboctahedra boundaries, respectively. Each lattice in the family has two half cuboctahedra boundaries and four full cuboctahedra boundaries. Opposite boundaries are the same type. The two types of boundary are shown in Figure~\ref{fig:d3 lattice}.

\begin{figure}
    \includegraphics[width=0.6\columnwidth]{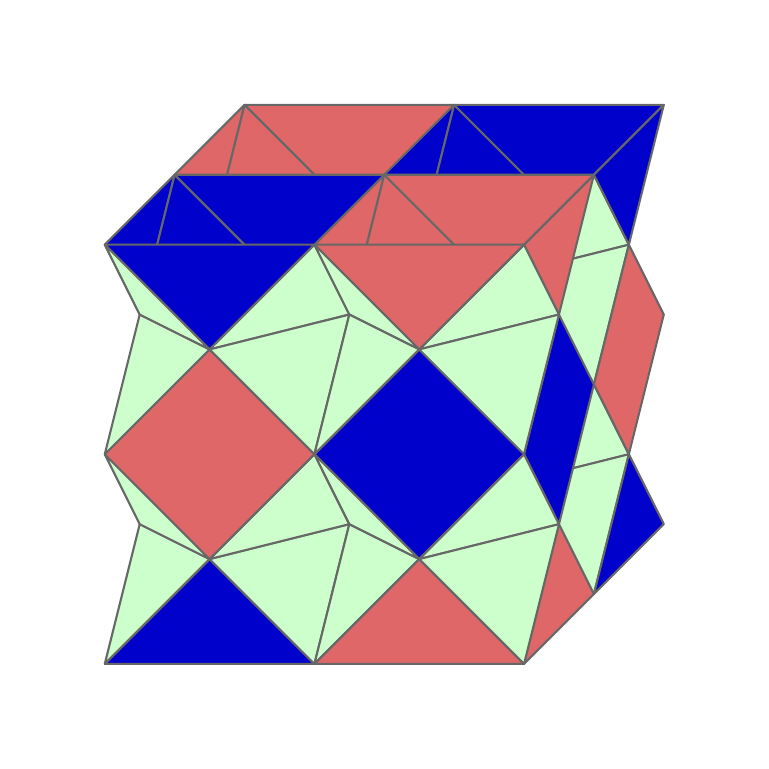}
    \caption{\label{fig:d3 lattice} The $d=3$ rectified cubic lattice. The top and bottom boundaries are half cuboctahedra boundaries whereas the other four boundaries are full cuboctahedra boundaries.}
\end{figure} 

We parameterize the lattices in our family by a parameter $d$ which will be equal to the code distance of the three codes supported on a particular lattice. We specify the structure of a distance $d$ lattice by dividing it into 2D layers which are parallel to the half cuboctahedra boundaries. There are two types of layer in this division, which we call `chequerboard layers' and `diamond layers', due to their appearance. Figure~\ref{fig:layers} shows the structure of the two types of layer in the $d=3$ lattice and the $d=4$ lattice. In a distance $d$ lattice, there are $d$ chequerboard layers and $d-1$ diamond layers and the two types of layer alternate. The half cuboctahedra boundaries are themselves chequerboard layers. Layers directly above and below each other are connected by edges as can be seen in Figure~\ref{fig:d3 lattice}.  

\begin{figure}
    \includegraphics[width=0.8\columnwidth]{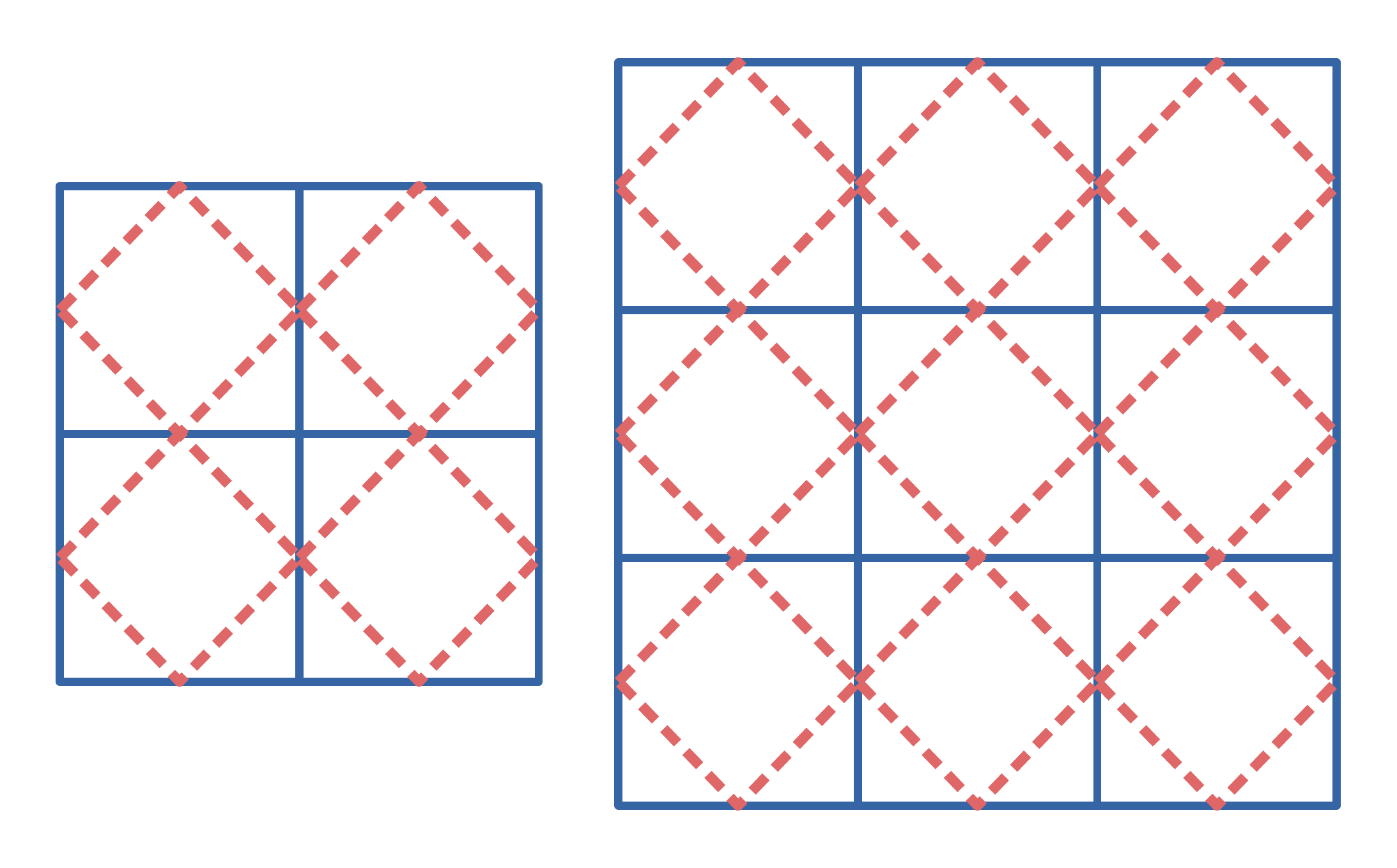}
    \caption{\label{fig:layers} The two types of layer in a $d=3$ rectified cubic lattice (left) and a $d=4$ rectified cubic lattice (right). Chequerboard layers (continuous blue lines) are layers which slice cuboctahedra in half and diamond layers (dashed red lines) are layers which slice octahedra in half.}
\end{figure}

\subsubsection{\label{subsubsec:code structure} Code structure}

In this section, we specify the structure of the three 3D surface codes defined on the same distance $d$ rectified cubic lattice. We place three qubits at each vertex of the lattice (one qubit per code). Each chequerboard layer in the lattice has $d^{2}$ vertices each and each diamond layer has $2d(d-1)$ vertices. Therefore, for a distance $d$ lattice, the number of physical qubits in each code is:
\begin{equation}
    \begin{split}
        n&=d^{3}+2d(d-1)^{2}, \\
        &=3d^{3}-4d^{2}+2d.
    \end{split}
    \label{eq:q count}
\end{equation}

We label each code, $\mathcal{SC}_{c}$, with the colour of its $X$ stabilizers. $\mathcal{SC}_{c}$ has $X$ stabilizers associated with $c$-cells and $Z$ stabilizers associated with $c'c''$-faces. The following Table and Figure detail the stabilizers of the three codes supported on a rectified cubic lattice:

\begin{center}
    \begin{tabular}{||c | c | c||} 
    \hline
    Code & X stabilizers & Z stabilizers \\ [0.5ex] 
    \hline
    $\mathcal{SC}_{r}$ & $r$-cuboctahedra & $bg$-faces \\ 
    \hline
    $\mathcal{SC}_{g}$ & $g$-octahedra & $rb$-faces  \\
    \hline
    $\mathcal{SC}_{b}$ & $b$-cuboctahedra & $rg$-faces  \\ [1ex]
    \hline
   \end{tabular}
\end{center}

\begin{figure}[h!]
    \centering
    \begin{minipage}{0.48\columnwidth}
        \centering
        \topinset{$\mathcal{SC}_{r}$ $X$ stabilizer}{\includegraphics[width=0.8\linewidth]{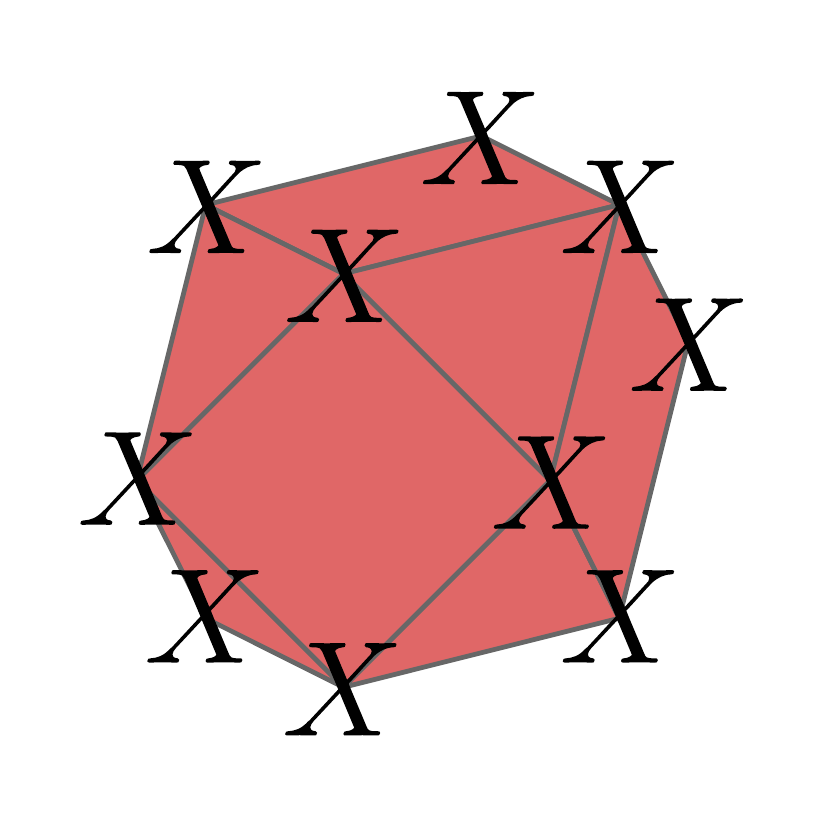}}{-0.3cm}{0cm}
    \end{minipage}
    \begin{minipage}{0.5\columnwidth}
        \centering
        \topinset{$\mathcal{SC}_{r}$ $Z$ stabilizer}{\includegraphics[width=0.8\linewidth]{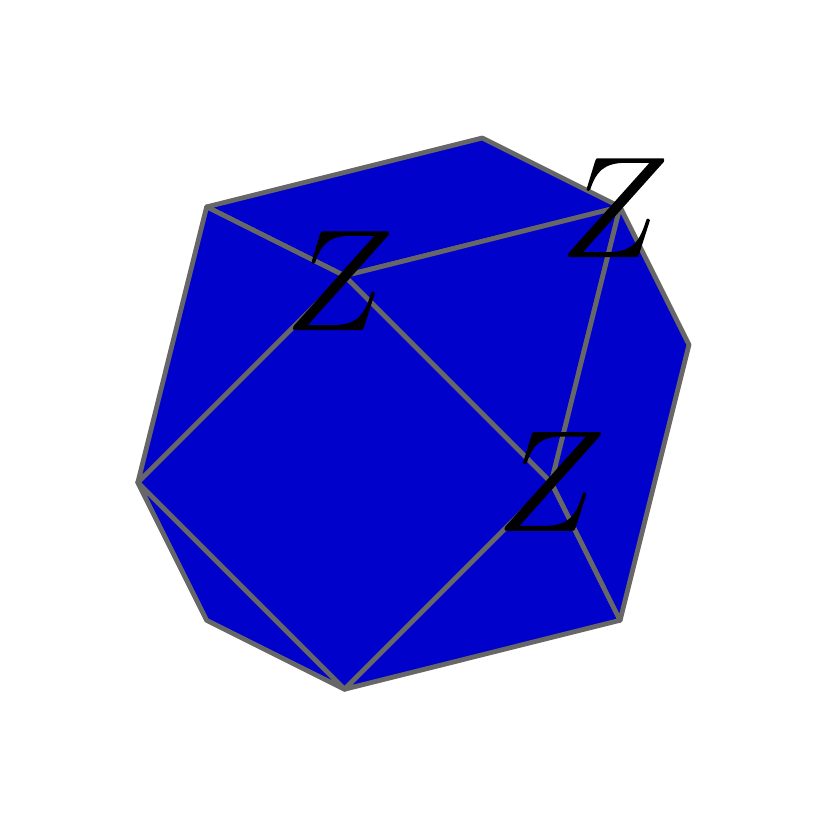}}{0.05cm}{0cm}
    \end{minipage}
    \begin{minipage}{0.48\columnwidth}
        \centering
        \topinset{$\mathcal{SC}_{g}$ $X$ stabilizer}{\includegraphics[width=0.75\linewidth]{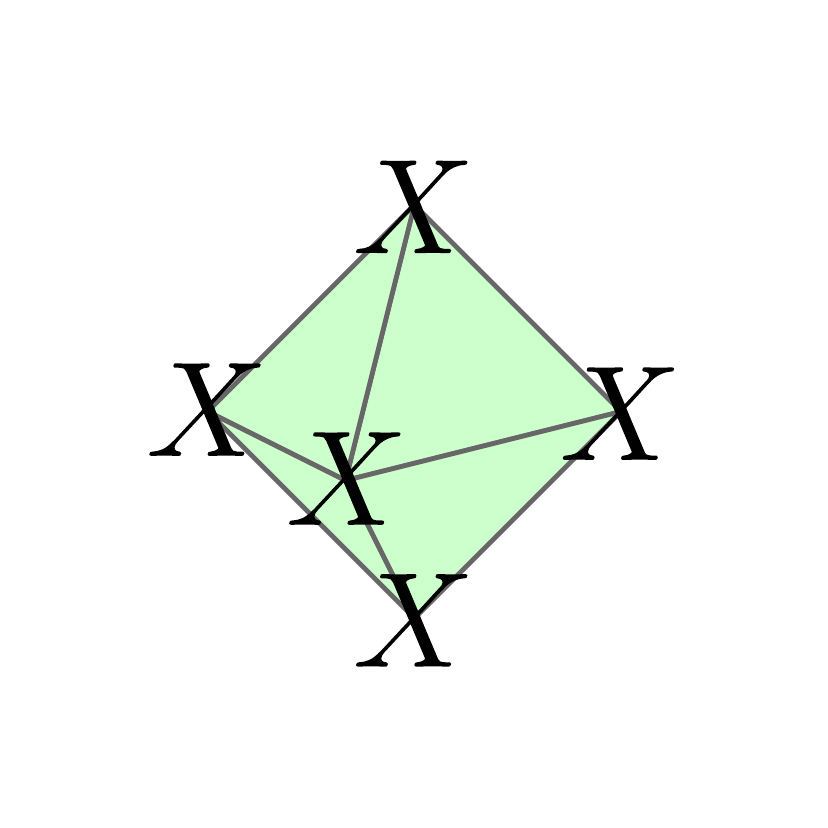}}{-0.3cm}{0cm}
    \end{minipage}
    \begin{minipage}{0.5\columnwidth}
        \centering
        \topinset{$\mathcal{SC}_{g}$ $Z$ stabilizer}{\includegraphics[width=0.8\linewidth]{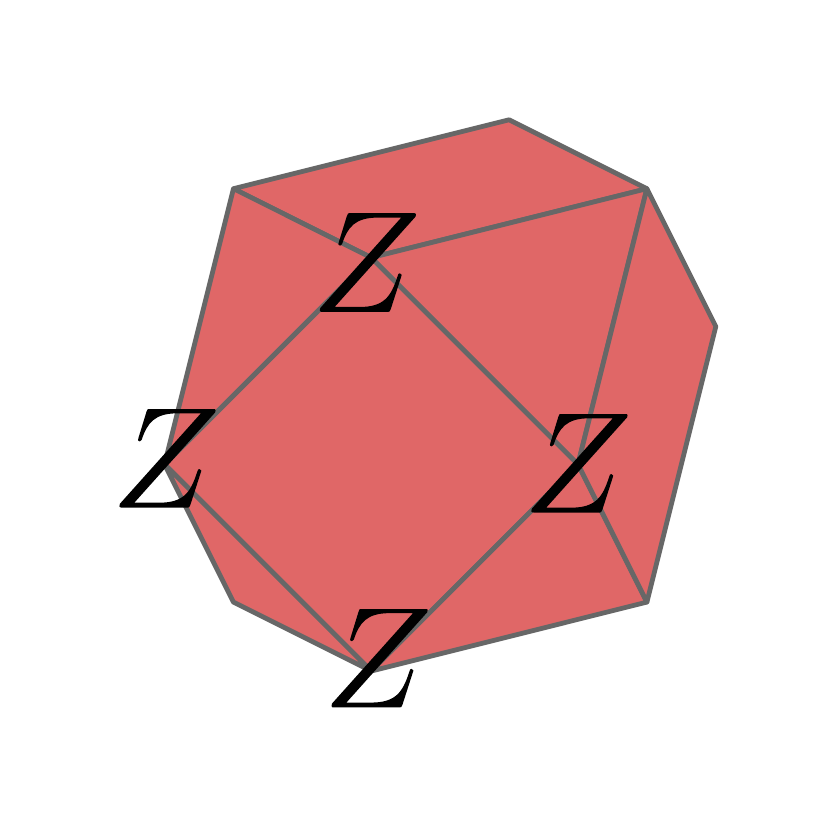}}{0.05cm}{0cm}
    \end{minipage}
    \caption{\label{fig:stabilizers} The stabilizers of $\mathcal{SC}_{r}$ and $\mathcal{SC}_{g}$. The $\mathcal{SC}_{b}$ stabilizers are identical to those of $\mathcal{SC}_{r}$ except with diagonally red (medium grey) and blue (dark grey) interchanged.
    }
\end{figure}

We also associate colours with the boundaries of our rectified cubic lattices. A $c$-boundary corresponds to a rough boundary in $\mathcal{SC}_{c}$ and smooth boundaries in $\mathcal{SC}_{c'}$ and $\mathcal{SC}_{c''}$. In Section~\ref{sec:bground}, we defined rough and smooth boundaries in terms of quasiparticle condensation. For regular lattices like the ones we consider, we can be more specific about the structure of the boundaries. In a 3D surface code defined on a cubic lattice in the Kitaev picture, each qubit in the bulk is a member of two $X$ stabilizers and four $Z$ stabilizers. Similarly, in a 3D surface code defined on a tetrahedral-octahedral lattice in the Kitaev picture, each qubit in the bulk is a member of two $X$ stabilizers and four $Z$ stabilizers. In each of these lattices, the qubits on the rough boundaries are members of a single $X$ stabilizer and the qubits on the smooth boundaries are members of between one and three $Z$ stabilizers (\emph{i.e}.\ fewer than four). We note that the parts of lattices at which two boundaries meet are part of both boundaries. 

For our family of stacked 3D surface codes to have a transversal $CCZ$ gate (see Section~\ref{sec:gates}), we need to have two boundaries of each colour and we need opposite boundaries to have the same colour. The half cuboctahedra boundaries of the distance $d$ rectified cubic lattices we detailed in the previous section are valid $g$-boundaries. However, the full cuboctahedra boundaries are neither $r$-boundaries or $b$-boundaries. The problem is that the four full cuboctahedra boundaries are identical. We need to break the symmetry between the four full cuboctahedra boundaries to turn them into valid $r$-boundaries and $b$-boundaries. We break the symmetry by adding additional low weight stabilizers to the full cuboctahedra boundaries. These stabilizers are analogous to the weight two stabilizers on the boundaries of the [[9,1,3]] 2D surface code shown in Figure~\ref{fig:2dsc}. In Figure~\ref{fig:extra stabs} we show the additional stabilizers we add to $\mathcal{SC}_{r}$ and $\mathcal{SC}_{b}$ to turn the full cuboctahedra boundaries into $r$-boundaries and $b$-boundaries. 

\begin{figure}
    \includegraphics[width=0.6\columnwidth]{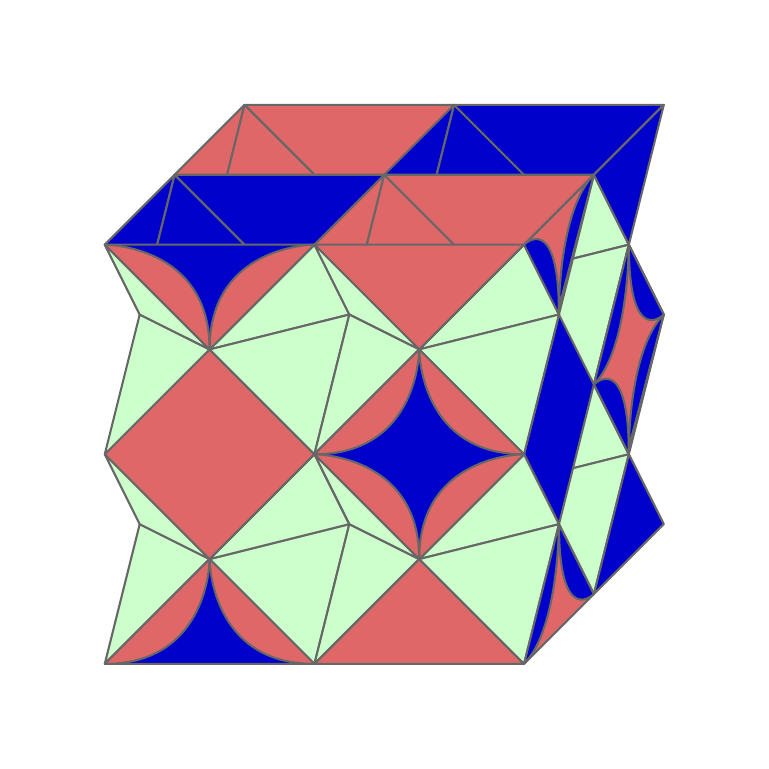}
    \caption{\label{fig:extra stabs} The additional stabilizers required such that the codes in our family of stacked 3D surface codes have the correct boundaries. First, consider the full cuboctahedra boundary facing us. We associate additional $\mathcal{SC}_{r}$ $X$ stabilizers with the faces of the $b$-cuboctahedra on this boundary (blue (dark grey) faces). In addition, we associate additional $\mathcal{SC}_{b}$ $Z$ stabilizers with some of the edges of these blue (dark grey) faces (red (medium grey) circular segments). These edges would have been part of $rg$-faces if not for the boundaries. In effect, we have added a multiple 2D flattenings of $r$-cuboctahedra to the lattice. The edges of these 2D flattenings are themselves 1D flattenings of $rg$-faces ($\mathcal{SC}_{b}$ $Z$ stabilizers). We add analogous stabilizers to the back boundary. With these additional stabilizers, the front and back boundaries are valid $b$-boundaries. Next, consider the left and right boundaries in the Figure. We associate additional $\mathcal{SC}_{b}$ $X$ stabilizers with the faces of the $r$-cuboctahedra (red (medium grey) faces) on these boundaries. We also associate additional $\mathcal{SC}_{r}$ $Z$ stabilizers with some of the edges of these faces (blue (dark grey) circular segments). With these additional stabilizers, the left and right boundaries are valid $r$-boundaries. 
    }
\end{figure}

With the additional stabilizers shown in Figure~\ref{fig:extra stabs}, we claim that the three codes have the correct structure on the boundaries of the lattice. That is, the $c$-boundaries are rough boundaries in $\mathcal{SC}_{c}$ and smooth boundaries in $\mathcal{SC}_{c'}$ and $\mathcal{SC}_{c''}$. First consider the $g$-boundaries (top and bottom boundaries in Figure~\ref{fig:extra stabs}). Each vertex on the $g$-boundaries is a member of a single $g$-octahedron ($\mathcal{SC}_{g}$ $X$ stabilizer). Each vertex is also a member of four $rb$-faces ($\mathcal{SC}_{g}$ $Z$ stabilizers), except where the $g$-boundary meets the $r$-boundaries and $b$-boundaries. The $g$-boundary is, therefore, a rough boundary in $\mathcal{SC}_{g}$. Each vertex on the $g$-boundaries is a member of two $r$-cuboctahedra (including 2D flattenings shown in Figure~\ref{fig:extra stabs}) and two $b$-cuboctahedra (including 2D flattenings), except where the $g$-boundaries meet an $r$-boundary or a $b$-boundary, respectively. The vertices on the $g$-boundaries are all members of fewer than four $rg$-faces (including 1D flattenings shown in Figure~\ref{fig:extra stabs}) and fewer than four $bg$-faces (including 1D flattenings). Therefore, the $g$-boundaries are smooth boundaries in $\mathcal{SC}_{b}$ and $\mathcal{SC}_{r}$.

The next pair of boundaries we consider are the $b$-boundaries. Due to the additional stabilizers shown in Figure~\ref{fig:extra stabs}, each vertex on the $b$-boundaries is a member of two $r$-cuboctahedra (including 2D flattenings) and two $g$-octahedra, except where the $b$-boundaries meet the $r$-boundaries and $g$-boundaries, respectively. However, each vertex on the $b$-boundaries is a member of a single $b$-cuboctahedron. Every vertex on the $b$-boundaries is a member of fewer than four $rb$-faces and fewer than four $bg$-faces (including 1D flattenings). But each vertex is a member of four $rg$-faces (including 1D flattenings) except for the vertices which are also on $r$-boundaries or $g$-boundaries. Therefore, the $b$-boundaries are rough boundaries in $\mathcal{SC}_{b}$ and smooth boundaries in $\mathcal{SC}_{r}$ and $\mathcal{SC}_{g}$, as required. The argument for $r$-boundaries is identical to the argument for $b$-boundaries, except with $r$ and $b$ exchanged. In Appendix~\ref{app:parallelepiped}, we describe an alternative family of stacked 3D surface codes which are supported on rectified cubic lattices which different boundaries to the ones we have just described. 

Next, we show that each of the three codes has one encoded logical qubit. The number of encoded qubits in a stabilizer code is equal to the number of physical qubits minus the number of stabilizer generators. So we need to count the number of stabilizer generators in each of the three codes. We begin with $\mathcal{SC}_{g}$. In this code, $X$ stabilizers are associated with $g$-cells (octahedra) and $Z$ stabilizers are associated with $rb$-faces. Consider the top $g$-boundary of a distance $d$ lattice oriented the same way as the $d=3$ lattice in Figure~\ref{fig:extra stabs}. This boundary has the structure of a chequerboard layer and each vertex on this boundary is a member of a single (complete or incomplete) octahedron. Chequerboard layers have $d^{2}$ vertices so we have $d^{2}$ octahedra which are situated directly below the top boundary. Every other chequerboard layer (except the bottom layer) also has $d^{2}$ octahedra situated below it. There are $d$ chequerboard layers so there are $d^{2}(d-1)$ octahedra in a distance $d$ lattice. The $X$ stabilizers we associate with these octahedra are all independent. Therefore, the number of $X$ stabilizer generators in $\mathcal{SC}_{g}$ is:
\begin{equation}
    \rank (S_{X}^{(g)})=d^{2}(d-1).
    \label{eq:scg xs}
\end{equation}

We now count the $Z$ stabilizer generators of $\mathcal{SC}_{g}$. As we stated previously, these stabilizers are associated with the $rb$-faces of the lattice. We split these faces into two groups: faces which are parallel to $g$-boundaries, and faces which are parallel to the $r$-boundaries or the $b$-boundaries. In a distance $d$ lattice, we have $(d-1)^{2}$ $rb$-faces parallel to the $g$-boundaries in each diamond layer. There are $d-1$ diamond layers, so there are $(d-1)^{3}$ $rb$-faces parallel to the $g$-boundaries. Each chequerboard layer cuts through $2d(d-1)$ $rb$-faces which are parallel to the $r$-boundaries or the $b$-boundaries. There are $d$ chequerboard layers, so there are $2d^{2}(d-1)$ of these $rb$-faces. Therefore, the total number of $rb$-faces in a distance $d$ lattice is $(d-1)(3d^{2}-2d+1)$. However, these stabilizers are not all independent. We can multiply the $Z$ stabilizers associated with the $rb$-faces of any cuboctahedron (both full cuboctahedra and half cuboctahedra) to get the identity. Consequently, we must remove one $Z$ stabilizer from the list of stabilizer generators for every cuboctahedron in the lattice to get a set of independent $Z$ generators. Each chequerboard layer has $(d-1)^{2}$ cuboctahedra and there are $d$ chequerboard layers, so in total we have $d(d-1)^{2}$ cuboctahedra in a distance $d$ lattice. Therefore, the total number of $Z$ stabilizer generators in $\mathcal{SC}_{g}$ is:
\begin{equation}
    \rank (S_{Z}^{(g)})=(d-1)(2d^{2}-d+1).
    \label{eq:scg zs}
\end{equation}

The total number of stabilizer generators in $\mathcal{SC}_{g}$ is therefore:
\begin{equation}
    \begin{split}
    \rank (S_{X}^{(g)})+\rank(S_{Z}^{(g)})&=(d-1)(3d^{2}-d+1), \\
    &=3d^{3}-4d^{2}+2d-1.
    \end{split}
    \label{eq:scg stot}
\end{equation}
By comparing Equations~\ref{eq:scg stot} and \ref{eq:q count}, we see that $\mathcal{SC}_{g}$ has has $n-1$ stabilizer generators, where $n$ is the number of physical qubits in the code. Therefore, $\mathcal{SC}_{g}$ encodes a single logical qubit.

Next, we count the stabilizer generators of $\mathcal{SC}_{b}$. The $X$ stabilizers of this code are associated with $b$-cells (including the 2D flattenings) and the $Z$ stabilizers are associated with $rg$-faces (including the 1D flattenings). First, we count the $X$ stabilizers of $\mathcal{SC}_{b}$. Consider the chequerboard layers parallel to the $g$-boundaries. Each chequerboard layer has $(d-1)^{2}$ cuboctahedra (half of which are $r$ and half of which are $b$). There are $d$ chequerboard layers, so there are $d(d-1)^{2}/2$ $\mathcal{SC}_{b}$ $X$ stabilizers associated with $b$-cuboctahedra (either full cuboctahedra or half cuboctahedra). Now consider the $r$-boundaries of the lattice. On each $r$-boundary we have additional $\mathcal{SC}_{b}$ $X$ stabilizers associated with the faces of $r$-cuboctahedra (as explained in Figure~\ref{fig:extra stabs}). There are $d(d-1)$ of these faces in a distance $d$ lattice so we have $d(d-1)$ additional $\mathcal{SC}_{b}$ $X$ stabilizers. The stabilizers we have just detailed are all independent. Hence, the total number of $X$ stabilizer generators in $\mathcal{SC}_{b}$ is:
\begin{equation}
    \rank(S_{X}^{(b)})=\frac{(d-1)}{2}(d^{2}+d).
    \label{eq:scb xs}
\end{equation}

Next, we count the $Z$ stabilizer generators of $\mathcal{SC}_{b}$. The $Z$ stabilizers of $\mathcal{SC}_{b}$ are associated with $rg$-faces (and their 1D flattenings). The $rg$-faces are part of $r$-cuboctahedra, which we counted in the previous paragraph. The $(d-1)^{2}/2$ half $r$-cuboctahedra on the $g$-boundaries have four $rg$-faces. The chequerboard layers which are parallel to the $g$-boundaries but are not the $g$-boundaries each have $(d-1)^{2}/2$ full $r$-cuboctahedra with eight $rg$-faces. There are $d$ chequerboard layers in a distance $d$ lattice and two of these layers are the $g$-boundaries. Therefore, the total number of $\mathcal{SC}_{b}$ $Z$ stabilizers associated with $rg$-faces is $4(d-1)^{3}$. As shown in Figure~\ref{fig:extra stabs}, we also have $\mathcal{SC}_{b}$ $Z$ stabilizers which are associated with the edges of the faces which belong to $b$-cuboctahedra on the $b$-boundaries. These faces are either square or triangular. Each square face has three independent $Z$ stabilizers associated with its edges and each triangular face has two independent $Z$ stabilizers associated with its edges. There are $2(d-1)$ triangular faces and $(d-1)(d-2)$ square faces on the $b$-boundaries which belong to $b$-cuboctahedra in a distance $d$ lattice. Therefore, the total number of independent weight two $Z$ stabilizers in $\mathcal{SC}_{b}$ is $(d-1)(3d-2)$. 

Some of the $Z$ stabilizers we have counted so far are not independent. Consider a complete octahedron. Half of its faces are $rg$-faces and half are $bg$-faces. The product of the $Z$ stabilizers associated with the $rg$-faces is the identity, as each vertex is part of exactly two $rg$-faces. The product of all the $Z$ stabilizers associated with the $rg$-faces of each complete $r$-cuboctahedron is also the identity for the same reason. Therefore we must lose a single $Z$ stabilizer from the list of stabilizer generators for each complete octahedron and $r$-cuboctahedron. Every chequerboard layer parallel to the $g$-boundaries (except the bottom $g$-boundary) has a complete octahedron below all the vertices in the bulk of the layer. There are therefore $(d-1)(d-2)^{2}$ complete octahedra in a distance $d$ lattice. We have already counted the $(d-1)^{2}(d-2)/2$ complete $r$-cuboctahedra. There is also one other redundancy we have not taken into account. We can construct the identity by multiplying the $Z$ stabilizers associated with the $rg$-faces and edges of the half octahedra on the $b$-boundaries, as illustrated in Figure~\ref{fig:halfoct redund}. There are $2(d-1)(d-2)$ of these half octahedra. In total we need to remove $(d-1)(3d^{2}-7d+2)/2$ redundant $Z$ stabilizers from the list of stabilizer generators. The total number of $Z$ stabilizer generators in $\mathcal{SC}_{b}$ is therefore:
\begin{equation}
    \begin{split}
        \rank (S_{Z}^{(b)})&=\frac{(d-1)}{2}(8(d-1)^{2}+6d-4-3d^{2}+7d-2), \\
        &=\frac{d-1}{2}(5d^{2}-3d+2).
    \end{split}
    \label{eq:scb zs}
\end{equation}

\begin{figure}
    \includegraphics[width=0.6\columnwidth]{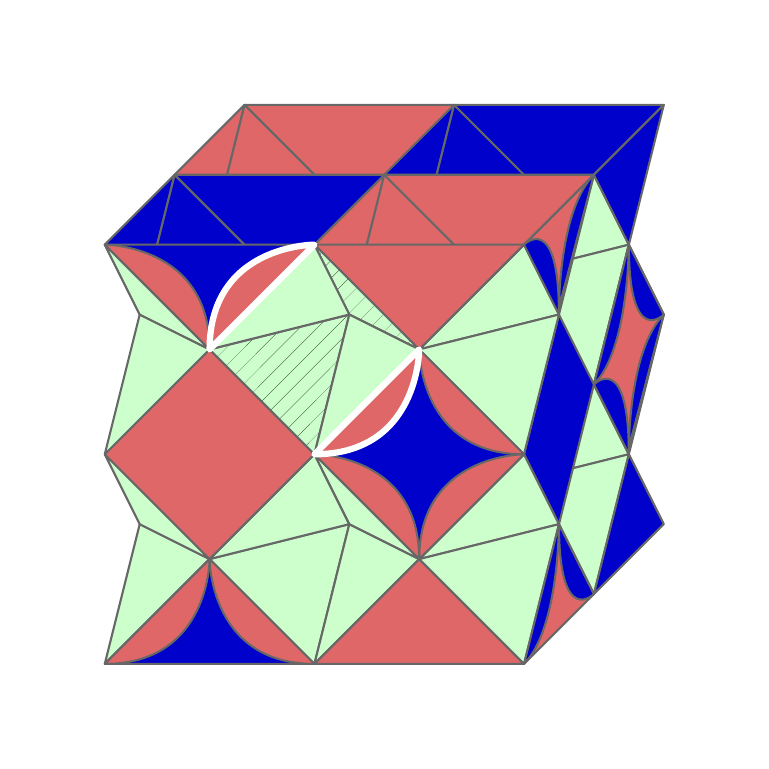}
    \caption{\label{fig:halfoct redund} Redundant $Z$ stabilizers in $\mathcal{SC}_{b}$. We can construct the identity by multiplying the $Z$ stabilizers associated with the $rg$-faces and edges of half octahedra on the $b$-boundaries. We have highlighted one such collection of faces (hatched green triangles) and circular segments (red faces with white edges).}
\end{figure}

The total number of stabilizer generators in $\mathcal{SC}_{b}$ is:
\begin{equation}
    \begin{split}
    \rank (S_{X}^{(b)})+\rank(S_{Z}^{(b)})&=(d-1)(3d^{2}-d+1), \\
    &=3d^{3}-4d^{2}+2d-1.
    \end{split}
    \label{eq:scb stot}
\end{equation}
By comparing Equations~\ref{eq:scb stot} and \ref{eq:q count}, we see that $\mathcal{SC}_{b}$ has has $n-1$ stabilizer generators, where $n$ is the number of physical qubits in the code. Therefore, $\mathcal{SC}_{b}$ encodes a single logical qubit. $\mathcal{SC}_{r}$ also encodes a single logical qubit. The argument showing this is identical to the argument for $\mathcal{SC}_{b}$, except with $r$ and $b$ swapped everywhere. In Appendix~\ref{app:d2 code}, we list the stabilizer generators of the three codes supported on a $d=2$ rectified cubic lattice. 

\subsubsection{\label{subsubsec:log ops} Logical operators}

To finish our discussion of rectified cubic codes, we detail the logical operators of the three surface codes supported on a distance $d$ rectified cubic lattice. $\overline{Z}_{c}$ operators are strings of $Z$ operators from one $c$-boundary to the other and $\overline{X}_{c}$ operators are membranes of $X$ operators with a boundary that spans the $c'$ and $c''$-boundaries. It is useful to define a canonical set of logical operators for each code. The canonical $\overline{Z}_{c}$ operators lie along the lines where $c'$-boundaries meet $c''$-boundaries. That is, given a $c'$-boundary and a $c''$-boundary that share vertices, a canonical $\overline{Z}_{c}$ operator acts on all qubits which are members of both boundaries. Figure~\ref{fig:canon log ops} shows example canonical $\overline{Z}_{c}$ operators for the three codes in a single stack. These canonical $\overline{Z}_{c}$ operators are weight $d$, where $d$ is the code distance that parameterizes the lattice. We define the canonical $\overline{X}_{c}$ operators as membranes of $X$ operators which act on every qubit on one of the $c$-boundaries. The canonical $\overline{X}_{g}$ operators are weight $d^{2}$ and the canonical $\overline{X}_{r}$ and $\overline{X}_{b}$ operators are weight $d^{2}+(d-1)^{2}$.

\begin{figure}
    \includegraphics[width=0.6\columnwidth]{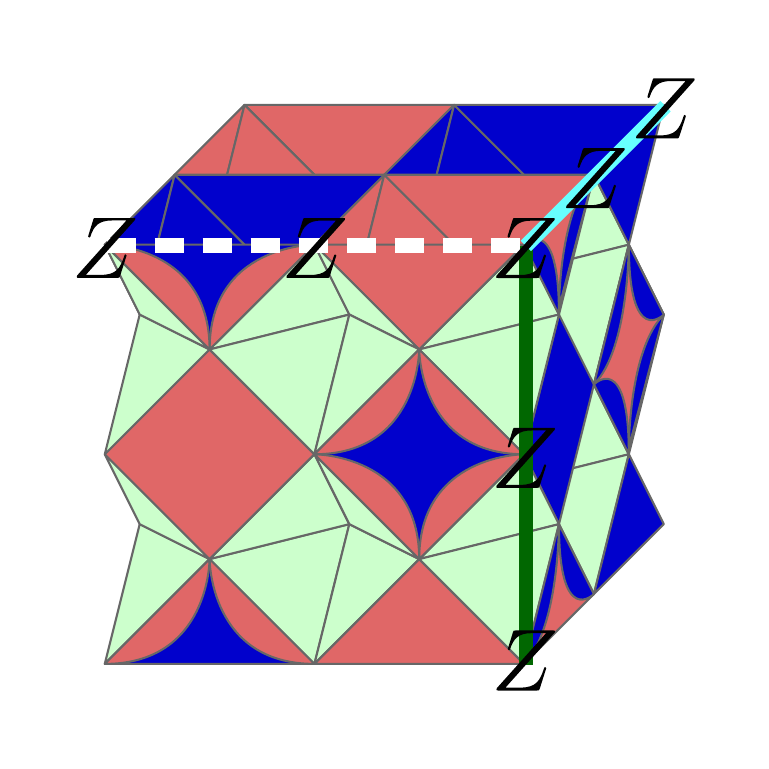}
    \caption{\label{fig:canon log ops} Canonical $\overline{Z}_{r}$ (dashed white line), $\overline{Z}_{g}$ (continuous green (dark grey) line) and $\overline{Z}_{b}$ (continuous blue (light grey) line) operators. The canonical $\overline{X}_{c}$ operators act on every qubit on one of the $c$-boundaries.}
\end{figure}

\subsection{\label{subsec:other rectified} Other rectified picture lattices}

It is natural to wonder whether the rectified cubic lattice is the only lattice which supports three 3D surface codes in the rectified picture. We say a lattice supports three 3D surface codes in the rectified picture if we can partition the cells and faces of the lattice into three sets such that we can define a valid 3D surface code for each set (with $X$ stabilizers associated with cells and $Z$ stabilizers associated with faces). In the 2D case, there are many lattices which support two surface codes in the rotated picture. Indeed, any four-valent lattice will work~\cite{Anderson2013Homological}. 

For 3D lattices, the situation is more complex. To make the analysis easier, we consider rectified lattices without boundaries. To support three 3D surface codes, a rectified picture lattice must satisfy the following conditions:
\begin{enumerate}
    \item The cells must be 3-colourable.
    \item Each vertex must be part of exactly two cells of each colour.
    \item Each vertex must be part of three or more faces of each colour.
    \item All $c$-cells and faces which are not part of $c$-cells must have an even number of vertices in common.
\end{enumerate}
Condition one allows us to assign colours to the cells and faces in a consistent way. We assign each face the colours of the two cells of which it is a member. As with rectified cubic codes, we assign each surface code, $\mathcal{SC}_{c}$, a colour. In $\mathcal{SC}_{c}$, we associate $X$ stabilizers with $c$-cells and $Z$ stabilizers with $c'c''$-faces. Condition two ensures that each qubit is acted upon non-trivially by exactly two $X$ stabilizers in each code. This is necessary because in the Kitaev picture (primal lattice) qubits are associated with edges and $X$ stabilizers with vertices. Condition three ensures that each qubit is acted upon non-trivially by three or more $Z$ stabilizers in each code. This condition is necessary to ensure that the $m$ quasiparticles are 1-D objects, as required in 3D surface codes. Finally, condition four ensures that the $X$ and $Z$ stabilizers in each code commute. In addition, we note that condition four implies Lemma~\ref{lem:overlap}. This means that as long as the three 3D surface codes have canonical logical operators which overlap as described in Figure~\ref{fig:x overlaps}, they will have a transversal $CCZ$ gate. The only semi-regular (vertex-transitive) 3D lattice we have found which satisfies the above conditions is the rectified cubic lattice. However, it is likely that other less regular lattices exist which satisfy the conditions. 

If we relax condition one, we can find regular rectified picture lattices which support more than three 3D surface codes. Instead of insisting on 3-colourability, we allow the cells of the lattice to be 4-colourable. For example, consider the cubic lattice. We can colour the cells of this lattice with four colours such that each cube has the same colour as the cubes with which it shares exactly one vertex (see Figure~\ref{fig:cubic four col}). With this colouring, the cubic lattice supports four 3D surface codes. We choose the four colours $\{r,g,b,y\}$. The four codes have the following stabilizer groups:
\begin{center}
    \begin{tabular}{||c | c | c||} 
    \hline
    Code & X stabilizers & Z stabilizers \\ [0.5ex] 
    \hline
    $\mathcal{SC}_{r}$ & $r$-cuboctahedra & $bg$-faces, $by$-faces and $gy$-faces \\ 
    \hline
    $\mathcal{SC}_{g}$ & $g$-octahedra & $rb$-faces, $ry$-faces and $by$-faces \\
    \hline
    $\mathcal{SC}_{b}$ & $b$-octahedra & $rg$-faces, $ry$-faces and $gy$-faces \\
    \hline
    $\mathcal{SC}_{y}$ & $y$-cuboctahedra & $rb$-faces, $rg$-faces and $bg$-faces  \\ [1ex]
    \hline
   \end{tabular}
\end{center}
The idea of defining a 3D surface code on the cubic lattice in this way is due to Kubica~\cite{KubicaPrivate}. However, he did not consider multiple surface codes defined on the same lattice. We have not constructed a family of codes supported on cubic lattices with boundaries, but this may be possible. 
\begin{figure}
    \includegraphics[width=0.5\columnwidth]{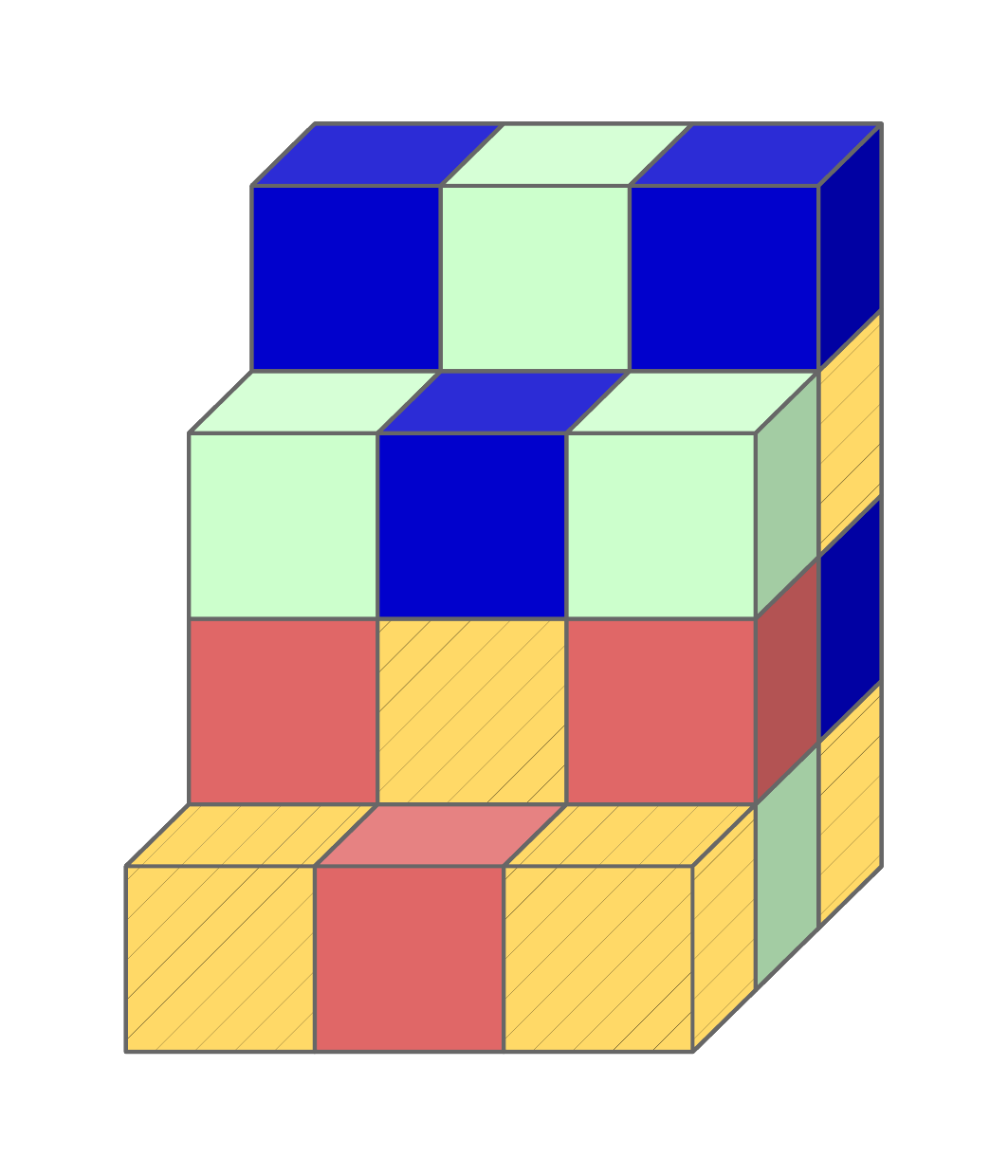}
    \caption{\label{fig:cubic four col} A cubic lattice coloured with four colours (blue (dark grey), red (medium grey), green (light grey) and yellow (hatched)). Cubes which share exactly one vertex have the same colour.}
\end{figure}

Remarkably, the cubic lattice surface codes we defined above are a gauge choice of the 3D Bacon-Shor code~\cite{BaconShor3D}, a well-know subsystem code~\cite{PoulinSubsystemCodes}. A similar result is widely known in the 2D case, see \emph{e.g}.~\cite{LiCompassCodes}. Subsystem codes are quantum error-correcting codes where the encoded qubits separate into two sets: gauge qubits and logical qubits. We only use the logical qubits to encode information, but the gauge qubits give subsystem codes additional structure which is not present in stabilizer codes. A subsystem code is defined by its gauge group $\mathcal{G}$, a subgroup of the Pauli group. The stabilizer group of the subsystem code is the centre of the gauge group, $\mathcal{S}=\mathcal{Z}(\mathcal{G})$. The non-trivial logical operators of a subsystem code are the elements of the Pauli group which commute with all the stabilizers but are not in the gauge group. In the 3D Bacon-Shor code, we place qubits on the vertices of a cubic lattice. The gauge group $\mathcal{G}$ is generated by $XX$ and $ZZ$ operators associated with the edges of the lattice. The $X$-type gauge generators are associated with edges in the $i$ and $j$ directions. Similarly, the $Z$-type gauge generators are associated with edges in the $j$ and $k$ directions. The stabilizer group contains `nearest-plane' operators. That is, the $X$-type stabilizers consist of $X$ operators acting on all the qubits in two $jk$ planes which are next to each other in the $i$ direction. Similarly, the $Z$-type stabilizers consist of $Z$ operators acting on all the qubits in two $ij$ planes which are next to each other in the $k$ direction. 

A stabilizer code defined by the stabilizer group $\mathcal{S}$ is a gauge choice of a subsystem code defined by the gauge group $\mathcal{G}_{1}$ and stabilizer group $\mathcal{S}_{1} $ if the following inclusions hold~\cite{Paetznick2013Universal,Bombin2015GaugeCC}:
\begin{equation} 
    \mathcal{S}_{1}\subseteq\mathcal{S}\subseteq\mathcal{G}_{1}. 
\end{equation}    
Consider $\mathcal{SC}_{r}$ as defined above. The $X$ stabilizers of $\mathcal{SC}_{r}$ are associated with $r$-cubes. Clearly, we can construct these cube operators from $X$ gauge operators associated with the edges in the $i$ and $j$ directions. Similarly, we can construct the $Z$ stabilizers of $\mathcal{SC}_{r}$ ($bg$, $by$ and $gy$-faces) from $Z$ gauge operators associated with the edges in the $j$ and $k$ directions. In addition, the stabilizer generators of the 3D Bacon-Shor code can be constructed from the stabilizer generators of $\mathcal{SC}_{r}$. We can construct any $X$ `nearest plane' operator from $X$ $r$-cube operators and we can construct any $Z$ `nearest plane operator' from $Z$ $bg$, $by$ and $gy$-face operators. The same is true for all the other 3D surface codes defined above by symmetry. Therefore, 3D surface codes defined on the cubic lattice (in the rectified picture) are particular gauge choices of the 3D Bacon-Shor code. 

\section{\label{sec:concat} Concatenation Transformation}

In this section, we show how to transform three 3D surface codes into a 3D color code using code concatenation. Color codes are a family of topological codes introduced by Bomb\'{i}n and Martin-Delgado~\cite{Bombin2006Topological,Bombin2007Topological}. 3D color codes are defined on weakly four-valent, 4-colourable lattices. In a weakly four-valent lattice, all vertices are four-valent except for vertices on the boundaries. In a 3D color code, we place qubits on the vertices of the lattice, we associate $X$ stabilizers with the cells of the lattice and we associate $Z$ stabilizers with the faces of the lattice. This makes 3D color codes very similar to 3D surface codes in the rectified picture. In fact, 3D color codes and stacks of three 3D surface codes are equivalent up to local Clifford unitaries, as shown by Kubica \emph{et al}.~\cite{Kubica2015Unfolding}. This result is a special case of their more general result which states that $d$ copies of a $D$-dimensional surface code are local Clifford equivalent to a single $D$-dimensional color code. The surprisingly close relationship between color codes and surface code has also been explored in a number of other works~\cite{Bombin2012Universal,Delfosse2014Decoding,Bombin2014Structure,Aloshious2016Projecting,Aloshious2018Local}.  

Criger and Terhal gave an explicit construction of the local Clifford unitaries required to transform two 2D surface codes into a single 2D color code~\cite{Criger2016Noise}. Their construction is remarkably simple - it consists of encoding pairs of qubits (one from each 2D surface code) in the [[4,2,2]] error detecting code. This code can be viewed as a 2D color code defined on a single square. It has two stabilizers $X^{\otimes 4}$ and $Z^{\otimes 4}$ and weight two logical operators supported on the sides of the square. The [[4,2,2]] code has a transversal $CZ$ gate implemented using $S=diag(1,i)$ and $S^{\dagger}$ gates. We can generalize this code concatenation transformation to 3D. Instead of a [[4,2,2]] code we use an [[8,3,2]] code. We can view this code as as a small 3D color code defined on a cube (as shown in Figure~\ref{fig:832 code}a). It has an $X$ stabilizer acting on all of the qubits and $Z$ stabilizers associated with the faces of the cube. Only four of the $Z$ face stabilizers are independent so this code has three encoded qubits. Logical $\overline{X}$ operators are membranes of $X$ operators which act on four qubits on the same face (opposite faces support $\overline{X}$ operators which act on the same encoded qubit). $\overline{Z}$ operators are strings of $Z$ operators that act on the qubits at the endpoints of edges linking the faces which support the corresponding $\overline{X}$ operators. The vertices of a cube are two-colourable, \emph{i.e}.\ we can assign each vertex a colour such that no vertices which share an edge have the same colour. We can implement a transversal $CCZ$ in the [[8,3,2]] code by applying $T=diag(1,e^{i\pi/4})$ gates to qubits on vertices of one colour and $T^{\dagger}$ gates to the qubits on the vertices of the other colour. This fact can be verified by computing the action of $T$ and $T^{\dagger}$ on the codeword kets. 

\begin{figure}
    \centering
    \begin{minipage}{0.45\columnwidth}
        \centering
        \topinset{\bfseries{a)}}{\includegraphics[width=\linewidth]{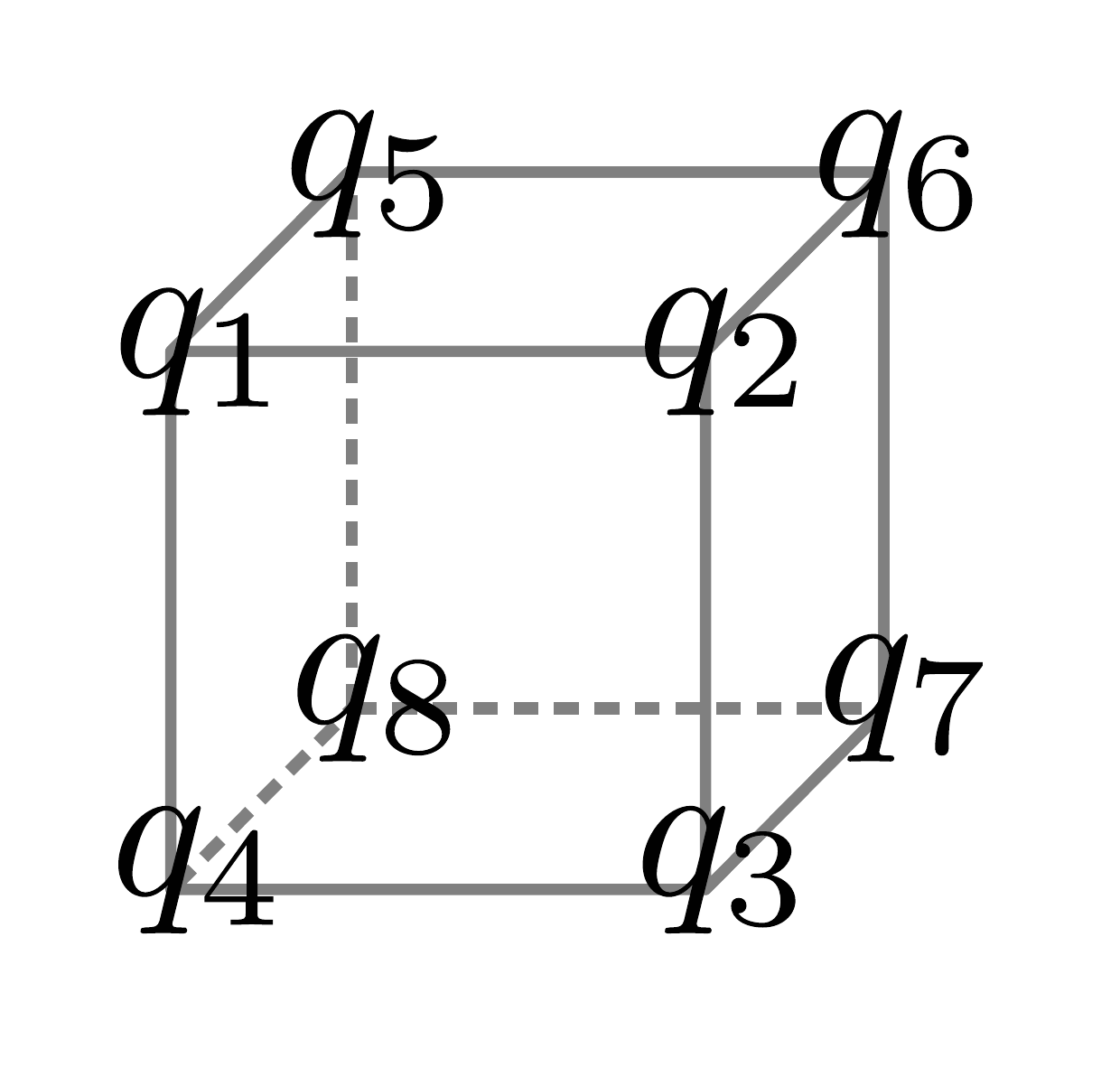}}{-0.2cm}{0cm}
    \end{minipage}
    \centering
    \begin{minipage}{0.4\columnwidth}
        \centering
        \topinset{\bfseries{b)}}{\includegraphics[width=\linewidth]{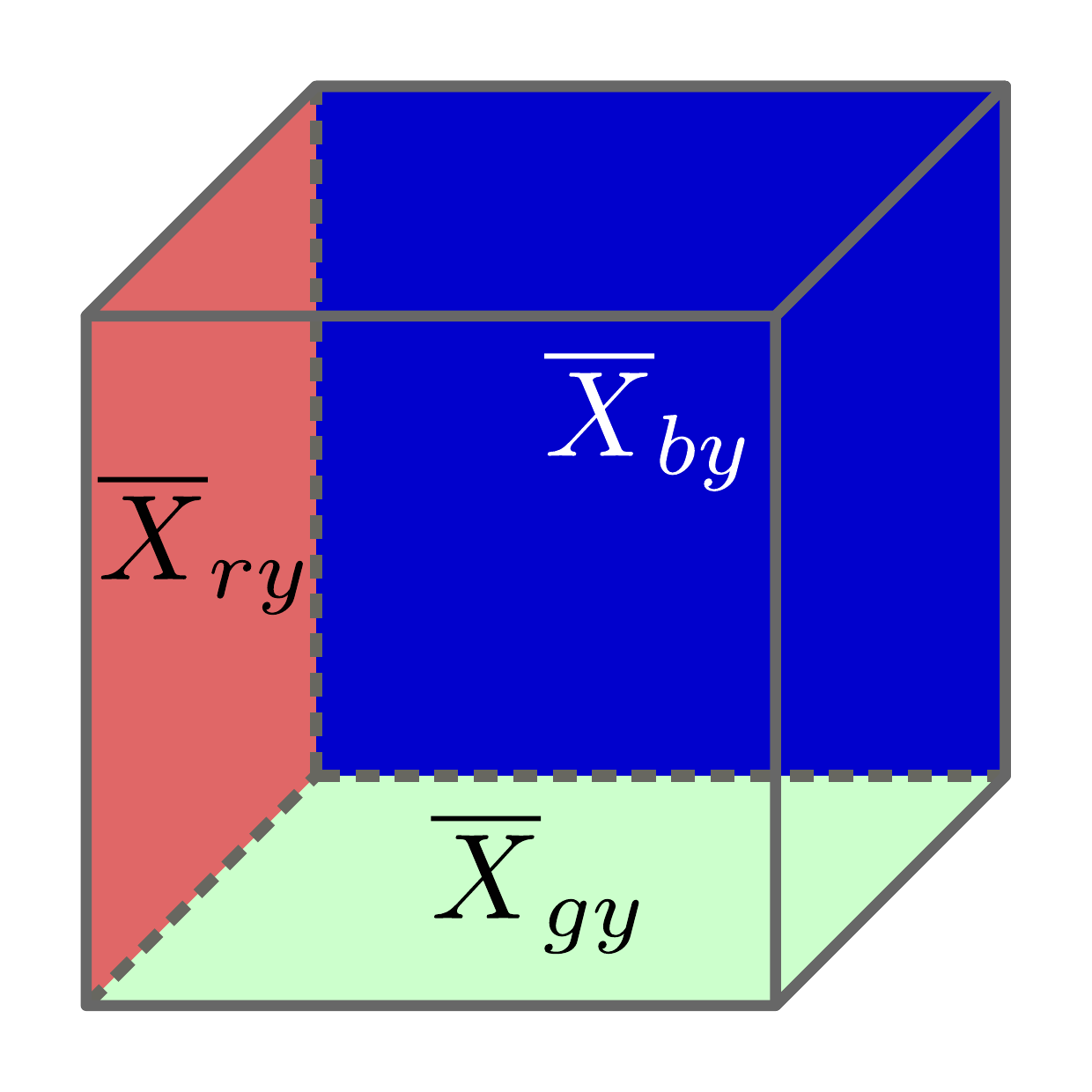}}{-0.35cm}{0cm}
    \end{minipage}
    \centering
    \begin{minipage}{0.65\columnwidth}
        \centering
        \topinset{\bfseries{c)}}{
            \includegraphics[width=\linewidth]{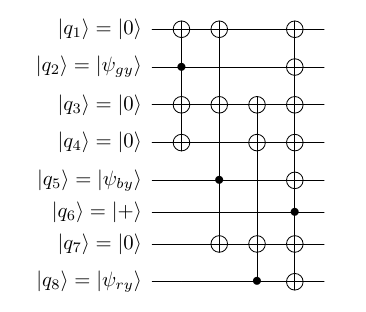}
        }{-0.6cm}{0cm}
    \end{minipage}
    \caption{\label{fig:832 code} The [[8,3,2]] color code. 
    \textbf{a)} Our labelling of the qubits. The $X$ stabilizer acts on all the qubits (it is a cell operator) and the $Z$ stabilizers act on qubits which are members of the same face. 
    \textbf{b)} In a larger 3D color code, the [[8,3,2]] cube would have an assigned colour (say $y$) and its faces would be 3-colourable ($c\in\{ry,by,gy\}$). $\overline{X}_{cy}$ is supported either of the $cy$-faces (opposite faces have the same colour) and $\overline{Z}_{cy}$ is supported on an edge that links the $cy$-faces.
    \textbf{c)} The encoding circuit for the [[8,3,2]] code. Encoded $\overline{X}_{cy}$ operators (shown in \textbf{b}) act on the encoded qubit $\ket{\overline{\psi}_{cy}}$. We derived this circuit using the method given in~\cite{Gottesman1997Stabilizer}.
    }
\end{figure}

In a 3D color code, we assign faces the colours of the cells they are members of. For example, a face which is a member of a $c$-cell and a $c'$-cell is a $cc'$-face. Due to the 4-colourability of the color code lattice, each cell's faces are 3-colourable. Consider a color code lattice where cells assigned colours from the set $\{r,g,b,y\}$. We can view the [[8,3,2]] code as a cell of this lattice. Assume that it is a $y$-cell. Then, its faces are coloured $ry$, $by$ and $gy$. We use these colours to index the logical operators of the [[8,3,2]] code. That is, the logical $\overline{X}$ operators which act on the $cy$-faces are denoted by $\overline{X}_{cy}$. These operators are shown in Figure~\ref{fig:832 code}b. We denote the corresponding $\overline{Z}$ operators as $\overline{Z}_{cy}$. 

\begin{figure}[ht]
    \centering
    \begin{minipage}{0.4\columnwidth}
        \centering
        \topinset{\bfseries{a)}}{\includegraphics[width=\linewidth]{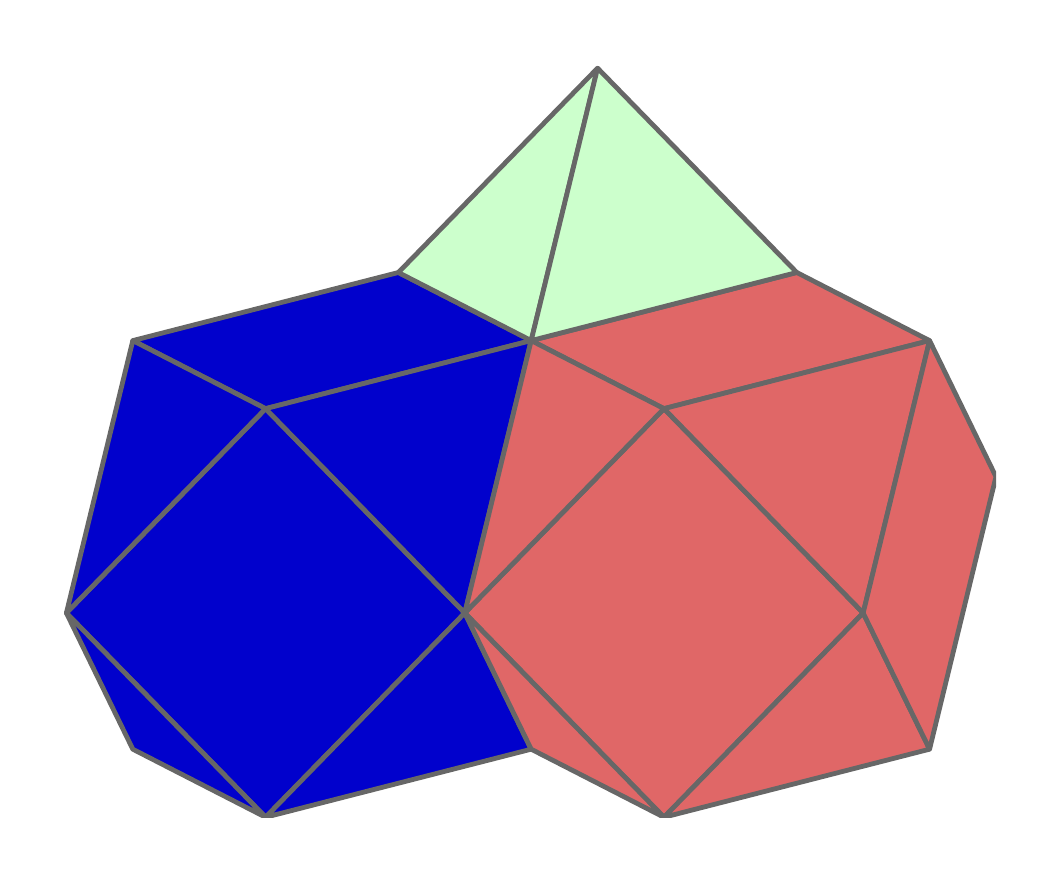}}{-0.35cm}{0cm}
    \end{minipage}
    \begin{minipage}{0.4\columnwidth}
        \centering
        \topinset{\bfseries{b)}}{\includegraphics[width=\linewidth]{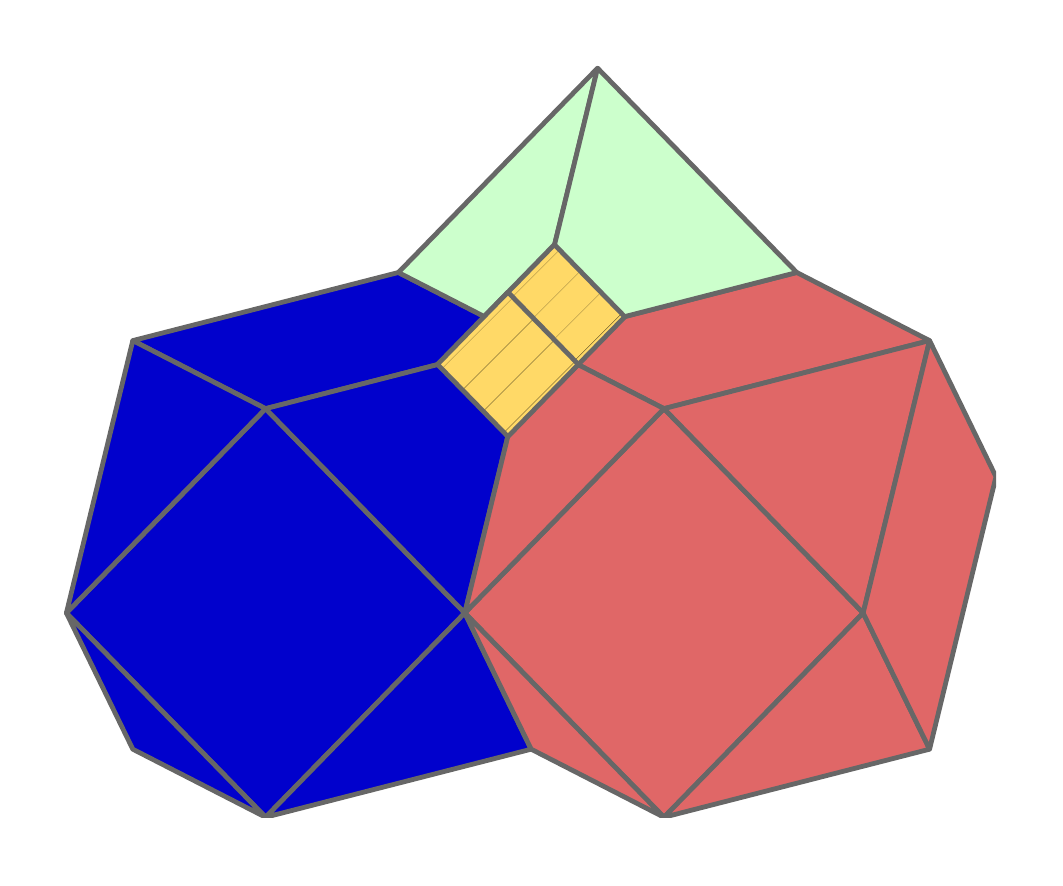}}{-0.35cm}{0cm}
    \end{minipage}
    \caption{\label{fig:832 vertex} An [[8,3,2]] concatenation transformation of a single vertex in a stack of 3D surface codes. 
    \textbf{a)} The initial rectified cubic lattice. 
    \textbf{b)} We encode the three qubits at the vertex where the three different cells meet in an [[8,3,2]] color code. This corresponds to replacing the vertex with a cube (yellow (hatched) cell).}
\end{figure}

We can now detail the concatenation transformation which maps a stack of three 3D surface codes to a single 3D color code. Consider a rectified cubic code stack with code distance $d$. To transform the three codes in the stack, we encode the three qubits at every vertex in [[8,3,2]] codes. An encoding circuit for the [[8,3,2]] code is shown in Figure~\ref{fig:832 code}c. Figure~\ref{fig:832 vertex} shows the [[8,3,2]] concatenation transformation applied to a single vertex. Applied to a whole lattice, concatenation with the [[8,3,2]] code transforms cuboctahedra into truncated cuboctahedra, octahedra into truncated octahedra and vertices into cubes. Globally this transforms the rectified cubic lattice into a cantitruncated cubic lattice. Two truncated cuboctahedra, one truncated octahedron and one cube meet at each vertex of a cantitruncated cubic lattice. Figure~\ref{fig:832 lattice} shows how a $d=2$ rectified cubic lattice transforms under the [[8,3,2]] concatenation transformation. 

\begin{figure}[ht]
    \centering
    \begin{minipage}{0.4\columnwidth}
        \centering
        \topinset{\bfseries{a)}}{\includegraphics[width=\linewidth]{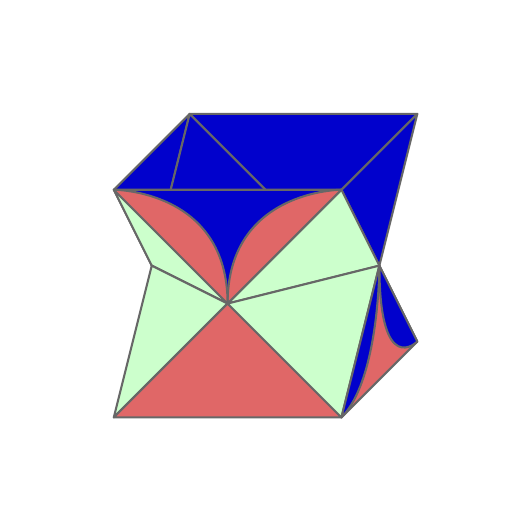}}{-.2cm}{0cm}
    \end{minipage}
    \begin{minipage}{0.55\columnwidth}
        \centering
        \topinset{\bfseries{b)}}{\includegraphics[width=\linewidth]{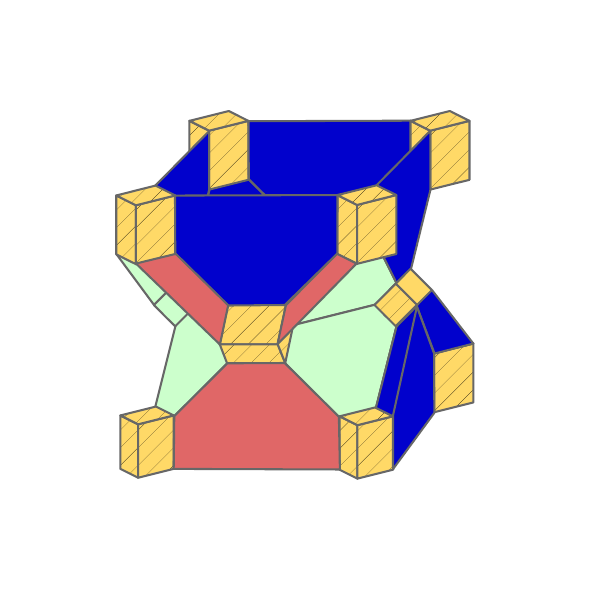}}{0.5cm}{0cm}
    \end{minipage}
    \caption{\label{fig:832 lattice} Transforming a stack of three 3D surface codes into a single 3D color code by concatenating with the [[8,3,2]] color code. Each vertex the $d=2$ rectified cubic lattice (\textbf{a}) is transformed as shown in Figure~\ref{fig:832 vertex}. This transforms the rectified cubic lattice into a cantitruncated cubic lattice (\textbf{b}). This lattice supports a $d=4$ color code with three encoded qubits. The top and bottom boundaries of the surface code stack are $g$-boundaries, the left and right boundaries are $r$-boundaries, and the front and back boundaries are $b$-boundaries. In a color code, a $c$-boundary is a boundary which has no $c$-cells adjacent to it. By inspecting Figure \textbf{b}, we see that $c$-boundaries in the surface code stack become $c$-boundaries in the color code.}
\end{figure}

The colours we assigned to the encoded qubits of the [[8,3,2]] codes tell us how to encode the three qubits at every vertex of the rectified cubic lattice. We encode the physical qubits from $\mathcal{SC}_{c}$ as the $cy$-qubits of the [[8,3,2]] codes (see Figure~\ref{fig:832 code}c). This ensures that $\mathcal{SC}_{c}$ $X$ ($Z$) stabilizers associated with $c$-cells ($cc'$-faces) are mapped to $X$ ($Z$) stabilizers associated with $c$-cells ($cc'$-faces) in the color code. In each 3D surface code we have $n$ qubits and $n-1$ independent stabilizer generators. In the color code we have $8n$ qubits and we inherit $3(n-1)$ stabilizer generators. We also have five independent stabilizer generators for each cube (one $X$ stabilizer and four $Z$ stabilizers). So in total we have $5n+3n-3=8n-3$ independent stabilizer generators in the 3D color code. The 3D color code therefore encodes three logical qubits. The color code inherits the boundary structure of the stack of 3D surface codes. In a color code, a boundary has the colour $c$, if no $c$-cells are present on it. As shown in Figure~\ref{fig:832 lattice}, the $c$-boundaries of the rectified cubic lattice become $c$-boundaries in the color code. 

As with surface codes, we interpret unsatisfied color code stabilizers as quasiparticles. For each colour $c$ in a 3D color code, we have quasiparticles $e_{c}$ and $m_{c}$. In the stack of surface codes, each code $\mathcal{SC}_{c}$ has quasiparticles $e_{c}$ and $m_{c}$. The quasiparticles of the three surface codes are mapped directly to quasiparticles in the color code. For any colour $c\in\{r,g,b\}$, the $e_{c}$ ($m_{c}$) quasiparticles in our three 3D surface codes are mapped to $e_{c}$ ($m_{c}$) quasiparticles in the color code because $c$-cell ($cc'$-face) stabilizers in the surface codes are mapped to $c$-cell ($cc'$-face) stabilizers in the color code. This leaves the $y$ quasiparticles in the color code unaccounted for. However, this is not important as the $y$ quasiparticles are not independent. We can always construct $y$ quasiparticles from combinations of $r$, $g$ and $b$ quasiparticles~\cite{Bombin2007Topological}. 

The logical operators of the color code have the same structure as the logical operators of the three surface codes. In the color code, $\overline{Z}_{c}$ operators are strings of $Z$ operators from one $c$-boundary to the other and $\overline{X}_{c}$ operators are membranes of $X$ operators with boundaries that span the $c'$ and $c''$-boundaries. As the concatenation transformation maps the $c$-boundaries of the rectified cubic codes to $c$-boundaries in the color code, the structure of the logical operators is preserved by the mapping. 

\section{\label{sec:gates} A Universal Gate Set in 3D Surface Codes}

In this section, we prove that $CCZ$ and $CZ$ are transversal in rectified cubic codes and we show how to implement a universal gate set in these codes. We note that $CZ$ is also transversal in 2D surface codes. This fact can be easily understood in the rotated picture, as we explain in Appendix~\ref{app:2D cz}.

An important concept in our proofs is the overlap of logical operators (including stabilizers). Given two or three logical operators, each of which acts on a different code in a rectified cubic stack, we define the overlap of these operators as the vertices where all the operators act non-trivially. Before proceeding to the main proofs, we need the following Lemma about rectified cubic codes.
\begin{lemma}
    The overlap of any two $X$ stabilizers from two different codes in a rectified cubic code stack is equal to the non-trivial support of a $Z$ stabilizer from the third code.
    \label{lem:overlap}
\end{lemma}
In other words, the set of vertices at which both $X$ stabilizers act non-trivially are equal to the support of some $Z$ stabilizer in the third code.
\begin{proof}
We initially restrict our attention to the bulk of the lattice. Let us consider $X$ stabilizer generators from $\mathcal{SC}_{r}$ ($r$-cells) and $\mathcal{SC}_{g}$ ($g$-cells). We denote the $X$ and $Z$ stabilizers of $\mathcal{SC}_{c}$ as $S_{c}^{x}$ and $S_{c}^{z}$, respectively. Clearly, $S_{r}^{x}$ generators and $S_{g}^{x}$ generators overlap on $rg$-faces ($S_{b}^{z}$ operators) in the bulk. This is also true for the other two colour combinations.

On the boundaries, $S_{r}^{x}$ generators and $S_{b}^{x}$ operators overlap on $rb$-faces. This can be seen by inspecting \emph{e.g}.\ Figure~\ref{fig:extra stabs}. Some $S_{r}^{x}$ generators and $S_{g}^{x}$ generators overlap on edges. However, in all these cases, a $S_{b}^{z}$ operator is supported on the overlap edge. An example of this is highlighted in Figure~\ref{fig:stab overlap}. Similarly, $S_{b}^{x}$ and $S_{g}^{x}$ generators on the boundaries can overlap on edges. But all these edges have an associated $S_{r}^{z}$ operator. 

\begin{figure}
    \includegraphics[width=0.6\columnwidth]{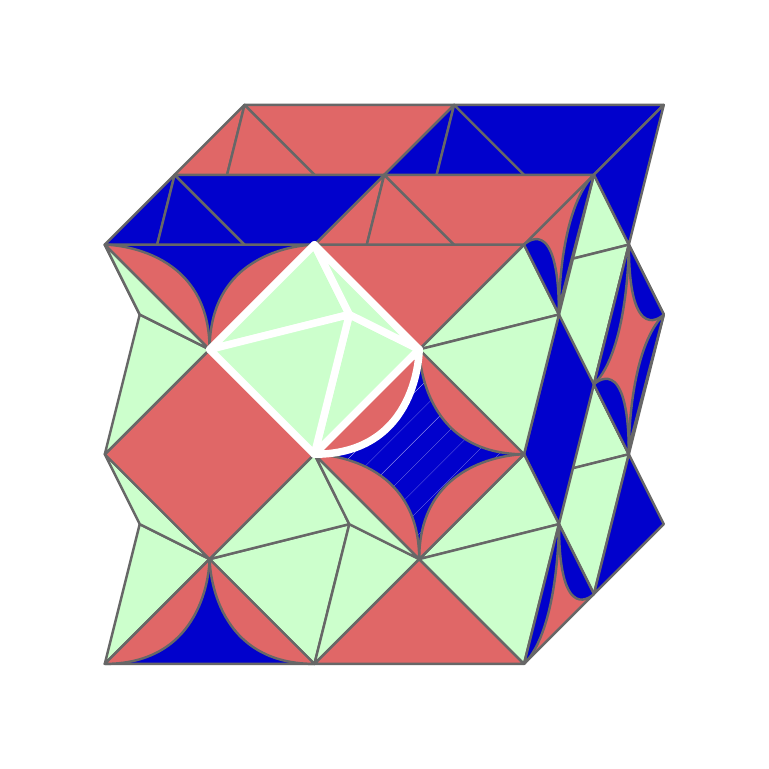}
    \caption{\label{fig:stab overlap} The overlap of a $S_{g}^{x}$ generator (green (light grey) octahedron with white edges) and a $S_{r}^{x}$ generator (hatched blue face) is equal to an edge (red (medium grey) circular segment with white edges). This edge has an associated $S_{b}^{z}$ operator, as explained in Figure~\ref{fig:extra stabs}.}
\end{figure}

We have shown that all pairs of $X$ stabilizer generators from two different codes in the stack overlap on faces or edges which support $Z$ stabilizers in the third code. As every $X$ stabilizer is a product of stabilizer generators, any pair of $X$ stabilizers from two different codes have overlap equal to the support of a $Z$ stabilizer from the third code.
\end{proof}

\subsection{\label{subsec:ccz} Transversal CCZ}

We now prove that $CCZ$ is transversal for stacks rectified cubic codes. We first write the surface code kets in a form inspired by a proof in~\cite{Paetznick2013Universal}. Let $H_{c}^{x}$ be the (classical) parity check matrix of the $X$ stabilizers of $\mathcal{SC}_{c}$. That is, $H_{c}^{x}$ is an $m$ by $n$ binary matrix with $m$ equal to the number of $X$ stabilizer generators in $\mathcal{SC}_{c}$ and $n$ equal to the number of physical qubits in the code. Each row of $H_{c}^{x}$ has a 1 at column $j$ if the stabilizer generator corresponding to that row acts non-trivially on qubit $q_{j}$. If the stabilizer generator acts trivially then the entry is equal to zero. Now let $G_{c}^{0}$ be the linear span the rows of $H_{c}^{x}$. For each code, we choose a canonical $\overline{X}_{c}$ operator which acts on one of the $c$-boundaries of the lattice. Let $X_{c}$ be an $n$-bit binary vector describing the support of $\overline{X}_{c}$. That is, $X_{c}$ has a one at position $j$ if $\overline{X}_{c}$ acts non-trivially on qubit $q_{j}$, with all other entries in $X_{c}$ equal to zero. Let $G_{c}^{1}$ be the coset $\{X_{c}+g:g\in G_{c}^{0}\}$. With these definitions we can write the encoded state of $\mathcal{SC}_{c}$ as follows:
\begin{equation}
    \ket{\overline{\alpha}}_{c}=
    \frac{1}{\sqrt{|G_{c}^{\alpha}|}}\sum_{g\in G_{c}^{\alpha}}\ket{g}_{c},
    \label{eq:ket form}
\end{equation}
where $|G_{c}^{\alpha}|$ is the number of elements in $G_{c}^{\alpha}$ and $\alpha\in\{0,1\}$.

To show that $CCZ$ is transversal for stacked 3D surface codes we need the following lemma.
\begin{lemma}
    Given a finite set of $k$ binary vectors $\{a_{j}\}$ with the same length, the parity of their sum is equal to the sum of their parities. 
    \label{lem:parity}
\end{lemma}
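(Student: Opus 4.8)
The plan is first to pin down what the statement means. For a binary vector $a=(a_1,\dots,a_n)$ let $|a|=\sum_{i=1}^{n}a_i\in\{0,1,\dots,n\}$ denote its Hamming weight and call $|a|\bmod 2$ its \emph{parity}. The sum $\bigoplus_{j=1}^{k}a_j$ of binary vectors is taken componentwise modulo $2$, and ``the sum of the parities'' $\sum_{j=1}^{k}(|a_j|\bmod 2)$ is likewise read modulo $2$, so that both sides of the claimed identity lie in $\{0,1\}$. The statement to prove is then $\left|\bigoplus_{j=1}^{k}a_j\right|\equiv\sum_{j=1}^{k}|a_j|\pmod 2$.

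I would prove this by induction on $k$, the cases $k=0$ (empty sum, zero vector, parity $0$) and $k=1$ being immediate. The inductive step reduces everything to $k=2$: for binary vectors $a,b$ of equal length, counting supports gives the inclusion--exclusion identity $|a\oplus b|=|a|+|b|-2|a\wedge b|$, where $a\wedge b$ is the componentwise \textsc{and}; reducing modulo $2$ discards the term $2|a\wedge b|$ and yields $|a\oplus b|\equiv|a|+|b|\pmod 2$. Writing $\bigoplus_{j=1}^{k}a_j=\left(\bigoplus_{j=1}^{k-1}a_j\right)\oplus a_k$ and applying the $k=2$ case followed by the inductive hypothesis completes the argument.

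Alternatively --- and this is perhaps the cleanest route --- one can give a direct proof by exchanging the order of summation: for each coordinate $i$ we have $\left(\bigoplus_{j}a_j\right)_i=\left(\sum_{j}(a_j)_i\right)\bmod 2$, hence $\left|\bigoplus_{j}a_j\right|=\sum_{i}\left(\left(\sum_{j}(a_j)_i\right)\bmod 2\right)\equiv\sum_{i}\sum_{j}(a_j)_i=\sum_{j}\sum_{i}(a_j)_i=\sum_{j}|a_j|\pmod 2$, using only that reducing an integer modulo $2$ does not change its residue modulo $2$ and that a finite double sum may be reordered.

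I do not expect any real obstacle here; the lemma is elementary, and the only things needing care are the bookkeeping of which quantities are honest integers and which are residues modulo $2$ (in particular reading ``the sum of their parities'' modulo $2$) and the legitimacy of swapping the two finite sums. In the application (Appendix~\ref{app:ccz}) this lemma is what lets us replace the parity of the weight of a componentwise sum of support vectors --- for instance the support of a product of $X$-stabilizers --- by the sum of the parities of the individual summands, which is the form needed to evaluate $\mathcal{O}(\alpha,\beta,\gamma,S_r,S_g,S_b)$ modulo $2$.
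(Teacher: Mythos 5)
Your inductive argument is essentially the paper's proof verbatim: the paper also reduces to the $k=2$ case via the inclusion--exclusion identity $|a_1+a_2|=|a_1|+|a_2|-2\mathcal{O}(a_1,a_2)$ and then induces on $k$. Your second, ``direct'' argument --- swap the double sum $\sum_i\sum_j (a_j)_i$ and observe that reducing each coordinate sum mod $2$ before summing over $i$ does not change the residue --- is a genuinely different and arguably cleaner route that avoids induction altogether, at the cost of being slightly less self-contained notationally; both are correct, and the choice between them is purely stylistic.
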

This lemma is easy to prove. For completeness, we include a proof in Appendix~\ref{app:lemma}.

\begin{theorem}
    $CCZ$ is tranversal in rectified cubic codes.
    \label{thm:ccz}
\end{theorem}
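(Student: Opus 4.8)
\emph{Proof strategy.} The plan is to follow the sketch in the main text and reduce transversality to a parity count that is handled by Lemma~\ref{lem:parity}. First I would write each logical ket in the form of Equation~\ref{eq:ket form}, $\ket{\overline{\alpha}}_{c}=|G_{\alpha}^{c}|^{-1/2}\sum_{g\in G_{\alpha}^{c}}\ket{g}_{c}$, and apply $CCZ^{\otimes n}$ term by term. Since $CCZ$ acting on the three physical qubits $\ket{x}_{r}\ket{y}_{g}\ket{z}_{b}$ at a vertex contributes the phase $(-1)^{xyz}$, transversal $CCZ^{\otimes n}$ multiplies the basis vector labelled by $(g_{r},g_{g},g_{b})\in G_{\alpha}^{r}\times G_{\beta}^{g}\times G_{\gamma}^{b}$ by $(-1)^{\mathcal{O}(g_{r},g_{g},g_{b})}$, where $\mathcal{O}(g_{r},g_{g},g_{b})=\sum_{v}(g_{r})_{v}(g_{g})_{v}(g_{b})_{v}$ counts the lattice collisions. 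Hence the theorem reduces to showing $\mathcal{O}(g_{r},g_{g},g_{b})\equiv\alpha\beta\gamma\pmod 2$ for \emph{every} choice $g_{c}\in G_{\alpha_{c}}^{c}$; granting this, $CCZ^{\otimes n}\ket{\overline{\alpha}}_{r}\ket{\overline{\beta}}_{g}\ket{\overline{\gamma}}_{b}=(-1)^{\alpha\beta\gamma}\ket{\overline{\alpha}}_{r}\ket{\overline{\beta}}_{g}\ket{\overline{\gamma}}_{b}$, which is exactly the action of $\overline{CCZ}$ on the logical computational basis, and since $CCZ^{\otimes n}$ and $\overline{CCZ}$ are both diagonal this proves transversality.

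\emph{Linearising the count.} The key point is that bitwise AND distributes over bitwise XOR, so componentwise $g_{r}\wedge g_{g}\wedge g_{b}$ expands multilinearly over all ways of choosing one summand from each of $g_{r},g_{g},g_{b}$; Lemma~\ref{lem:parity} then gives that the parity of $\mathcal{O}$ equals the sum of the parities of the resulting elementary triple overlaps. Writing $g_{c}=\alpha_{c}X^{c}\oplus s_{c}$ with $s_{c}$ a sum of rows of $C^{c}$ (that is, of $X$-stabilizer generators of $\mathcal{SC}_{c}$), every elementary term is $\mathcal{O}(u_{r},u_{g},u_{b})$ with an overall $\{0,1\}$ coefficient, where each $u_{c}$ is either the canonical $\overline{X}_{c}$ or a single $X$-stabilizer generator of $\mathcal{SC}_{c}$. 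It therefore suffices to establish four geometric facts: (i) three $X$-stabilizer generators, one per code, have even triple overlap; (ii) the canonical $\overline{X}_{c}$ together with one $X$-stabilizer generator from each of the other two codes has even triple overlap; (iii) two canonical $\overline{X}$'s of different colours together with one $X$-stabilizer generator of the third colour have even triple overlap; and (iv) $\overline{X}_{r},\overline{X}_{g},\overline{X}_{b}$ have odd triple overlap. Given these, the only elementary term that can survive modulo $2$ is $\alpha_{r}\alpha_{g}\alpha_{b}\,\mathcal{O}(X^{r},X^{g},X^{b})$, yielding $\mathcal{O}\equiv\alpha\beta\gamma$.

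\emph{The geometric facts.} Facts (i)--(iii) I would deduce from CSS commutation relations once we note that in a rectified cubic lattice two cells of different colours share at most one face. For (i): the supports of an $r$-cell and a $g$-cell meet only along their common $rg$-face (if any), which is a $Z$-stabilizer of $\mathcal{SC}_{b}$ and hence has even overlap with any $b$-cell (an $X$-stabilizer of $\mathcal{SC}_{b}$). For (ii): a $g$-cell and a $b$-cell meet only along a $gb$-face, which is a $Z$-stabilizer of $\mathcal{SC}_{r}$ and so has even overlap with the logical operator $\overline{X}_{r}$ in the remaining slot; the other orderings follow by relabelling colours. For (iii): the canonical $\overline{X}_{c'}$ and $\overline{X}_{c''}$ are supported on a $c'$-boundary and a $c''$-boundary, so their common support is exactly the set of qubits lying on both boundaries, which by definition (Appendix~\ref{app:log ops}) is the support of a canonical $\overline{Z}_{c}$; this is a logical $Z$ operator of $\mathcal{SC}_{c}$ and hence has even overlap with the $X$-stabilizer generator of $\mathcal{SC}_{c}$ in the third slot. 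Fact (iv) is the genuinely geometric one: the three chosen boundaries are mutually adjacent faces of the cube-shaped lattice, so they meet at a single corner vertex, the unique lattice collision, giving an odd count; equivalently $\overline{X}_{r}\wedge\overline{X}_{g}$ is the support of a $\overline{Z}_{b}$ running along the cube edge where the $r$- and $g$-boundaries meet, which hits the chosen $b$-boundary in exactly one of its two endpoints. I expect the main obstacle, and the part that occupies the rest of this appendix, to be the careful boundary bookkeeping: one must check that the low-weight ``flattened'' $X$-stabilizers and their partner $Z$-stabilizers of Figure~\ref{fig:extra stabs} do not spoil facts (i)--(iii)---which they do not, precisely because they were introduced as commuting $X$/$Z$ pairs---and that the ``at most one shared face'' and ``overlap is a $\overline{Z}$ support'' claims survive along the edges and corners of the lattice, together with the explicit $d=2$ and general-$d$ verifications.
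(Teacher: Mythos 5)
Your proposal matches the paper's proof of Theorem~\ref{thm:ccz}: the same reduction via Equation~\ref{eq:ket form} and the split $g_c=\alpha_c X^c\oplus s_c$, the same appeal to Lemma~\ref{lem:parity} to reduce the phase to parities of triple overlaps, and the same four geometric facts (different-colour $X$-stabilizer generators overlap on a $c'c''$-face, hence on a $Z$-stabilizer of the third code; $\overline{X}_{c'}$ and $\overline{X}_{c''}$ on boundaries meet on a $\overline{Z}_c$ string; and $\overline{X}_r,\overline{X}_g,\overline{X}_b$ collide at a single corner). The only cosmetic difference is that you push the multilinear expansion all the way to individual generators before invoking Lemma~\ref{lem:parity}, whereas the paper argues that the overlap of arbitrary $t'\in G_0^r$ and $u'\in G_0^g$ already has the support of an $\mathcal{SC}_b$ $Z$-stabilizer; both routes hinge on the same commutation relations, and you rightly flag that the boundary bookkeeping of Figure~\ref{fig:extra stabs} is where the paper spends most of its effort.
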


\begin{proof}
Define $\overline{CCZ}=CCZ^{\otimes n}$, where each $CCZ$ gate acts on the three qubits (one per code) at one of the $n$ vertices of the lattice. We consider the initial state
\begin{equation}
    \ket{\overline{\alpha\beta\gamma}}_{rgb}=
    \sum_{t\in G_{r}^{\alpha}u\in G_{g}^{\beta}v\in G_{b}^{\gamma}}
    \ket{t}_{r}\ket{u}_{g}\ket{v}_{b},
    \label{eq:CCZ start}
\end{equation}
where $\alpha,\beta,\gamma\in\{0,1\}$. We have omitted the global normalization factor. Now, we apply $\overline{CCZ}$ to $\ket{\overline{\alpha\beta\gamma}}_{rgb}$:
\begin{equation}
    \begin{split}
        \overline{CCZ}\ket{\overline{\alpha\beta\gamma}}_{rgb}&=
        \sum_{t\in G_{r}^{\alpha}u\in G_{g}^{\beta}v\in G_{b}^{\gamma}}
        CCZ^{\otimes n}\ket{t}_{r}\ket{u}_{g}\ket{v}_{b}, \\
        &=\sum_{t\in G_{r}^{\alpha}u\in G_{g}^{\beta}v\in G_{b}^{\gamma}}
        (-1)^{|t\cdot u\cdot v|}\ket{t}_{r}\ket{u}_{g}\ket{v}_{b},
    \end{split}
    \label{eq:CCZ action}
\end{equation}
where $u\cdot v$ denotes the bitwise binary product between $u$ and $v$ and $|t|$ denotes the Hamming weight of $t$. 

We now calculate $(-1)^{|t\cdot u\cdot v|}$ for each encoded computational basis state. We can expand $t\cdot u\cdot v$ as follows:
\begin{equation}
    \begin{split}
        t\cdot u\cdot v&=(\alpha X_{r}+t')\cdot (\beta X_{g}+u')\cdot (\gamma X_{b}+v') \\
        &=\alpha \beta \gamma(X_{r}\cdot X_{g} \cdot X_{b}) +
        \alpha \beta(X_{r}\cdot X_{g}\cdot v') \\
        &+ \alpha \gamma(X_{r}\cdot X_{b}\cdot u')
        + \beta \gamma(X_{g}\cdot X_{b}\cdot t') \\
        &+ \alpha(X_{r}\cdot u'\cdot v') 
        + \beta (X_{g}\cdot t'\cdot v') \\
        &+ \gamma(X_{b}\cdot t'\cdot u') 
        + (t'\cdot u'\cdot v')
    \end{split}
    \label{eq:exponent expansion}
\end{equation}
where $t'\in G_{0}^{r}$, $u'\in G_{0}^{g}$ and $v'\in G_{0}^{b}$. 

First we consider the term $(t'\cdot u'\cdot v')$, which corresponds to the state $\ket{\overline{000}}$. We can find the Hamming weight of this term by considering the support of the stabilizers which correspond to $t'$, $u'$ and $v'$. The $t'$ vectors correspond to the $X$ stabilizers of $\mathcal{SC}_{r}$ ($S_{r}^{x}$), the $u'$ vectors correspond to the $X$ stabilizers of $\mathcal{SC}_{g}$ ($S_{g}^{x}$) and the $v'$ vectors correspond to the $X$ stabilizers of $\mathcal{SC}_{b}$ ($S_{b}^{x}$). The Hamming weight of the product $t'\cdot u' \cdot v'$ will be equal to the number of vertices in the lattice where the three $X$ stabilizers act non-trivially on the physical qubits of their respective codes. In other words, it will be equal to the overlap of the three operators. 

By Lemma~\ref{lem:overlap}, any $S_{r}^{x}$ operator and any $S_{g}^{x}$ operator have overlap equal to the support of a $S_{b}^{z}$ operator ($\mathcal{SC}_{b}$ $Z$ stabilizer). As the stabilizers of $\mathcal{SC}_{b}$ commute, the overlap of any $S_{r}^{x}$, $S_{g}^{x}$ and $S_{b}^{x}$ is always even. Hence, $|t'\cdot u' \cdot v'|=0\mod 2$ for all $t'$, $u'$ and $v'$ and $(-1)^{|t\cdot u\cdot v|}=1$ for $\ket{\overline{000}}$.

Next we consider the exponent for $\ket{\overline{001}}$ which is equal to $(X_{b}\cdot t' \cdot u')+(t'\cdot u'\cdot v')$. Thanks to Lemma~\ref{lem:parity} we only need to show that $(X_{b}\cdot t' \cdot u')$ has even Hamming weight to show that the sum has even Hamming weight. We need to calculate the overlap of the $\overline{X}_{b}$ operator on the $b$-boundary (corresponding to the $X_{b}$ vector) with any $S_{r}^{x}$ and $S_{g}^{x}$. By Lemma~\ref{lem:overlap}, any $S_{r}^{x}$ and $S_{g}^{x}$ overlap on a collection of vertices which has the same support (in terms of vertices) as a $S_{b}^{z}$ operator. Logical operators and stabilizers commute, so the overlap of $\overline{X}_{b}$ with any $S_{r}^{x}$ and $S_{g}^{x}$ is even. This implies that $|(X_{b}\cdot t' \cdot u')|=0\mod 2$ for every $t'$ and $u'$. All the other terms in Equation~\ref{eq:exponent expansion} with one $X_{c}$ term have even Hamming weight by the same argument. Therefore $(-1)^{|t\cdot u\cdot v|}=1$ for $\ket{\overline{100}}$, $\ket{\overline{010}}$ and $\ket{\overline{001}}$. 

\begin{figure}
    \includegraphics[width=0.5\columnwidth]{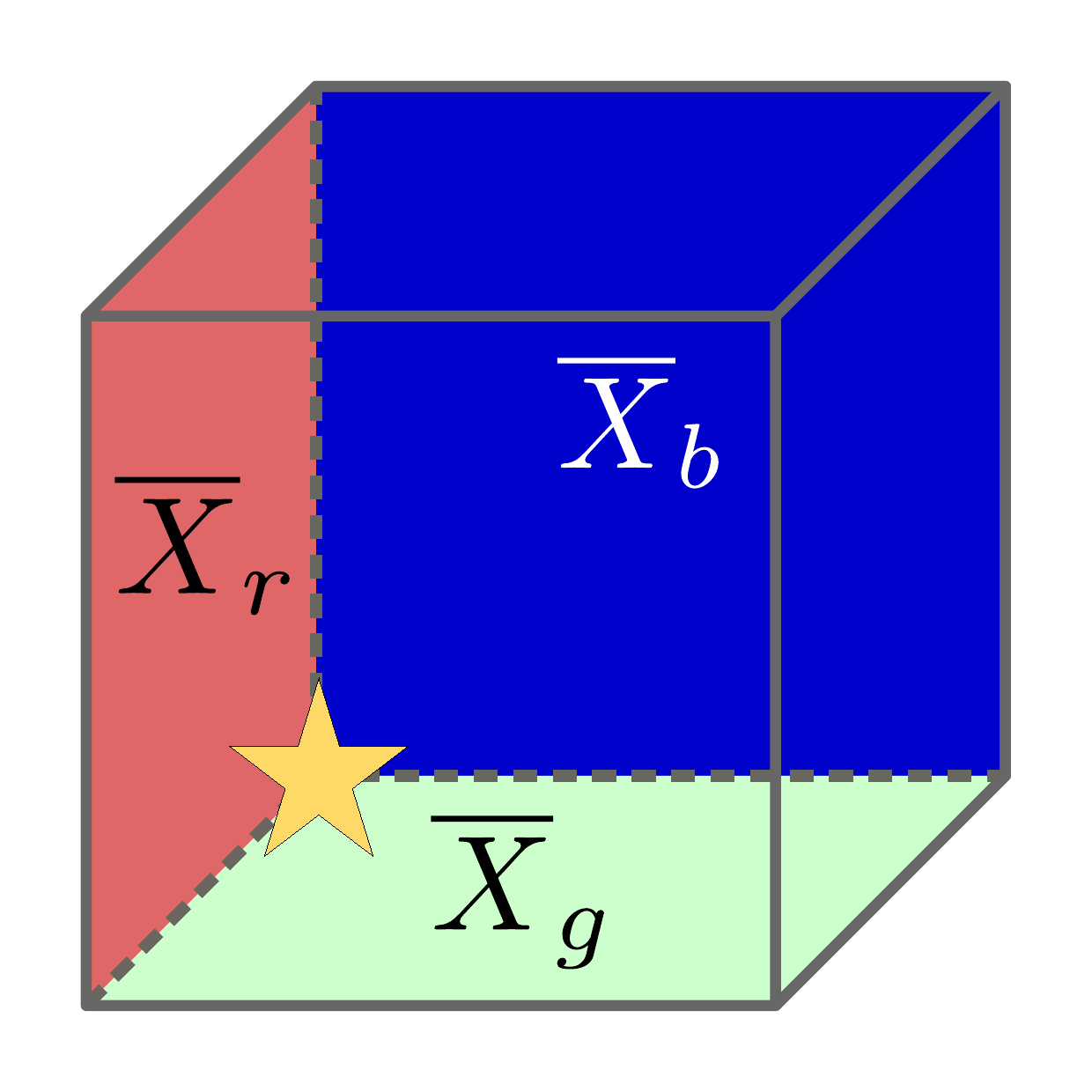}
    \caption{\label{fig:x overlaps} The overlap of the $\overline{X}_{c}$ operators which lie on the boundaries. We see that any $\overline{X}_{c}$ and $\overline{X}_{c'}$ overlap on a $\overline{Z}_{c''}$ path (a string from one $c''$-boundary to the other). The three $\overline{X}_{c}$ operators overlap at a single vertex (denoted by a star).}
\end{figure}

The next computational basis state we consider is $\ket{\overline{110}}$. The exponent for this state is $(X_{r}\cdot X_{g} \cdot v')+(X_{r}\cdot u'\cdot v')+(X_{g}\cdot t'\cdot v')+(t'\cdot u'\cdot v')$. To show that this expression has even Hamming weight we only need to show that $(X_{r}\cdot X_{g} \cdot v')$ has even Hamming weight due to Lemma~\ref{lem:parity}. To find the Hamming weight of this term we need to find the overlap of $\overline{X}_{r}$, $\overline{X}_{g}$ and any $S_{b}^{x}$ operator. $\overline{X}_{r}$ has non-trivial support on an $r$-boundary and $\overline{X}_{g}$ has non-trivial support on a $g$-boundary. These two operators overlap on a line where the $r$-boundary and the $g$-boundary meet (shown in Figure~\ref{fig:x overlaps}). This line is a string from one $b$-boundary to the other $b$-boundary \emph{i.e}.\ it has the same support as a $\overline{Z}_{b}$ operator. Logical operators and stabilizers commute so $\overline{X}_{r}$, $\overline{X}_{g}$ and any $S_{b}^{x}$ have even overlap. This proves that $|(X_{r}\cdot X_{g} \cdot v')|=0\mod 2$ for all $v'$. All the other terms in the Equation~\ref{eq:exponent expansion} expansion with two $X_{c}$ terms have even Hamming weight by the same argument. Hence, $(-1)^{|t\cdot u\cdot v|}=1$ for $\ket{\overline{110}}$, $\ket{\overline{101}}$ and $\ket{\overline{011}}$. 

Finally, for the state $\ket{\overline{111}}$ we must consider the entire expansion in Equation~\ref{eq:exponent expansion}. Due to the previous calculations in this proof and Lemma~\ref{lem:parity}, the parity of this exponent is determined by $(X_{r}\cdot X_{g}\cdot X_{b})$. This term has Hamming weight equal to the number of lattice collisions between $\overline{X}_{r}$, $\overline{X}_{g}$ and $\overline{X}_{b}$. As these three operators are defined on $r$, $g$ and $b$-boundaries respectively, they have a single lattice collision on one corner of the lattice (shown in Figure~\ref{fig:x overlaps}). Therefore $|(X_{r}\cdot X_{g} \cdot X_{b})|=1$ which implies that $(-1)^{|t\cdot u\cdot v|}=-1$ for $\ket{\overline{111}}$.

We have shown that $\overline{CCZ}$ has has the correct action on the computational basis states, namely:
\begin{equation}
\overline{CCZ}\ket{\overline{\alpha\beta\gamma}}=
    \begin{cases}
        -\ket{\overline{\alpha\beta\gamma}}\quad \alpha=\beta=\gamma=1, \\
        \quad\ket{\overline{\alpha\beta\gamma}}\quad else.
    \end{cases}
    \label{eq:Transversal CCZ}
\end{equation}
\end{proof}

\subsection{\label{app:cz} Transversal CZ}

The transversality of $CZ$ in stacked 3D surface codes follows from the structure of the $CCZ$ and $X$ operators. Consider three codes, each encoding one logical qubit, labelled with the labels $i$, $j$ and $k$. We assume that $\overline{CCZ}_{ijk}$ is a transversal gate acting as a tensor product of $CCZ$ gates at the level of the physical qubits. In addition, we assume that each $\overline{X}$ gate acts as a tensor product of $X$ gates at the level of the physical qubits. The group commutator of two operators $A$ and $B$ is defined as $K[A, B]=ABA^{\dagger}B^{\dagger}$. One can easily verify that $K[CCZ_{ijk}, X_{k}]=CZ_{ij}$. Therefore, we can implement a transversal $\overline{CZ}_{ij}$ gate by applying the sequence of logical operators $K[\overline{CCZ}_{ijk}, \overline{X}_{k}]$. If we think at the level of the physical qubits, this operator simplifies. All triples of qubits outside the support of $\overline{X}_{k}$ are acted upon by $CCZ_{ijk}CCZ^{\dagger}_{ijk}=I$ and triples of qubits in the support of $\overline{X}_{k}$ are acted upon by $K[CCZ_{ijk}, X_{k}]=CZ_{ij}$. Therefore, we can implement a transversal logical $\overline{CZ}_{ij}$ by applying $CZ$ gates at the level of the physical qubits.

In the context of our stacked 3D surface codes, the above argument implies that we can implement a logical $\overline{CZ}_{cc'}$ gate by applying $CZ$ gates to the pairs of physical qubits in $\mathcal{SC}_{c}$ and $\mathcal{SC}_{c'}$ at the vertices of one of the $c''$-boundaries (our canonical $X_{c''}$ operators are supported on the $c''$-boundaries). 

\subsection{\label{subsec:universal} Completing a universal set of gates}

To achieve universal quantum computing with a $\overline{CCZ}$ gate we only need a Hadamard gate ($H=(X+Z)/\sqrt{2}$)~\cite{Shi2003Both}. The $H$ gate is not transversal in 3D surface codes, but we can still implement it using the teleportation circuit~\cite{Zhou2000Methodology} shown in Figure~\ref{fig:teleH}. Therefore, $CCZ$ is universal if we have access to measurement and state preparation in the $X$ and $Z$ bases~\cite{Yoder2016Universal}. As long as we have access to a decoder with a threshold, we can prepare states in the $X$ basis or the $Z$ basis and we can measure qubits in the $X$ basis or the $Z$ basis. We delay discussing decoding strategies for 3D surface codes until Section~\ref{sec:arch}. We can generalize the state preparation and measurement methods used in 2D surface codes~\cite{Dennis2002Topological} to 3D surface codes. We quickly review these methods here for completeness. To measure a qubit encoded in a 3D surface code in the $Z$ basis we simply measure all of the qubits in the code in the $Z$ basis and compute the eigenvalues of all the $Z$ stabilizers. We then correct any $X$ errors implied by this syndrome using a decoder. Finally we compute the parity of a $\overline{Z}$ operator using the corrected qubit values. To measure in the $X$ basis we replace $X$ with $Z$ (and vice versa) in the procedure we have just described. To fault-tolerantly prepare a $\ket{\overline{0}}$ state we prepare each of the physical qubits in the $\ket{0}$ state. We then perform $d$ rounds of error correction (where $d$ is the code distance). To fault-tolerantly prepare $\ket{\overline{+}}$ we just replace $\ket{0}$ with $\ket{+}$ in the above procedure. 

\begin{figure}[ht]
    \includegraphics[width=0.7\linewidth]{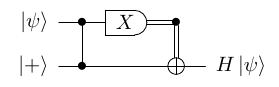}
    \caption{\label{fig:teleH} A circuit which implements a $H$ gate using state preparation, measurement and $CZ$~\protect\cite{Zhou2000Methodology}.}
\end{figure}

We can use the circuit in Figure~\ref{fig:teleH} to implement a single qubit $H$ gate in a stack of three 3D surface codes and to transfer a logical qubit between different codes in the same stack. We denote the circuit in Figure~\ref{fig:teleH} as $H_{cc'}$. This circuit takes the state $\ket{\overline{\psi}}_{c}$ to $\overline{H}\ket{\overline{\psi}}_{c'}$. Consider the initial state $\ket{\overline{\psi}}_{r}\ket{\overline{+}}_{g}\ket{\overline{+}}_{b}$. We can use sequences of $H_{cc'}$ circuits to transfer the state from one code to another or to perform a single qubit $H$ gate as follows: 
\begin{equation}
    \begin{split}
        &\ket{\overline{\psi}}_{r}\xrightarrow{H_{rg}}H\ket{\overline{\psi}}_{g}\xrightarrow{H_{gb}}\ket{\overline{\psi}}_{b}, \\
        &\ket{\overline{\psi}}_{r}\xrightarrow{H_{rg}}H\ket{\overline{\psi}}_{g}\xrightarrow{H_{gb}}\ket{\overline{\psi}}_{b}\xrightarrow{H_{br}}H\ket{\overline{\psi}}_{r}.
    \end{split}
    \label{eq:H and tele}
\end{equation}

\section{\label{sec:lattice surgery} 3D Surface Code Lattice Surgery}

We have shown how to implement a universal gate set in a single stack of three 3D surface codes. However, in a feasible architecture we also need to be able to transfer qubits between surface codes in different stacks. To accomplish this task we generalize the techniques of 2D surface code lattice surgery~\cite{Horsman2012Surface,Litinski2017Lattice,FowlerGidney2018LatticeSurgery,Litinski2018GameOfSurfaceCodes} to 3D surface codes. We note that we will reproduce some material from~\cite{Horsman2012Surface} to make our exposition clearer. Lattice surgery is a code deformation technique which allows us to merge two surface codes into a larger surface code or to split a surface code into two smaller surface codes. Lattice surgery merges and splits can be used for to transfer qubits between codes or to implement $CNOT$ gates. In related recent work, lattice surgery techniques have been extended to the Raussendorf lattice~\cite{Herr2017LatticeSurgery}, a lattice used in fault-tolerant measurement-based quantum computing~\cite{Raussendorf2006Fault}. 

There are two types of lattice surgery we can do in 3D surface codes: $X$-type and $Z$-type (corresponding to rough and smooth lattice surgery in the language of~\cite{Horsman2012Surface}). We start by presenting lattice surgery techniques for pairs of 3D surface codes before presenting a method for doing lattice surgery on a 3D surface code and a 2D surface code.

\subsection{\label{subsec:3d3dls}3D-3D lattice surgery}

We start with $X$-type lattice surgery. Consider two distance $d$ rectified cubic lattices. Each lattice supports three surface codes, $\mathcal{SC}_{c}^{(i)}$, where $c\in\{r,g,b\}$ and $i\in\{1,2\}$ indexes the two stacks. We can do an $X$-type lattice surgery merge between $\mathcal{SC}_{c}^{(1)}$ and $\mathcal{SC}_{c}^{(2)}$ by aligning $c$-boundaries of the two stacks (the rough boundaries of the two codes), preparing a layer of ancillas in the $\ket{0}$ state between the stacks and then measuring new $X$ stabilizers which join the two lattices. The product of these $X$ stabilizers is $\overline{X}_{c}^{(1)}\otimes\overline{X}_{c}^{(2)}$ so we learn this value when we perform the merge operation. There may also be new $Z$ stabilizers which we add to the stabilizer group and measure in subsequent rounds. In addition, some $Z$ stabilizers on the boundaries where the merge took place may need be modified in the new stabilizer group. The merge operation maps $\ket{\psi}_{c}\otimes\ket{\phi}_{c}\rightarrow\alpha\ket{\psi}_{c}+(-1)^{m}\beta X\ket{\psi}_{c}$, where $m$ is the outcome of the $\overline{X}_{c}^{(1)}\otimes\overline{X}_{c}^{(2)}$ measurement and $\ket{\phi}_{c}=\alpha\ket{0}+\beta\ket{1}$~\cite{Horsman2012Surface}. Any $\overline{X}$ operator for either of the two initial codes is a valid $\overline{X}$ operator for the merged code. However, to form a logical $\overline{Z}$ operator in the new code we must join logical $\overline{Z}$ operators from each of the initial codes into a single string of $Z$ operators which starts and ends at opposite $c$-boundaries. We implement an $X$-type lattice surgery split by measuring all the qubits in a layer where we want to split the lattice in the $Z$ basis. This splits the single surface code into two smaller surface codes. An $X$-type split performed on $\mathcal{SC}_{c}$ implements the following mapping: $\alpha\ket{+}_{c}+\beta\ket{-}_{c}\rightarrow\alpha\ket{++}_{c}+\beta\ket{--}_{c}$~\cite{Horsman2012Surface}. Figure~\ref{fig:3d3dls} shows an example of $X$-type lattice surgery performed on two 3D surface codes.

$Z$-type lattice surgery is analogous to $X$-type lattice surgery. To perform a $Z$-type merge on $\mathcal{SC}_{c}^{(2)}$ and $\mathcal{SC}_{c}^{(2)}$, we first align a $c'$-boundary of one stack with a $c'$-boundary of the other (this aligns the smooth boundaries of the codes). We then add a layer of ancilla qubits (all in the $\ket{+}$ state) and measure new $Z$ stabilizers which join the two lattices. There may also be new $X$ stabilizers and modified $X$ stabilizers at the join. The new $Z$ stabilizers (redundantly) tell us the value of $\overline{Z}_{c}^{(1)}\otimes\overline{Z}_{c}^{(2)}$. The merge implements the mapping $\ket{\psi}_{c}\otimes\ket{\varphi}_{c}\rightarrow a\ket{\psi}_{c}+(-1)^{m}bX\ket{\psi}_{c}$, where $m$ is the outcome of the $\overline{Z}_{c}^{(1)}\otimes\overline{Z}_{c}^{(2)}$ measurement and $\ket{\varphi}_{c}=a\ket{+}+b\ket{-}$~\cite{Horsman2012Surface}. Any $\overline{Z}$-operator of either original code is a valid $\overline{Z}$ operator of the merged code. However, the valid $\overline{X}$ operators of the merged code are membranes of $X$ operators with boundaries which span the $c'$ and $c''$-boundaries of the merged lattice. We can implement a $Z$-type split by measuring a layer of $\mathcal{SC}_{c}$ qubits in the $X$ basis. These measurements implement the following mapping: $\alpha\ket{0}_{c}+\beta\ket{1}_{c}\rightarrow\alpha\ket{00}_{c}+\beta\ket{11}_{c}$~\cite{Horsman2012Surface}. Figure~\ref{fig:3d3dls} shows an example of $Z$-type lattice surgery on two 3D surface codes.

We note that we can simultaneously implement an $X$-type merge on the $\mathcal{SC}_{c}$ codes in different stacks, a $Z$-type merges on the $\mathcal{SC}_{c'}$ codes in different stacks and a $Z$-type merge on the $\mathcal{SC}_{c''}$ codes in different stacks. To do this we prepare a layer of qubits between $c$-boundaries of the two stacks we want to merge. At every vertex in the new layer we place three qubits (one for each pair of codes), prepared in the state $\ket{0}_{c}\ket{+}_{c'}\ket{+}_{c''}$. We then modify the stabilizer groups of all three pairs of codes at once as discussed in the previous paragraphs to merge the three pairs of codes simultaneously. We can also invert this process to do a simultaneous split on all three pairs of codes. 

We illustrate 3D surface code lattice surgery with an example. Consider two $d=3$ rectified cubic lattices placed one above the other as shown in Figure~\ref{fig:3d3dls}. We add a diamond layer (see Section~\ref{subsubsec:lattice structure}) of qubits between the two lattices (vertices of the sublattice with dashed edges in Figure~\ref{fig:3d3dls}). At each vertex we add three qubits (one per code) in the state $\ket{+}_{r}\ket{0}_{g}\ket{+}_{b}$. Next we merge the stabilizer groups of $\mathcal{SC}_{c}^{(1)}$ and $\mathcal{SC}_{c}^{(2)}$, for $c\in\{r,g,b\}$. This implements a $Z$-type merge on $\mathcal{SC}_{r}^{(1)}$ and $\mathcal{SC}_{r}^{(2)}$, an $X$-type merge on $\mathcal{SC}_{g}^{(1)}$ and $\mathcal{SC}_{g}^{(2)}$, and a $Z$-type merge on $\mathcal{SC}_{b}^{(1)}$ and $\mathcal{SC}_{b}^{(2)}$. We now consider each pair of codes with the same colour separately and detail how their stabilizer groups transform. 

\begin{figure}
    \centering
    \includegraphics[width=0.6\columnwidth]{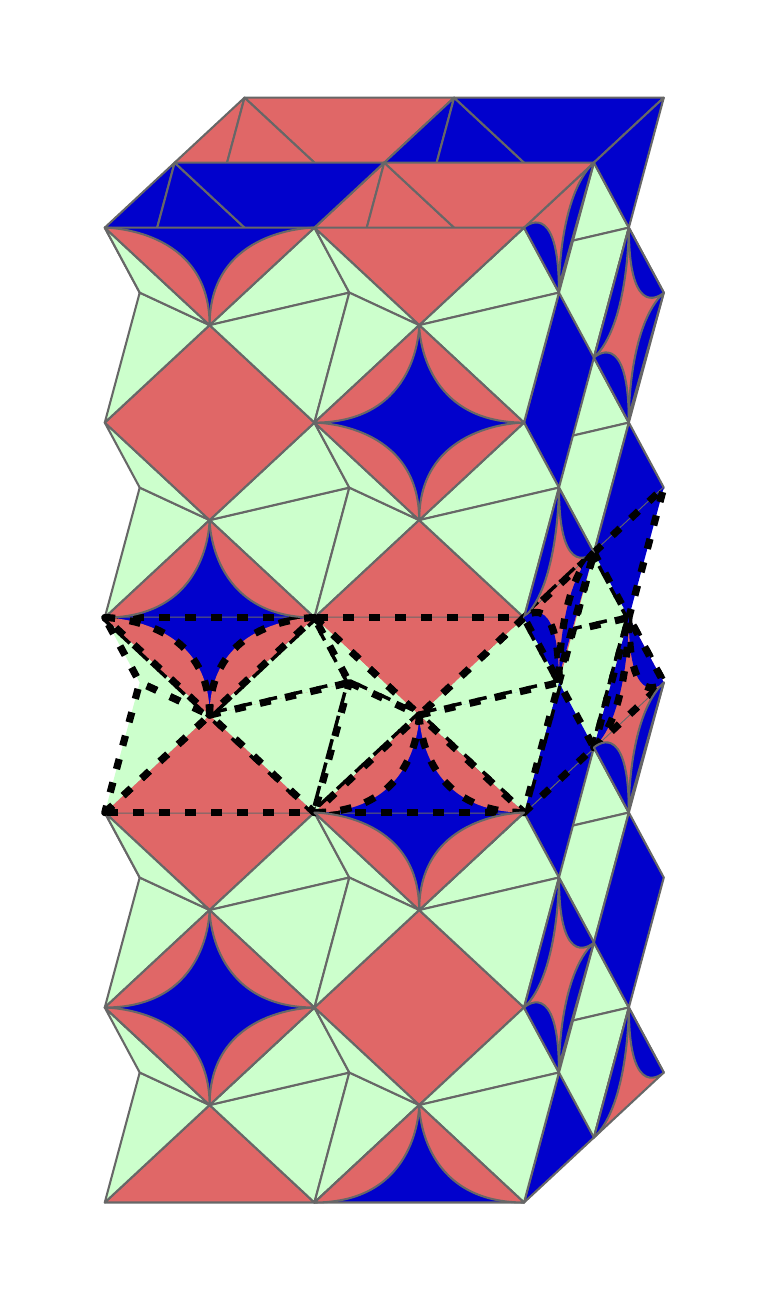}
    \caption{\label{fig:3d3dls}
    Lattice surgery in 3D surface codes. Both initial lattices (sublattices with continuous edges) support three surface codes. We prepare a layer of ancilla qubits (vertices of the sublattice with dashed edges) and measure new stabilizers (faces and cells of the sublattice with dashed edges) to merge codes of the same colour in separate stacks. To undo a merge, we simply measure the layer of ancilla qubits (vertices of the sublattice with dashed edges). In this configuration, we can do $X$-type lattice surgery on $\mathcal{SC}_{g}^{(1)}$ and $\mathcal{SC}_{g}^{(2)}$, $Z$-type lattice surgery on $\mathcal{SC}_{b}^{(1)}$ and $\mathcal{SC}_{b}^{(2)}$ and $Z$-type lattice surgery on $\mathcal{SC}_{r}^{(1)}$ and $\mathcal{SC}_{r}^{(2)}$.
    }
\end{figure}

First of all, consider $\mathcal{SC}_{g}^{(1)}$ and $\mathcal{SC}_{g}^{(2)}$. The code formed by merging these two codes has nine additional $X$ stabilizers (the complete and incomplete octahedra with dashed edges in Figure~\ref{fig:3d3dls}). The merged code also has four additional $Z$ stabilizers (the $rb$-faces parallel to the $g$-boundaries in the sublattice with dashed edges in Figure~\ref{fig:3d3dls}). Some of the $Z$ stabilizers on the boundary are also modified ($rb$-faces in Figure~\ref{fig:3d3dls} with dashed and continuous edges). In total, the merged code has 12 additional physical qubits and 13 additional stabilizer generators. The two original codes each had $n=51$ physical qubits and $n-1$ stabilizer generators so the merged code has $2n+12$ physical qubits and $2(n-1)+13$ stabilizer generators. Hence, the merged code has a single logical qubit, as required. One can also verify that the product of the new $X$ stabilizers is $\overline{X}_{g}^{(1)}\otimes\overline{X}_{g}^{(2)}$.

Next, we consider $\mathcal{SC}_{r}^{(1)}$ and $\mathcal{SC}_{r}^{(2)}$. The code formed by merging these two codes has no additional $X$ stabilizers, but some $X$ stabilizers which were present before the merge are modified ($r$-cuboctahedra and $rb$-faces on the the $b$-boundaries with dashed and continuous edges in Figure~\ref{fig:3d3dls}). The merged code has 16 new $Z$ stabilizers associated with the $gb$-faces of the new cuboctahedra ($gb$-faces in the sublattice with dashed edges in Figure~\ref{fig:3d3dls}). In addition, there are eight new $Z$ stabilizers associated with edges on the $r$-boundaries (blue (dark grey) circular segments with dashed edges in Figure~\ref{fig:3d3dls}). However, these new $Z$ stabilizers are not all independent. In the merged lattice, we have four additional complete cuboctahedra and a single additional complete octahedron when compared with the initial lattices. The stabilizers associated with the $rg$-faces of these polyhedra multiply to the identity, so we must remove a stabilizer from the list of new stabilizer generators for each new complete polyhedron. We also have two additional half octahedra whose edges and faces have associated stabilizers which multiply to the identity (see Figure~\ref{fig:halfoct redund} for an example of such a half octahedron). Therefore, we remove two more stabilizers from the list of new stabilizer generators. Finally, the stabilizers associated with the four edges of $rb$-faces on the $r$-boundaries multiply to the identity, so we must remove half of the new weight two $Z$ stabilizers from the list of new stabilizer generators. The merged code, therefore, has 13 new stabilizer generators and 12 additional qubits. Hence, the merged code encodes a single logical qubit, as required.

The details the $Z$-type lattice surgery on $\mathcal{SC}_{b}^{(1)}$ and $\mathcal{SC}_{b}^{(2)}$ are the same as the details of $Z$-type lattice surgery on $\mathcal{SC}_{r}^{(1)}$ and $\mathcal{SC}_{r}^{(2)}$ (just exchange $r$ and $b$ in the previous paragraph). To verify that the lattice surgery procedures we have described transform two surface codes into a single surface code for any code distance, all we need to do repeat the analysis of Section~\ref{subsubsec:lattice structure} for a slightly different lattice structure. We omit this analysis here as the extension is simple. In Appendix~\ref{app:lattice surgery}, we show another possible arrangement of 3D stacks which allows us to do $X$-type lattice surgery on $\mathcal{SC}_{r}^{(1)}$ and $\mathcal{SC}_{r}^{(2)}$, $Z$-type lattice surgery on $\mathcal{SC}_{g}^{(1)}$ and $\mathcal{SC}_{g}^{(2)}$ and $Z$-type lattice surgery on $\mathcal{SC}_{b}^{(1)}$ and $\mathcal{SC}_{b}^{(2)}$.

\subsection{\label{subsec:2d3dls}2D-3D lattice surgery}

We can do $Z$-type lattice surgery on a 2D surface code and a 3D surface code using procedures which are very similar to 2D surface code lattice surgery. However, performing $X$-type lattice surgery on a 2D surface code and a 3D surface code is more complex. This is because the dimension of the $\overline{Z}$ operators in 2D surface codes and 3D surface codes is the same whereas the dimension of the $\overline{X}$ operators is not. Therefore, we only discuss $Z$-type lattice surgery in this section. We start with a 3D surface code stack and a 2D surface code sheet aligned such that the 2D sheet is in the same plane as the bottom layer of the 3D stack (see Figure~\ref{fig:3d2dls}). To do a lattice surgery merge, we simply measure new $Z$ stabilizers whose product is $\overline{Z}_{2D}\otimes\overline{Z}_{3D}$. The $X$ stabilizers of both codes at the join will also be modified. Figure~\ref{fig:3d2dls} shows an example $Z$-type merge of a 3D code and a 2D code. The effect of the $Z$-type merge on the logical operators is more interesting in the 2D-3D case than the 3D-3D case. The $\overline{Z}$ operators of the original codes are valid $\overline{Z}$ operators of the merged code. However, $\overline{X}$ operators of the merged code are products of membrane operators in the 3D lattice and string operators in the 2D lattice. The merged code is therefore an example of a code with a logical operator which has 2D and 1D parts. We can implement a $Z$-type split by returning to measuring the pre-merge stabilizers.

\begin{figure}
    \centering
    \includegraphics[width=0.9\columnwidth]{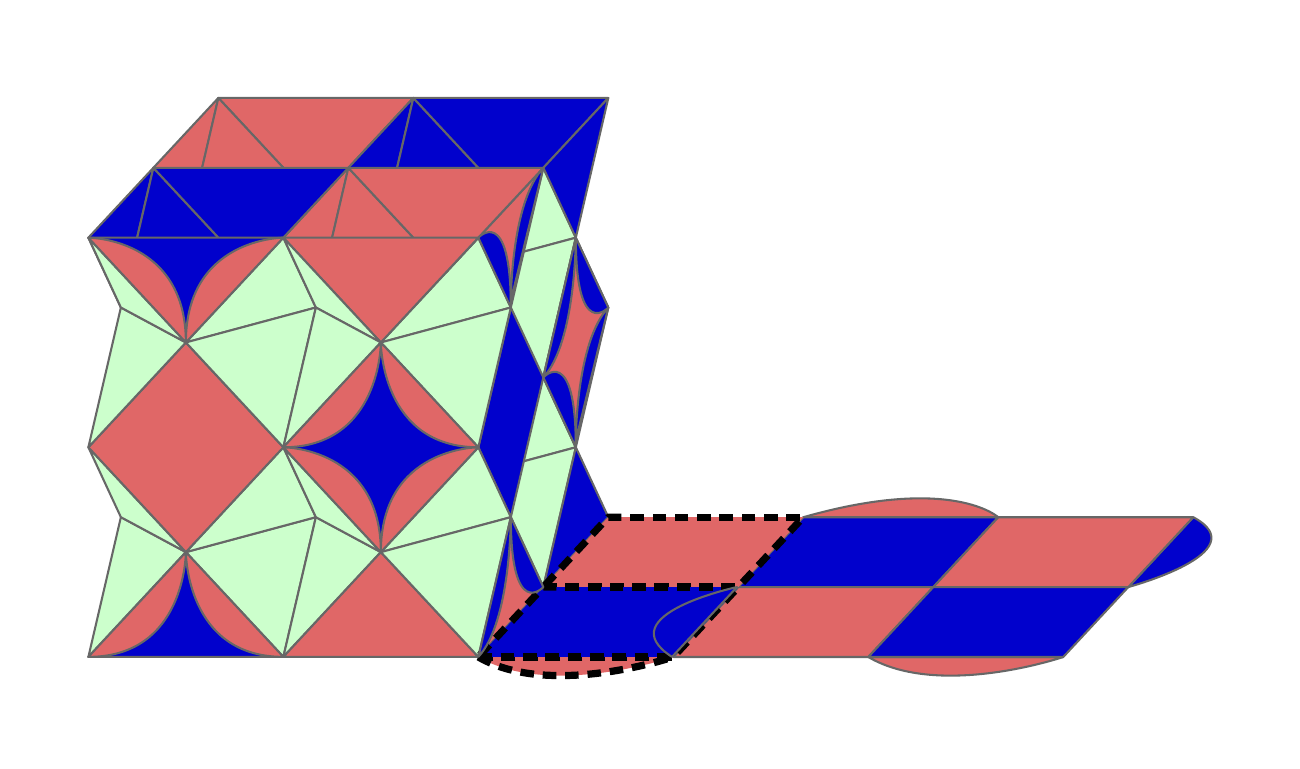}
    \caption{\label{fig:3d2dls} $Z$-type lattice surgery on 3D and 2D surface codes (sublattices with continuous edges). We associate $X$ stabilizers with $b$-faces (dark grey) and $Z$ stabilizers with $r$-faces (medium grey) in the 2D surface code. In the 3D stack we consider $\mathcal{SC}_{b}$ ($X$ stabilizers associated with $b$-cells (dark grey)). The left and right boundaries of the 2D surface code are smooth boundaries and the left and right boundaries of the stack are $r$-boundaries (smooth boundaries in $\mathcal{SC}_{b}$). To implement a lattice surgery merge between the two codes we measure two new $Z$ stabilizers ($r$-faces (medium grey) with dashed edges), whose product is $\overline{Z}_{2D}\otimes\overline{Z}_{3D}$. We also merge the weight two $X$ stabilizer on the left boundary of the 2D code with the weight three $X$ stabilizer associated with the bottom $rb$-face (medium grey) on the right boundary of the 3D code. This stabilizer is represented by the $b$-face (dark grey) with dashed edges in the Figure. To undo the merge operation we return to measuring the pre-merge stabilizers.}
\end{figure}

As we previously stated, we can use lattice surgery to implement $CNOT$ gates and to transfer qubits between different surface codes. Consider the initial state $\ket{\psi}\ket{+}$, where $\ket{\psi}=\alpha\ket{0}+\beta\ket{1}$. Implementing a $Z$-type merge between the two qubits followed by a $Z$-type split produces the state $\alpha\ket{00}+\beta\ket{11}$. If we measure the first qubit in the $X$ basis, $\ket{\psi}$ is transferred to the second qubit (up to a $Z$ correction). The lattice surgery $CNOT$ procedure is similar to the procedure we have just described. Consider the state $\ket{\psi}\ket{+}\ket{\phi}$. To perform a $CNOT$ with $\ket{\psi}$ as the control and $\ket{\phi}$ as the target we first do a $Z$-type merge of $\ket{\psi}$ and $\ket{+}$ followed by a $Z$-type split. The second step is to do an $X$-type merge of $\ket{\phi}$ and $\ket{+}$ followed by an $X$-type split. There are also some single qubit corrections that may be necessary which we have omitted. For the full details of this $CNOT$ procedure see~\cite{Horsman2012Surface}. We can also use a chain of lattice surgery operations to perform a multi-target $CNOT$ gate as shown in~\cite{Herr2017Lattice}.

We emphasize that the lattice surgery procedure we have explained in this Section is one of many code deformation procedures which we could use to transfer information from a 3D surface code to a 2D surface code. For example, in Appendix~\ref{app:lattice surgery}, we give a different implementation of 2D-3D surface code lattice surgery. It is also possible to transfer information using a `code switching' deformation (in the spirit of~\cite{Bombin2016Dimensional}), where we transform a 3D surface code into a 2D surface code by measuring all but one layer of physical qubits in the $X$ basis. Finally, we note that the $Z$-type lattice surgery operations we have described can also be used to do $Z$-type lattice surgery between two 3D surface codes.

\section{\label{sec:arch} 3D Surface Code Architectures}

In this section, we propose two universal quantum computing architectures which use 3D surface codes. But first we discuss decoding 3D surface codes. 

\subsection{\label{subsec:dec} Decoding 3D surface codes}

Estimating the error thresholds of 3D surface codes is beyond the scope of this article. Instead we discuss possible decoding strategies for 3D surface codes and reason about the error thresholds we might expect. The 3D surface code is interesting from a decoding point of view because of the asymmetry between membrane-like $X$ errors and string-like $Z$ errors. This asymmetry means that different decoding strategies may be needed for $X$ and $Z$ errors.  

We can upper bound the error thresholds of topological codes by relating the codes to condensed matter models~\cite{Dennis2002Topological}. The phase diagram of the condensed matter model will then give us an estimate of the optimal error threshold of the code. Using this technique, the optimal error threshold of 2D surface codes has been estimated to be $\approx 11\%$~\cite{Honecker2001Universality}, for a stochastic noise model where $X$ and $Z$ errors happen independently with probability $p$ and measurements are perfect. For the 3D surface code, the optimal error threshold for the same noise model is $\approx 3.3\%$. We can break this error threshold down further: for a noise model where $Z$ ($X$) errors happen with probability $p$ and measurements are perfect the error threshold is $p_{th}^{Z}\approx 3.3\%$~\cite{Ohno2004Phase} ($p_{th}^{X}\approx 23.5\%$~\cite{Ozeki1998Multicritical,Hasenbusch2007Magnetic}). 

The above error thresholds will not be achievable in practice due to measurement errors. For the 2D surface code, simulations of the full syndrome extraction circuits indicate an error threshold between $0.5\%$ and $1.1\%$ (see~\cite{Stephens2014Fault} and references therein). To the best of our knowledge, no similar simulation has been performed for 3D surface codes. However, we anticipate that the 3D surface code error threshold will be lower than the corresponding 2D surface code error threshold because of the higher dimensionality of the lattice and the larger weight stabilizers in 3D surface codes. 

The most popular 2D surface code decoder is a minimum-weight perfect-matching (MWPM) algorithm~\cite{Edmonds1965Paths,Kolmogorov2009Blossom,Fowler2012Surface}. We could use MWPM to decode $Z$ errors in cubic surface codes and tetrahedral-octahedral surface codes. Alternatively, we could use the recently proposed Union-Find decoder~\cite{Delfosse2017UnionFind}, which has slightly worse performance than MWPM, but a much faster runtime. There are a number of approaches we could take to decoding membrane-like $X$ errors in 3D surface codes. Duivenvoorden \emph{et al}.\ estimated the $X$ error threshold of 3D cubic surface codes using an efficient renormalization decoder~\cite{Duivenvoorden2017Renormalization}. They found an $X$ error threshold of $p_{th}^{X}=17.2\pm 1\%$ for an error model with perfect measurements. It would be interesting to generalize Duivenvoorden \emph{et al}.'s decoder to non-cubic surface codes such as tetrahedral-octahedral surface codes. Another option for decoding membrane-like $X$ errors in 3D surface codes is to use a generalisation of Toom's rule as the decoding algorithm~\cite{Toom1980Stable,Dennis2002Topological,Kubica2018Cellular,Kubica2018Thesis}. For such a decoder, Kubica estimated an $X$ error threshold of $p_{th}^{X}\approx 2\%$ for 3D surface codes with periodic boundaries (3D toric codes)~\cite{Kubica2018Thesis}. This error threshold is for a noise model where $X$ errors and measurement errors both occur with probability $p$. In future work, we intend to extend this result to 3D surface codes with boundaries.

\subsection{\label{subsec:hybrid} Hybrid 2D-3D surface-code architecture}

In this section, we present a hybrid 2D-3D surface code architecture based on~\cite{Horsman2012Surface}. In our hybrid architecture, the main component is a sheet of 2D surface code patches. Lattice surgery allows us to do $CNOT$ gates between different patches. We can also do Hadamard gates easily as explained in~\cite{Horsman2012Surface}. We use 3D surface codes as $CCZ$ state ($\ket{CCZ}=CCZ\ket{+++}$) factories in our hybrid architecture, replacing the magic state distillation used in the original architecture. We can fault-tolerantly create $CCZ$ states in a 3D surface code stack as long as we have a decoder with an error threshold. We use $Z$-type lattice surgery to transfer $CCZ$ states from a stack of 3D surface codes into the sheet of 2D surface codes. There is some subtlety involved in transferring three logical qubits from a single 3D surface code stack to a 2D surface code sheet so we describe this procedure now. We consider a 3D surface code stack which can interface with a single 2D surface code. This means that we can only transfer encoded states from one of the three 3D surface codes (say $\mathcal{SC}_{r}$) to the 2D surface code. Other configurations are possible, but we will concentrate on this most basic configuration. We refer to the logical qubit encoded in $\mathcal{SC}_{c}$ as the $c$-qubit. Assume that we have prepared $CCZ$ state in the 3D surface code stack. We can transfer the $r$-qubit to the 2D surface code easily using $Z$-type lattice surgery (see Section~\ref{subsec:2d3dls}). Next we want to transfer the $g$-qubit. We must transfer the state of the $g$-qubit in the stack to the $r$-qubit first. However, we need two qubits in the stack to be ancillas in order to do this (see Equation~\ref{eq:H and tele}), and only one is available. Instead, we transfer the state of the $g$-qubit to the $r$-qubit, with a $H$ gate applied (see Equation~\ref{eq:H and tele}). Next, we transfer this state to the 2D surface code where we can undo the $H$ gate. Finally, we transfer the state of the $b$-qubit to the $r$-qubit (we now have enough ancillas) and transfer this state to the 2D surface code. 

Once we have an encoded $CCZ$ state in our sheet of 2D surface codes, we can implement a $CCZ$ gate on any three qubits using a state injection circuit and some SWAP gates. Figure~\ref{fig:ccz inject} shows a state injection circuit containing Pauli, $H$ and $CNOT$ gates that uses one $CCZ$ state to implement a $CCZ$ gate. We constructed this circuit using the methodology described in~\cite{Zhou2000Methodology}. To summarize, we have explained how to implement the universal gate set $\{X, Z, H, CNOT, CCZ\}$ in our hybrid 2D-3D surface code architecture. 

\begin{figure}[ht]
    \centering
    \includegraphics[width=\linewidth]{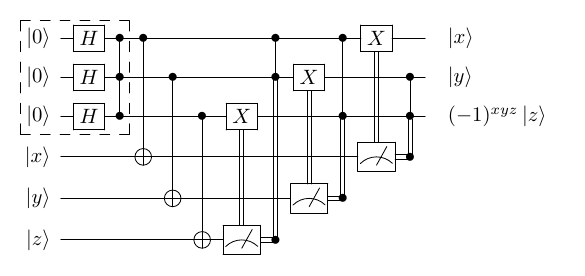}
    \caption{\label{fig:ccz inject} A circuit that consumes one $CCZ$ state (dashed box) to implement a $CCZ$ gate on the bottom three qubits. We note that $H_{t}\cdot CNOT_{ct}\cdot H_{t}=CZ$, where $c$ and $t$ refer to the control and target qubits.}
\end{figure}

\subsection{\label{app:3D arch} 3D surface-code architecture}

In this section, we present a quantum computing architecture where every qubit is encoded in a 3D surface code. We consider a large rectified cubic lattice with 3D `patches' each containing three logical qubits. Each patch is a distance $d$ rectified cubic lattice adjacent to six identical patches. We can do lattice surgery on adjacent patches as described in Section~\ref{subsec:3d3dls}. We adopt a Euclidean coordinate system and associate each of the axes with a particular colour. For example, we associate the $x$-direction with $r$ which implies that we can do $X$-type lattice surgery on $r$-qubits (qubits encoded in $\mathcal{SC}_{r}$ codes) in patches which are adjacent in the $x$-direction. Similarly, we can do $X$-type lattice surgery on $g$-qubits ($b$-qubits) which are adjacent in the $y$-direction ($z$-direction). This means that we can transfer a qubit from one patch to any of its adjacent patches using $X$-type or $Z$-type lattice surgery. 

In our architecture we use half of the patches in the lattice as `data patches' and half as `ancilla patches'. Data patches contain three logical data qubits and ancilla patches contain three logical ancilla qubits. We can do $CNOT$ gates between any two qubits in data patches which are adjacent to the same ancilla patch using lattice surgery. If the two data qubits have different colours then we need to use two logical qubits in the ancilla patch during the procedure. For example, imagine we want to do a $CNOT$ between the $r$-qubit (control) and $g$-qubit (target) in the same data patch. First of all, we do a $Z$-type merge of the $r$-qubit in the data patch and the $r$-qubit in an adjacent ancilla patch. We then undo this merge with a $Z$-type split. Next, we transfer the state of the $r$-qubit in the ancilla patch to the $g$-qubit in the same ancilla patch, using the procedure in Equation~\ref{eq:H and tele}. The next step is to do an $X$-type merge of the $g$-qubits in the data patch and the ancilla patch. Finally, we undo this merge with an $X$-type split and apply some Pauli corrections. The procedure we have just described implements a $CNOT$ gate between the $r$-qubit and $g$-qubit in the same data patch. 

$CNOT$ gates allow us to swap any two data qubits in data patches which are adjacent to the same ancilla patch. As we have previously shown, we can transversally implement $CZ$ and $CCZ$ in a single data patch. Finally, we can do a $H$ gate on a single qubit in a data patch by the following method. We first transfer the qubit to an adjacent ancilla patch using lattice surgery. Next we do a single qubit $H$ using the procedure in Equation~\ref{eq:H and tele} before transferring the qubit back to its original data patch. In the architecture we have just described we can swap arbitrary data qubits and implement a universal gate set in each data patch. $CCZ$ gates can be performed in parallel on all data qubits, $CZ$ gates can be performed in parallel on two thirds of the data qubits and $H$ gates can be performed in parallel on a third of the data qubits. This architecture requires no magic state distillation or state injection.

\section{\label{sec:discuss} Discussion}

In this article, we introduced the rectified picture of 3D surface codes. We used the rectified picture to analyse stacks of three 3D surface codes, showing that $CCZ$ is transversal in these codes. In addition, we detailed 3D surface code architectures which allow us to do universal quantum computing without magic state distillation.

As we mentioned in Section~\ref{sec:intro}, the large resource cost of magic state distillation has motivated research into alternative implementations of non-Clifford gates in topological codes. To reason about the resource scaling of different architectures we use a spacetime overhead metric. Roughly speaking, an architecture which requires $n$ physical qubits and $d$ rounds of syndrome extraction per operation has a spacetime overhead of $nd$. 2D surface code architectures and 3D gauge color code architectures have a similar spacetime overhead scaling. Distance $d$ 2D surface codes have $O(d^{2})$ physical qubits and require $O(d)$ rounds of syndrome extraction to cope with measurement errors. Distance $d$ 3D gauge color codes have $O(d^{3})$ physical qubits but only require $O(1)$ rounds of syndrome extraction. The structure of the error syndrome gives us information which we can use to diagnose measurement errors immediately. That is, 3D gauge color codes can be decoded in a single-shot fashion~\cite{Bombin2015GaugeCC,Bombin2015Single}. If we want to assess the resource scaling of our 3D surface code architectures compared with magic state distillation architectures, we need to understand 3D surface code decoding in more detail. A distance $d$ 3D surface code requires $O(d^{3})$ physical qubits but an unknown number of rounds of syndrome extraction. Membrane-like $X$ errors in 3D surface codes without boundaries can be decoded using a single-shot cellular automaton decoder~\cite{Kubica2018Thesis,Kubica2018Thesis}. However, it seems unlikely that we will be able to use a single-shot decoder to decode string-like $Z$ errors in 3D surface codes. Nevertheless, due to the links between surface codes and color codes, it may be possible to construct a `3D gauge surface code' where single-shot error correction is possible for both $X$ and $Z$ errors. 

It will also be important to estimate the numerical value of the error threshold for both cubic surface codes and tetrahedral-octahedral surface codes. This is because the resources required in a particular architecture depend strongly on the value of the error threshold. The error threshold of the gauge color code has been estimated to be $\approx 0.31\%$~\cite{Brown2016Gauge}, for an error model where qubit errors and measurement errors occur with the same probability. We would expect to observe a smaller error threshold if we were to simulate the full syndrome extraction circuits. Therefore, even with the similar resource scaling, we anticipate that 3D gauge color code architectures would require more physical qubits than 2D surface code architectures which use magic state distillation (with current qubit technologies). However, we should note that much more work has gone into optimizing 2D surface code architectures than gauge color code architectures, so an error threshold of $p_{th}\approx 0.31\%$ for gauge color codes may be pessimistic. In future work, we plan to investigate decoding 3D surface codes on both cubic and tetrahedral-octahedral lattices. Once we have estimates of the error thresholds we will be able to definitively compare the resources required by 3D surface code architectures and magic state distillation architectures. It is also interesting to consider an alternative architecture where 3D surface codes and magic state distillation are combined. For example, we could use 3D surface codes to prepare reasonably high fidelity $CCZ$ states which we would then feed in to a magic state distillation protocol (\emph{e.g}.~\cite{Paetznick2013Universal}). This would remove the need for multiple rounds of magic state distillation and could therefore lead to reduced resource overheads in some scenarios. There are also alternative 3D surface code architectures we could consider. One of the most popular approaches to 2D surface code quantum computing is to encode logical qubits as pairs of defects (stabilizers which have been turned off) and braid defects to perform logical gates~\cite{Fowler2012Surface}. It should be possible to generalize this approach to 3D surface codes. 

It seems likely that both the rectified picture and code concatenation transformations could be generalized to higher dimensional ($D\geq 4$) surface codes. These generalisations could give us some insight into the structure and transversal gates of higher dimensional surface codes. Most importantly, the question of whether magic state distillation or transversal gates in 3D topological codes is the best method for promoting 2D topological code architectures to universality remains open. We hope that our work contributes towards answering this question. 

\begin{acknowledgments}
    The authors would like to thank Hussain Anwar, Earl Campbell, Alex Kubica and Paul Webster for helpful discussions. We thank the anonymous referees for helpful comments, and for pointing out a simpler proof of the transversality of $CZ$ in stacked 3D surface codes. MV is supported by the EPSRC (grant number EP/L015242/1).
\end{acknowledgments}

\appendix

\section{\label{app:d2 code} Explicit construction of the $d=2$ 3D surface code stack}

Here, we detail list the stabilizer generators and logical operators of three surface codes in the $d=2$ rectified cubic stack. Figure~\ref{fig:d2 cube} shows the $d=2$ rectified cubic lattice. Each code in the stack is a [[12,1,2]] 3D surface code. We label the physical qubits in each code as shown in Figure~\ref{fig:d2 cube}. $P_{i}$ denotes a Pauli operator acting on qubit $i$. 

The stabilizer generators of $\mathcal{SC}_{r}$ are: 
\begin{equation}
    \begin{gathered}
        X_{5}X_{6}X_{7}X_{8}X_{9}X_{10}X_{11}X_{12}, \\
        X_{1}X_{3}X_{5}, 
        X_{2}X_{4}X_{7}, \\
        Z_{6}Z_{9}, 
        Z_{6}Z_{10}, \\
        Z_{8}Z_{11}, 
        Z_{8}Z_{12}, \\
        Z_{1}Z_{5}Z_{6}, 
        Z_{2}Z_{6}Z_{7}, \\
        Z_{4}Z_{7}Z_{8}, 
        Z_{3}Z_{5}Z_{8}.
    \end{gathered}
    \label{eq:scr d2 stabs}
\end{equation}
Example $\mathcal{SC}_{r}$ logical operators are $\overline{Z}_{r}=Z_{1}Z_{3}$ and $\overline{X}_{r}=X_{3}X_{4}X_{8}X_{11}X_{12}$. 

The stabilizer generators of $\mathcal{SC}_{g}$ are: 
\begin{equation}
    \begin{gathered}
        X_{1}X_{5}X_{6}X_{9}, 
        X_{2}X_{6}X_{7}X_{10}, \\
        X_{3}X_{5}X_{8}X_{11}, 
        X_{4}X_{7}X_{8}X_{12}, \\
        Z_{1}Z_{3}Z_{5}, 
        Z_{1}Z_{2}Z_{6}, \\
        Z_{2}Z_{4}Z_{7}, 
        Z_{3}Z_{4}Z_{8}, \\
        Z_{5}Z_{9}Z_{11}, 
        Z_{6}Z_{9}Z_{10}, \\
        Z_{7}Z_{10}Z_{12}.
    \end{gathered}
    \label{eq:scg d2 stabs}
\end{equation}
Example $\mathcal{SC}_{g}$ logical operators are $\overline{Z}_{g}=Z_{1}Z_{9}$ and $\overline{X}_{g}=X_{1}X_{2}X_{3}X_{4}$. 

The stabilizer generators of $\mathcal{SC}_{b}$ are: 
\begin{equation}
    \begin{gathered}
        X_{1}X_{2}X_{3}X_{4}X_{5}X_{6}X_{7}X_{8}, \\
        X_{6}X_{9}X_{10}, 
        X_{8}X_{11}X_{12}, \\
        Z_{1}Z_{5}, 
        Z_{3}Z_{5}, \\
        Z_{2}Z_{7}, 
        Z_{4}Z_{7}, \\
        Z_{5}Z_{8}Z_{11}, 
        Z_{5}Z_{6}Z_{9}, \\
        Z_{6}Z_{7}Z_{10}, 
        Z_{7}Z_{8}Z_{12}.
    \end{gathered}
    \label{eq:scb d2 stabs}
\end{equation}
Example $\mathcal{SC}_{b}$ logical operators are $\overline{Z}_{b}=Z_{1}Z_{2}$ and $\overline{X}_{b}=X_{1}X_{3}X_{5}X_{9}X_{11}$. 

\begin{figure}
    \includegraphics[width=0.8\columnwidth]{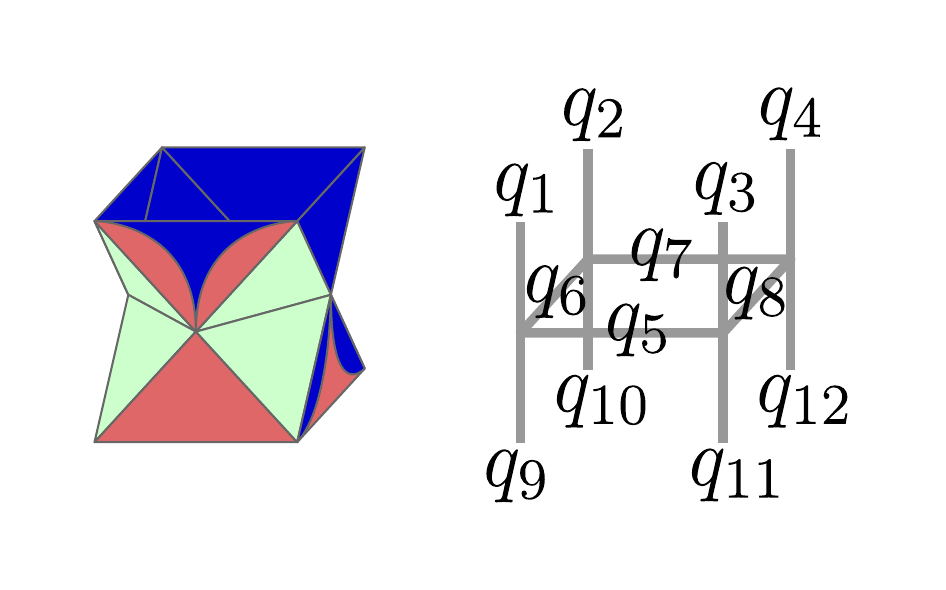}
    \caption{\label{fig:d2 cube} The $d=2$ rectified cubic lattice (left) and the $\mathcal{SC}_{g}$ primal lattice in the Kitaev picture (right), with qubit labels. In the Kitaev picture, qubits are placed on primal lattice edges, $X$ stabilizers are associated with primal lattice vertices and $Z$ stabilizers are associated with primal lattice faces. $\mathcal{SC}_{r}$ and $\mathcal{SC}_{b}$ also have physical qubits at the same locations as the labelled $\mathcal{SC}_{g}$ physical qubits. Therefore, we use the same label to refer to qubits in different codes that occupy the same position.}
\end{figure}

\section{\label{app:parallelepiped} Alternative rectified cubic lattice}

We can construct an alternative family of stacked 3D surface codes by choosing lattices with different boundaries. The lattices in this family have the global structure of parallelepipeds so we refer to them as parallelepiped lattices. Parallelepiped lattices have two boundaries which slice layers of cuboctahedra in half, like the lattices we discussed in the main text. However, parallelepiped lattices do not have boundaries which slice between layers of cuboctahedra. Instead they have boundaries which slice one colour of cuboctahedra in half and leave the other colour of cuboctahedra intact. Figure~\ref{fig:para} shows a $d=3$ parallelepiped lattice. We define three surface codes on parallelepiped lattices by associating $X$ stabilizers with $c$-cells and $Z$ stabilizers with $c'c''$-faces. We must also add some stabilizers on the boundaries to ensure that the boundaries have the correct properties (red (medium grey) and blue (dark grey) circular segments in Figure~\ref{fig:para}). The family of surface codes defined on parallelepiped lattices has the same distance as the family of codes we discussed in the main text. However, the parallelepiped lattices have more physical qubits per logical qubit and are more complex to tessellate. 

\begin{figure}
    \includegraphics[width=0.7\columnwidth]{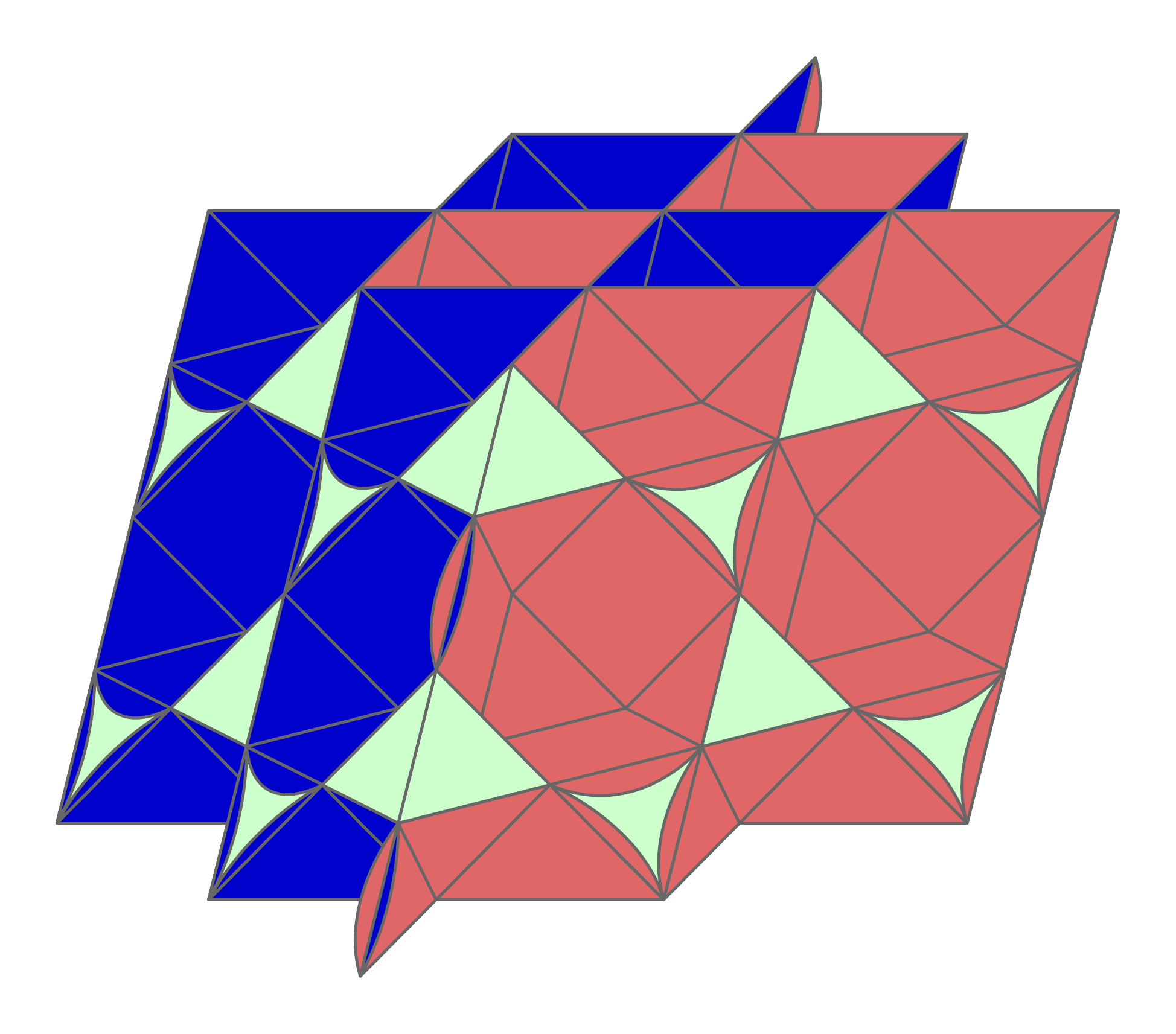}
    \caption{\label{fig:para} A lattice from an alternative family of rectified cubic lattices which can support three 3D surface codes. The top and bottom boundaries are the same type as the top and bottom boundaries in Figure~\ref{fig:extra stabs}. The right boundary slices $r$-cuboctahedra (medium grey) in half and leaves $b$-cuboctahedra (dark grey) intact. The left boundary slices $b$-cuboctahedra (dark grey) in half and leaves $r$-cuboctahedra (medium grey) intact. 
    }
\end{figure}

\section{\label{app:lemma} Proof of Lemma~\ref{lem:parity}}

Lemma~\ref{lem:parity} was required in the proof of the transversality of $CCZ$ for stacked 3D surface codes. We restate it here:
\begin{customlemma}{2}
    Given a finite set of $k$ binary vectors $\{a_{j}\}$ with the same length, the parity of their sum is equal to the sum of their parities. 
    \label{lem:parity app}
\end{customlemma}

\begin{proof}
    An equivalent statement of the Lemma~\ref{lem:parity} is
    \begin{equation}
        \sum_{j=1}^{k}|a_{j}|-|\sum_{j=1}^{k}a_{j}|=2t, 
        \label{eq:lemma}
    \end{equation}
    where $t$ is a positive integer and $|a_{j}|$ denotes the Hamming weight of $a_{j}$. We prove Lemma~\ref{lem:parity} by induction. Consider the $k=2$ case. We have 
    \begin{equation}
        |a_{1}+a_{2}|=|a_{1}|+|a_{2}|-2\mathcal{O}(a_{1},a_{2}),
        \label{eq:base case}
    \end{equation}
    where $\mathcal{O}(a,b)$ is the overlap of $a$ and $b$ \emph{i.e}.\ the number of positions where both $a$ and $b$ are equal to one. Rearranging, we have:
    \begin{equation}
        |a_{1}|+|a_{2}|-|a_{1}+a_{2}|=2\mathcal{O}(a_{1},a_{2}).
        \label{eq:base case rearrange}
    \end{equation}
    Now assume Equation~\ref{eq:lemma} is valid for the $k=n$ case. Consider the $k=n+1$ case:
    \begin{equation}
        \begin{split}
        |\sum_{j=1}^{n+1}a_{j}|&=|\sum_{j=1}^{n}a_{j}|+|a_{n+1}|-2\mathcal{O}\left(\sum_{j=1}^{n}a_{j},a_{n+1}\right) \\
        &=\sum_{j=1}^{n}|a_{j}|-2t+|a_{n+1}|-2\mathcal{O}\left(\sum_{j=1}^{n}a_{j},a_{n+1}\right) \\
        &=\sum_{j=1}^{n+1}|a_{j}|-2\left(t+\mathcal{O}\left(\sum_{j=1}^{n}a_{j},a_{n+1}\right)\right).
        \end{split}
        \label{eq:n+1 case}
    \end{equation}
\end{proof}

\section{\label{app:2D cz} Transversal CZ in 2D surface codes}

In this appendix we show that $CZ$ is transversal for 2D surface codes. Consider a 2D surface code lattice in the rotated picture with faces coloured $r$ and $b$ (\emph{e.g}.\ Figure~\ref{fig:2dsc}). We define a stack of two 2D surface codes on the same lattice. Similarly to the rectified picture of 3D surface codes, we place two physical qubits at each vertex of the lattice (one per code). In the first surface code, $\mathcal{SC}_{1}$, we associate $X$ stabilizers with $r$-faces and $Z$ stabilizers with $b$-faces. In the second surface code, $\mathcal{SC}_{2}$, we associate $X$ stabilizers with $b$-faces and $Z$ stabilizers with $r$-faces. To show that $CZ$ is transversal for this stack of codes, we need to find a transversal operator that implements the following mapping at the logical level:
\begin{equation}
    \begin{split}
        I_{1}I_{2}&\xrightarrow{CZ}I_{1}I_{2}, \\
        Z_{j}&\xrightarrow{CZ}Z_{j}, \\
        X_{j}&\xrightarrow{CZ}X_{j}Z_{k}\quad j\neq k,
    \end{split}
    \label{eq:CZ}
\end{equation}
where $j,k\in\{1,2\}$. 

Consider the action of the transversal operator $\overline{CZ}=CZ^{\otimes n}$, where $n$ is the number of vertices in the lattice and the $CZ$ gates act on the pairs of qubits at each vertex. Our $\overline{CZ}$ operator will leave logical $\overline{Z}_{j}$ operators invariant as these operators consist entirely of $Z$ operators. $\overline{CZ}$ will map logical $\overline{X}_{j}$ operators to $\overline{X}_{j}\overline{Z}_{k}$ because $\overline{X}_{j}$ operators consist entirely of $X$ operators and $\overline{X}_{j}$ operators in one code have the same support as $\overline{Z}_{k}$ operators in the other code. $\overline{CZ}$ has no effect on the $Z$ stabilizers of either code but it maps $X$ stabilizers in one code to a tensor product of the original $X$ stabilizer and a $Z$ stabilizer in the other code. To see this, consider an $X$ stabilizer associated with a $r$-face $f_{r}$ in $\mathcal{SC}_{1}$. Under the action of $\overline{CZ}$, this operator is mapped the tensor product of itself and a product of $Z$ operators acting on the $\mathcal{SC}_{2}$ qubits at the vertices of $f_{r}$. This is nothing more than a $\mathcal{SC}_{2}$ $Z$ stabilizer. Therefore, $\overline{CZ}$ maps the logical identity to the logical identity. We have shown that $\overline{CZ}=CZ^{\otimes n}$ acts as a logical $\overline{CZ}$, implementing the mapping described in Equation~\ref{eq:CZ} at the logical level.

\begin{figure}
    \includegraphics[width=0.9\columnwidth]{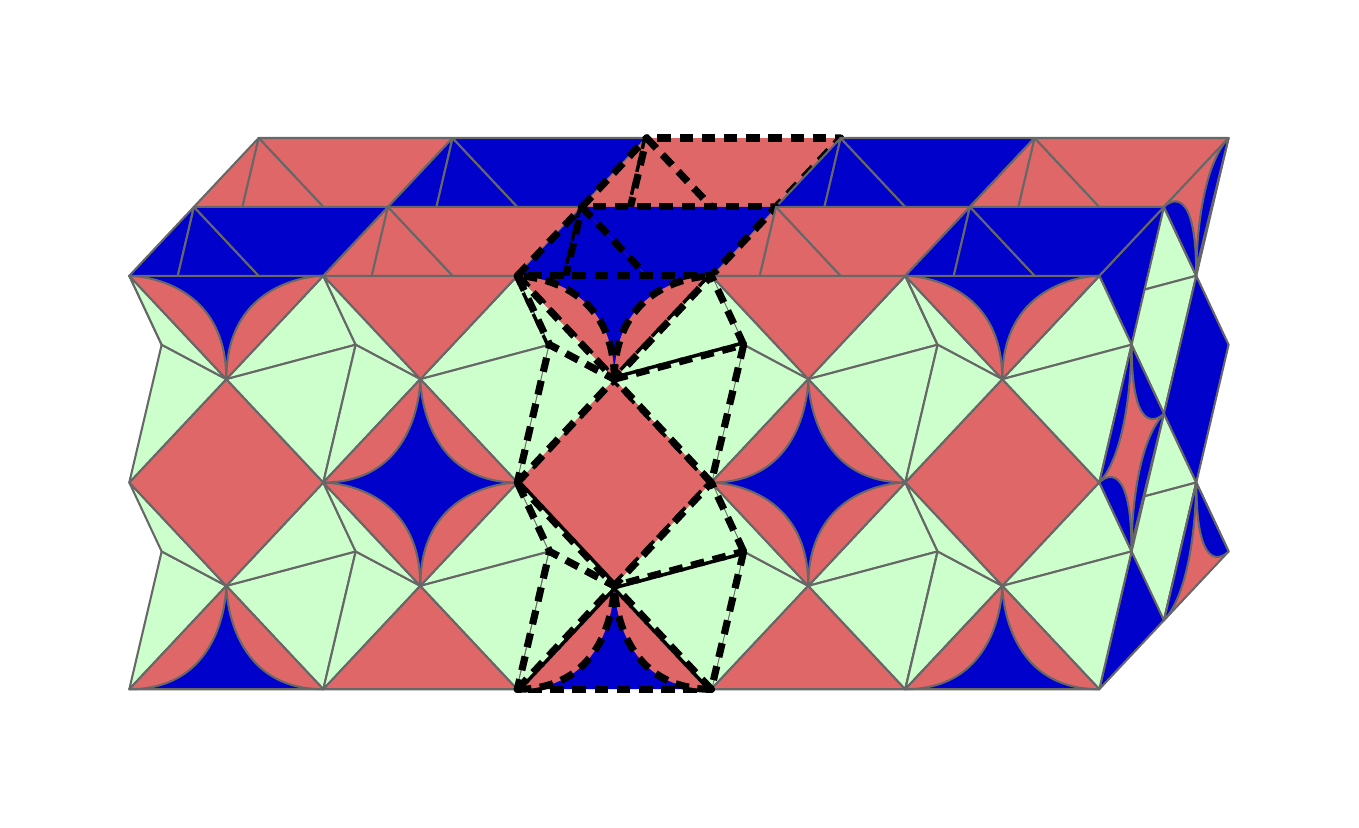}
    \caption{\label{fig:app3d3dls} 
    This configuration of lattices allows us to do $X$-type lattice surgery on $\mathcal{SC}_{r}^{(1)}$ and $\mathcal{SC}_{r}^{(2)}$, $Z$-type lattice surgery on $\mathcal{SC}_{g}^{(1)}$ and $\mathcal{SC}_{g}^{(2)}$ and $Z$-type lattice surgery on $\mathcal{SC}_{b}^{(1)}$ and $\mathcal{SC}_{b}^{(2)}$. Six ancilla qubits are required to do the lattice surgery merges for each pair of codes. The additional (and modified) stabilizers present in the merged codes are associated with the elements of the sublattice with dashed edges. 
    }
\end{figure}

\section{\label{app:lattice surgery} Additional lattice surgery examples}

In this appendix, we give further examples of 3D surface code lattice surgery. First, we consider two 3D surface code stacks, which we denote as $\mathcal{SC}_{c}^{(i)}$, where $c\in\{r,g,b\}$ denotes the colour of the $X$ stabilizers and $i\in\{1,2\}$ indexes the stack. Figure~\ref{fig:app3d3dls} shows a configuration which allows us to do $X$-type lattice surgery on $\mathcal{SC}_{r}^{(1)}$ and $\mathcal{SC}_{r}^{(2)}$. With this lattice configuration we can also do $Z$-type lattice surgery on $\mathcal{SC}_{b}^{(1)}$ and $\mathcal{SC}_{b}^{(2)}$, and $Z$-type lattice surgery on $\mathcal{SC}_{g}^{(1)}$ and $\mathcal{SC}_{g}^{(2)}$. Figure~\ref{fig:app3d2dls} shows a configuration of lattices which allows us to do lattice surgery on a 3D surface code and a 2D surface code. Unlike the lattice surgery example shown in Figure~\ref{fig:3d2dls}, this configuration requires ancilla qubits and hence is less efficient. 

\begin{figure}
    \includegraphics[width=0.9\columnwidth]{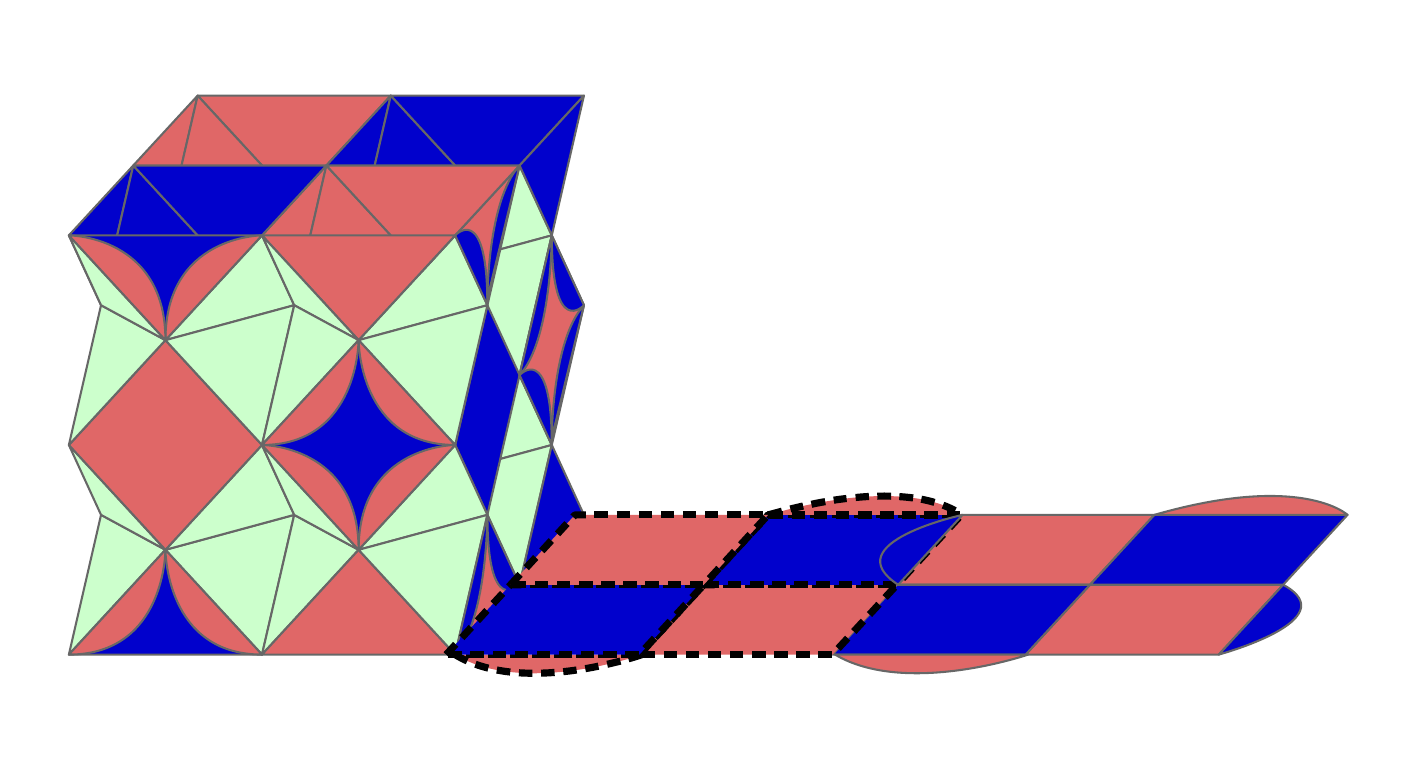}
    \caption{\label{fig:app3d2dls}
    $Z$-type lattice surgery on a 3D surface code and a 2D surface code (sublattices with continuous edges). We consider $\mathcal{SC}_{b}$ in the stack and we associate $X$ stabilizers with $b$-faces (dark grey) and $Z$ stabilizers with $r$-faces (medium grey) in the 2D surface code. Three ancilla qubits are required to do the lattice surgery merge. Four additional $Z$ stabilizers are present in the merged code ($r$-faces (medium grey) with dashed edges) and two $X$ stabilizers from the original codes are modified in the merged code ($b$-faces (dark grey) with dashed edges).
    }
\end{figure}

\end{document}